\def\poisson#1{\left\{#1\right\}}
\def\W{\mathcal{W}}
\def\dt{\Delta t}
\def\nc{m}
\def\manq{\Sigma}
\def\sgram{\Gamma}
\def\E{\mathbb{E}}
\def\R{\mathbb{R}}
\def\T{\mathbb{T}}
\def\mS{\mathcal{S}}
\def\Z{\mathbb{Z}}
\def\tr{{\rm Tr}}
\def\rme{{\rm e}}
\def\ph{\varphi}
\def\one{\mathbf{1}}
\def\dps{\displaystyle}
\newcommand{\op}[1]{{\rm #1}}
\newcommand{\abs}[1]{\left | #1\right |}
\newcommand{\set}[1]{\left\{#1\right\}}
\newcommand{\setbig}[1]{\big\{#1\big\}}
\newcommand{\pare}[1]{ \left(#1\right) }
\newcommand{\norm}[1]{\left\Vert#1\right\Vert}
\newcommand{\Frac}[2]{\frac{\dps #1}{ \dps #2}}
\newcommand{\bmat}{\begin{pmatrix}}
\newcommand{\emat}{\end{pmatrix}}
\newcommand{\EL}{\mathrm{(CL)}}
\newcommand{\NL}{\mathrm{(SCL)}}
\newtheorem{theorem}{Theorem}[section]
\newtheorem{lemma}[theorem]{Lemma}
\newtheorem{proposition}[theorem]{Proposition}
\newtheorem{algorithm}[theorem]{Algorithm}
\theoremstyle{definition}
\newtheorem{definition}[theorem]{Definition}
\theoremstyle{remark}
\newtheorem{remark}[theorem]{Remark}
\numberwithin{equation}{section}
\def\sqw{\hbox{\rlap{\leavevmode\raise.3ex\hbox{$\sqcap$}}$%
\sqcup$}}
\def\cqfd{\ifmmode\sqw\else{\ifhmode\unskip\fi\nobreak\hfil
\penalty50\hskip1em\null\nobreak\hfil\sqw
\parfillskip=0pt\finalhyphendemerits=0\endgraf}\fi}
\begin{document}

\title[Langevin dynamics with constraints]{Langevin dynamics with constraints and computation of free energy differences}

\author{Tony Leli\`evre}
\address{Universit\'e Paris Est, CERMICS and INRIA, MICMAC project-team 
    Ecole des Ponts ParisTech, 6 \& 8 Av. Pascal, 77455 Marne-la-Vall\'ee, France}
\email{lelievre@cermics.enpc.fr}

\author{Mathias Rousset}
\address{INRIA Lille - Nord Europe, Parc Scientifique de la Haute Borne, 40 avenue Halley, Bât.A Park Plaza, 59650 Villeneuve d'Ascq, France}
\email{mathias.rousset@inria.fr}

\author{Gabriel Stoltz}
\address{Universit\'e Paris Est, CERMICS and INRIA, MICMAC project-team 
    Ecole des Ponts ParisTech, 6 \& 8 Av. Pascal, 77455 Marne-la-Vall\'ee, France}
\email{stoltz@cermics.enpc.fr}

\thanks{
  We would like to thank the anonymous referee for a careful 
  reading of the manuscript and useful suggestions.
  This work is supported by the Agence Nationale de la Recherche, 
  under the grant ANR-09-BLAN-0216-01 (MEGAS)
}

\subjclass[2000]{Primary 82B80, 65C30; Secondary 82B35}

\date{June 23, 2010}

\keywords{Constrained stochastic differential equations, free energy computations, nonequilibrium dynamics}

\begin{abstract}
In this paper, we consider Langevin processes with mechanical
constraints. The latter are a fundamental tool in molecular dynamics
simulation for sampling purposes and for the computation of free energy
differences. The results of this paper can be divided into three
parts. 
(i) We propose a simple discretization of the constrained
Langevin process based on a splitting strategy. We show how 
to correct the scheme so that it samples {\em exactly} the canonical
measure restricted on a submanifold, using a Metropolis-Hastings correction
in the spirit of the Generalized Hybrid Monte Carlo (GHMC) algorithm.
Moreover, we obtain, in some limiting regime, a consistent discretization
of the overdamped Langevin (Brownian) dynamics on a
submanifold, also sampling exactly the correct canonical measure with constraints. 
(ii) For free energy computation using thermodynamic
integration, we rigorously prove that the longtime average of the
Lagrange multipliers of the constrained Langevin dynamics yields the
gradient of a rigid version of the free energy associated with the
constraints. A second order time discretization using the Lagrange multipliers
is proposed. (iii) The Jarzynski-Crooks fluctuation relation is proved
for Langevin processes with mechanical constraints evolving in time. An original
numerical discretization without time discretization error is proposed, and its overdamped
limit is studied. Numerical illustrations are provided for (ii) and (iii).
\end{abstract}

\maketitle

\section{Introduction and main results}

Free energy is a central concept in thermodynamics and in modern
works on biochemical and physical systems. Typical
examples studied by computer simulations include the solvation free
energies (which is the free energy difference between a molecule {\it
  in vacuo} and its counterpart surrounded by solvent molecules) and the
binding free energy of two molecules (which 
determines whether a new drug can have an efficient action on a given
protein). In many applications, it is actually the
free energy difference profile between the initial and the final state
which is a quantity of paramount importance. It is observed 
by practitioners that free energy barriers are a very important
element to describe transition kinetics from one state to the other.
For instance, the chemical
kinetics of reactions happening in solvent (such as in the cells of our bodies)
are limited by free energy barriers, and can take place only when the
free energy difference between the initial and the final state is
negative, or at least less than the typical thermal energy. It is
therefore very important to accurately compute free energy differences
in order to assess the likelihood of a certain physical event to happen.

Beside these physical motivations to compute free energy differences,
a more abstract motivation is to overcome sampling barriers encountered
when computing canonical averages 
(see the discussion in~\cite[Section~1.3.3]{LelRouStoBook}). 
Indeed,
it is often the case in practice that the trajectories
generated by the numerical methods at hand remain stuck for a long time 
in some region of the phase space, and hop only occasionally to
another region, where they also remain stuck -- a behavior known as 
metastability. 
Chemical and physical intuitions may guide the practitioners of the
field towards the choice of some slowly evolving degree of freedom,
called \emph{reaction coordinate} in the following,
responsible for the metastable behavior of the system. In this case,
free energy techniques can be used to accelerate the 
sampling. 
This viewpoint allows to consider applications which are not motivated
by physical or biological problems, 
such as curing sampling issues in Bayesian statistics~\cite{Chopin}.

\medskip

In this introductory section, we present the main results and give the outline
of the paper, highlighting the three main contributions of this work. We only briefly 
define the concepts we need in 
this general introduction, and refer the reader to the following sections (in particular
Section~\ref{sec:notation}) for further precisions on the mathematical objects at hand.

\subsection{General setting for molecular dynamics with constraints}

We consider mechanical systems with constraints. The configuration of 
a classical $N$-body system is denoted by
$(q,p) \in \R^{6N}$.
The results of the paper can be generalized {\it mutatis mutandis} to periodic boundary conditions ($q \in {\mathbb T}^{3N}$, where $\T  = \R / \Z$ denotes the one dimensional torus), or to systems with positions confined in a domain $q\in \mathcal D \subset \R^{3N}$.
The mass matrix of the system is assumed to be a constant strictly positive symmetric  matrix $M$. One could typically think of a diagonal matrix $M=\op{Diag}(m_1 \op{Id}_3,\cdots,m_N\op{Id}_3)\in \R^{3N\times 3N}$. The interaction potential is a smooth function $V:\R^{3N} \to \R$. 
The Hamiltonian of the system is assumed to be separable:
\[
H(q,p) = \frac{1}{2}p^T M^{-1}p + V(q) .
\]
In the present paper, the focus is on the canonical ensemble, which is the equilibrium probability distribution of microscopic states of a system at fixed temperature (fixed average energy). For systems without constraints, this ensemble is characterized by the probability distribution
\begin{equation}
  \label{eq:canonical}
  \dps  \mu(dq \, dp)  =  Z^{-1} \, {\rm e}^{-\beta H(q,p)} \, dq \, dp,
  \qquad
  Z = \int_{\mathbb{R}^{6N}} {\rm e}^{-\beta H},
\end{equation}
where $Z$ is the normalizing constant\footnote{The potential $V$ is assumed to be such that $Z < \infty$.} ensuring that $\mu$ is indeed a probability distribution,
and $\beta = (k_{\rm B}T)^{-1}$ is proportional to the inverse temperature.
One dynamics which admits the canonical measure~\eqref{eq:canonical} as an invariant measure is 
the Langevin dynamics (see for instance~\cite[Section~2.2.3]{LelRouStoBook} and references therein):
\begin{equation}
\label{eq:Langevin_std}
\left \{ \begin{aligned}
  d q_{t} & = M^{-1}p_t \, dt, &\\
  d p_{t} & = -\nabla V(q_{t}) \, dt -\gamma(q_t) M^{-1}p_t \, dt 
  + \sigma(q_t) \, d W_t, & 
\end{aligned} \right.
\end{equation}
where $W_t$ is a standard $3N$-dimensional Brownian motion,
and $\gamma(q), \sigma(q)$ are $3N \times 3N$ position dependent 
real matrices which are assumed to satisfy the fluctuation-dissipation identity
\begin{equation}
  \label{eq:FDR_constraints}
  \sigma(q)\, \sigma^T(q) = \frac{2}{\beta} \gamma(q).
\end{equation}
The Langevin dynamics can be seen as some modification of the Hamiltonian
dynamics with two added components: a damping term $-\gamma(q_t) M^{-1}p_t \, dt$ 
and a random forcing term $\sigma(q_t) \, d W_t$. The energy dissipation 
due to the damping is compensated by the random forcing in such a way that
the temperature of the system is~$T=(k_B \beta)^{-1}$ (with $k_{\rm B}$
Boltzmann's constant).

\medskip

We will consider positions subject to a $\nc$-dimensional mechanical constraint denoted by
\[
\xi(q) = \pare{ \xi_1(q) , \dots , \xi_\nc(q) }^T  = z \in  \R^{\nc}.
\]
As will become clear below, constrained systems appear in computational statistical physics in two
kinds of contexts (see {\it e.g.} \cite[Chapter~10]{Rap},
and~\cite{Dar07,LelRouStoBook} for applications to the computation of 
free energy differences, and \cite{Arn89,leimkuhler-reich-04} for 
mathematical textbooks dealing with constrained Hamiltonian dynamics):
\begin{enumerate}[(i)]
\item for free energy computations, where
$\xi$ is a given reaction coordinate parameterizing a transition
between ''states'' of interest ; 
\item when the system is subject to
  molecular constraints such as rigid covalent bonds, or rigid bond angles in
  molecular systems. 
\end{enumerate}
In the sequel, $\xi$ may be thought of at first reading as a reaction coordinate (case (i)). 
Section~\ref{TI-sec:MC} explains how to handle additional molecular constraints (case (ii)) 
within the same formalism.
In any case, the position of the system is
constrained onto the submanifold of co-dimension~$\nc$:
\begin{equation}
  \label{eq:sigmaz}
  \manq(z) = \Big \{ q \in \R^{3N} \ \Big | \ \xi(q) = z \Big \},
\end{equation}
and the associated phase space is the cotangent bundle denoted by
\begin{equation}
\label{eq:phasespace}
T^{\ast} \manq(z) = \Big \{ (q,p) \in \R^{6N}  \ \Big | 
\ q \in \manq(z), \, \, \nabla \xi (q) ^T M^{-1} p =0 \Big \}. 
\end{equation}
For a given $q \in \manq(z)$, the set of cotangent 
momenta is denoted by
\begin{equation}
\label{eq:T*qSz}
T^{*}_q\manq(z) = \Big \{ p\in \R^{3N} \ \Big | \ 
  \nabla \xi(q)^T \,  M^{-1}p =0\Big \}.
\end{equation}
The orthogonal projection on $T^{*}_q\manq(z)$ with respect to the scalar product induced by $M^{-1}$ is denoted
\begin{equation}
  \label{eq:proj}
  P_M(q) = \op{Id} -  \nabla \xi(q) \, G_{M}^{-1}(q) \nabla \xi(q)^T M^{-1},
\end{equation}
where $G_M(q)$ is the Gram matrix associated with the constraints
\begin{align}
  \label{eq:Gram}
  G_{M}(q) =  \nabla \xi(q)^T  M^{-1} \, \nabla \xi(q).
\end{align}
Throughout the paper, we assume that $G_M$ is invertible everywhere on $\manq(z)$
(for all~$z$). It is easily checked that $P_M$ satisfies the projector property $P_M(q)^2 =  P_M(q)$, and the orthogonality property $$M^{-1} P_M(q) = P_M(q)^T M^{-1}.$$

\subsection{The constrained Langevin dynamics}
For constrained systems, the associated canonical
distribution is defined by
\begin{equation}
  \label{eq:canonicalconst}
    \dps  \mu_{T^{*}\manq(z)}(dq \, dp)  =  Z_{z,0}^{-1} \, {\rm e}^{-\beta H(q,p)} \,
  \sigma_{T^{*} \manq(z)}(dq \, dp),
\end{equation}
where $\sigma_{T^{*} \manq(z)}(dq \, dp)$ is the phase space
Liouville measure of $T^{*} \manq(z)$, and $Z_{z,0}$ the normalizing constant ($z$ refers to the position constraint, and $0$ to the velocity or momentum constraint, see~\eqref{eq:mu_v} below). 
See Section~\ref{sec:measures} for precise definitions.

A dynamics admitting the constrained canonical measure~\eqref{eq:canonicalconst} 
as an invariant equilibrium measure 
is the following Langevin process (``CL'' stands for ``constrained Langevin''): 
For a given initial condition $(q_0,p_0) \in T^*\manq(z)$,
\begin{equation*}
  \fbox{$\EL \qquad
  \begin{cases}
     \dps  d q_{t} = M^{-1}p_t \, dt, &\\[6pt]
   \dps  d p_{t} = -\nabla V(q_{t}) \, dt -\gamma(q_t) M^{-1}p_t \, dt 
    + \sigma(q_t) \, d W_t + \nabla \xi(q_{t}) \, d \lambda_t, & \\[6pt]
    \dps \xi(q_{t}) = z, & (C_q)
  \end{cases}$}
\end{equation*}
where the $\R^m$-valued adapted\footnote{{\it i.e.} a random variable depending only on the past
  values of the Brownian motion.} process $t \mapsto \lambda_t$ is the Lagrange multiplier associated
with the (vectorial) constraint $(C_q)$, and $\gamma(q), \sigma(q)$ are again assumed to satisfy~\eqref{eq:FDR_constraints}.
Note that $(q_t,p_t) \in T^*\manq(z)$ for all $t \geq 0$. Then, averages of an observable $A:\R^{6N} \to \R$ with respect to the distribution~\eqref{eq:canonicalconst} can be obtained as longtime averages along any trajectory of the dynamics~$\EL$
(when $P_M \gamma P_M^T$ is symmetric positive on~$\manq(z)$):
\begin{equation}
  \label{eq:longtime_cv}
  \lim_{T \to + \infty} \frac{1}{T} \int_0^T A(q_t,p_t) \, dt = \int_{T^{*} \manq(z)} A \, d \mu_{T^{*}\manq(z)} \qquad \mathrm{a.s.}
\end{equation}
This is made precise in Section~\ref{sec:sampling}. Several recent studies
({\it e.g.}
\cite{hartmann-schuette-05-b,hartmann-schuette-05-a,EveCic06,ciccotti-lelievre-vanden-einjden-08})
have analyzed dynamics similar to~$\EL$ and some
appropriate discretization of the process in order to approximate
the left-hand side of~\eqref{eq:longtime_cv}.

\underline{The first contribution} of our work is to propose a simple discretization
of the dynamics~$\EL$ and to highlight its remarkable properties. 
The numerical scheme is based on a splitting strategy
between the Hamiltonian and the thermostat part, see Equations~\eqref{eq:flucdiss1}-\eqref{eq:Verletconst}-\eqref{eq:flucdiss2} below (in the spirit of the scheme proposed in~\cite{BouOwh08} in the unconstrained case). 
The Hamiltonian part is discretized using a Verlet scheme with position and momentum constraints (the so-called RATTLE scheme, see~\cite{LeiSke94}). 
We show that this discretization enjoys the following properties: (i) for some choice of the parameters, 
an Euler discretization of the {\em overdamped} Langevin dynamics (also called Brownian dynamics)
with a projection step associated with the constraints is obtained (see
Equation~\eqref{eq:overdampconst} and
Proposition~\ref{p:langtooverd}); (ii) it can be completed by a
Metropolis-Hastings correction to obtain a Generalized Hybrid Monte Carlo (GHMC)
method sampling {\emph{exactly}} ({\em i.e.} without any bias due to time-discretization) the constrained canonical distribution~\eqref{eq:canonicalconst} (see Algorithm~\ref{a:GHMCconst} below).
The so-obtained numerical scheme is close to the ones proposed 
in~\cite{hartmann-schuette-05-a,hartmann-schuette-05-b,Hartmann08}. 
See also~\cite{DKPR87,Mackenzie89} for
historic references on Hybrid Monte Carlo methods, and~\cite{Hor91} for GHMC. One output of this part is thus a new Metropolization procedure for overdamped Langevin dynamics to sample, without bias, measures with support a submanifold.

\subsection{Free energy computations}
The free energy $F:\R^{\nc}\to \R$ associated with the reaction coordinates $\xi:\R^{3N} \to \R^{\nc}$ is defined as $-\beta^{-1}$ times the log-density of the marginal probability distribution of the reaction coordinates $\xi$ under the canonical distribution~\eqref{eq:canonical}. Explicitly, it 
is defined through the following relation: for any test function $\phi: \R^{\nc}\to \R$,
\begin{equation}
  \label{eq:F}
   \int_{\R^{\nc}} \phi(z) \, \op{e}^{-\beta F(z)} \, dz = \int_{\R^{6N}} \phi(\xi(q)) \, \mu(dq \, dp).
\end{equation}
In words, $\op{e}^{-\beta F(z)} \, dz$ is the image of the measure $\mu$ by $\xi$, and $F$ can be seen as an ``effective potential energy'' associated to $\xi$.

Computing the free energy profile $z \mapsto F(z)$ (up to an additive constant independent of~$z$), or free energy differences between two states $F(z_2)-F(z_1)$ is a way to compare the relative probabilities of different ''states'' parameterized by $\xi$. This is a very important calculation for practical applications, see~\cite{chipot-pohorille-07,LelRouStoBook}.
A state should be understood here as the collection
of all possible microscopic configurations $(q,p)$, distributed according to the canonical
measure~\eqref{eq:canonical}, and 
satisfying the macroscopic constraint $\xi(q) = z$.
Since we only focus on computing free energy differences, $F$ is defined up to an additive constant (independent of $z$, denoted by~${\rm C}$ below, and whose value may vary from line to line) and can be rewritten as:
\begin{equation}
  \label{eq:Fbis}
  \begin{aligned}
    F(z) & = - \frac{1}{\beta} \ln  \int_{\manq(z) \times \mathbb{R}^{3N}} 
    \rme^{-\beta H(q,p) } \, \delta_{\xi(q)-z}(dq) \, dp \\
    & = - \frac{1}{\beta} \ln  \int_{\manq(z)} \rme^{-\beta V(q) } \, \delta_{\xi(q)-z}(dq) + {\rm C},
\end{aligned}
\end{equation}
where $\delta_{\xi(q)-z}$ denotes the conditional measure on $\manq(z)$ verifying the following identity of measures in~$\R^{3N}$: $dq = \delta_{\xi(q)-z}(dq) \, dz$ (see Section~\ref{sec:measures} 
for more precisions on this relation).

However, when using constrained simulations in phase space, the momentum variable of the dynamical system is also constrained, and a modified free energy (called ``rigid free energy'' in the sequel, see Remark~\eqref{rem:highosclang} below for a justification of the term "rigid")
is more naturally computed, see Section~\ref{sec:sampling}. The latter is defined as
\begin{align}\label{eq:Frgd}
F_{\rm rgd}^M(z) 
&=  -\frac{1}{\beta} \ln \int_{T^\ast \manq(z)} {\rm e}^{-\beta H(q,p)} 
\sigma_{T^\ast \manq(z)}(dq\, dp) . 
\end{align}
The superscript $M$ indicates that this free energy depends on the considered mass matrix, even 
though this is not clear at this stage (see~\eqref{eq:explicit_dependence_M} below).
The above two definitions of free energy are related through the identity:
\begin{equation}
  \label{eq:FFtilde}
  F(z) - F_{\rm rgd}^M(z) = -\frac{1}{\beta} \ln \int_{T^{*}\manq(z)}   (\det G_M) ^{-1/2} d\mu_{T^{*}\manq(z)} + {\rm C},
\end{equation}
where $\mu_{T^{*}\manq(z)}$ is the equilibrium distribution with constraints~\eqref{eq:canonicalconst}. The relation~\eqref{eq:FFtilde}, already proposed in~\cite{Dar07} (see also~\cite{otter-00,EVandenEijnden,SchlitterKlahn03,hartmann-schuette-07} for related formulas), is proved at the beginning of Section~\ref{sec:Langti}. 
For any value of the reaction coordinate, the difference $F(z) - F_{\rm rgd}^M(z)$ can then be easily computed with any method sampling the probability distribution $\mu_{T^{*}\manq(z)}$, such as~$\EL$. 

Several methods have been suggested in the literature to compute either $F$ or $F_{\rm rgd}^M$ from the Lagrange multipliers of a constrained process similar to~$\EL$. We refer for instance to~\cite{Dar07} (and references therein) for the Hamiltonian case, and to~\cite{ciccotti-lelievre-vanden-einjden-08} (and references therein) for the overdamped case. 
\underline{The second contribution} of this paper is twofold: (i) we rigorously prove that the longtime average of the Lagrange multipliers in~$\EL$ converges to the gradient of the rigid free energy~\eqref{eq:Frgd} (the so-called \emph{mean force}); and (ii) we then show that the latter mean-force can be computed with second order accuracy (\emph{i.e.} up to $\mathrm{O}(\dt^2)$ error terms, where $\dt$ is the time-step) using the Lagrange multipliers involved in the Hamiltonian part of the splitting scheme.   

More precisely, the first point (i) amounts to showing that
\begin{equation}\label{eq:avlagr}
\lim_{T \to + \infty}\frac{1}{T}\int_0^T d \lambda_t  = \nabla_{z} F_{\rm rgd}^M(z) \qquad {\rm a.s}.
\end{equation}
As compared to \cite{Dar07}, where a formal proof for the Hamiltonian case is proposed, we use an explicit calculation that does not require the use of the Lagrangian structure of the problem, or a change of coordinates. Once $\nabla_{z} F_{\rm rgd}^M(z)$ is obtained, $F_{\rm rgd}^M(z)$ can be computed (up to an additive constant) by integration. This procedure is
known as {\em thermodynamic integration}.
Note that using~\eqref{eq:avlagr} and thermodynamic integration, together with \eqref{eq:FFtilde}, allows to obtain $F(z)$ without computing second order derivatives of~$\xi$. This is a desirable property since computing such high derivatives may be cumbersome for some reaction coordinates used in practice. Straightforward computations of the mean force using analytical expressions (see for instance~\eqref{eq:avfbar}-\eqref{eq:constforceav}) usually involve such high order derivatives.

The second point (ii) is then based on a discretization of a variant of~\eqref{eq:avlagr}, obtained by subtracting the martingale part of the Lagrange multipliers. This amounts to averaging the two Lagrange multipliers involved in the RATTLE part of the scheme, see~\eqref{eq:freeestim}).

We also discuss how these techniques can be generalized to compute the free energy for systems with molecular constraints, see Section~\ref{TI-sec:MC}.

\subsection{Jarzynski-Crooks relations and nonequilibrium computations of the free energy}
The last part of this article is devoted to nonequilibrium methods for free energy computations, based on a Hamiltonian or Langevin dynamics with constraints subject to a predetermined time evolution. Such methods rely on a nonequilibrium fluctuation equality, the so-called Jarzynski-Crooks relation. See~\cite{Jarzynski97PRL} for a pioneering work, as well as~\cite{Crooks98,Crooks99} for an extension. 
They are termed ``nonequilibrium'' since 
the transition from one value of the reaction coordinate $\xi$
to another one is imposed \textit{a priori}, in a finite time$~T$, and with a given smooth deterministic schedule $t \in [0,T] \mapsto z(t) \in \mathbb{R}^\nc$. 
In particular, it may be arbitrarily fast. Therefore, even if
the system starts at equilibrium, it does not remain at equilibrium. 
The out-of-equilibrium Langevin process we consider to this end 
is given by the following equations of motion
(``SCL'' stands for ``switched constrained Langevin''):
\begin{equation*}
  \fbox{$\NL \qquad
  \left\{
  \begin{aligned}
    d q_t & = M^{-1}p_t \, dt,\\[6pt]
    d p_t & = -\nabla V (q_t)  \,dt -\gamma_P(q_t) M^{-1} p_t \, dt+ \sigma_P(q_t)  \,d W_t  + \nabla \xi(q_t) \, d \lambda_t, \\[6pt]
    \xi(q_t) & = z(t), \hspace{5cm} (C_q(t))
  \end{aligned} \right.$}
   \end{equation*}
where $ t \mapsto \lambda_t \in \R^{\nc}$ is an adapted process enforcing the constraints $(C_q(t))$ (the Lagrange multipliers). 
Initial conditions are sampled from the phase-space canonical distribution 
defined by the constraints $\xi(q) = z(0)$ and 
$v_\xi(q,p) = \nabla\xi(q)^T M^{-1} p = \dot{z}(0)$ (see~\eqref{eq:mu_v}).
We restrict ourselves to projected fluctuation-dissipation matrices of the specific form
\begin{equation}
  \label{eq:gamma_P}
(\sigma_P,\gamma_P) := ( P_M \, \sigma , P_M \, \gamma \, P_M^T),
\end{equation}
where $\gamma(q), \sigma(q) \in \mathbb{R}^{3N \times 3N}$ satisfy the 
fluctuation-dissipation identity~\eqref{eq:FDR_constraints}. Note that 
$\gamma_P,\sigma_P$ also verify~\eqref{eq:FDR_constraints}. 
Our analysis also applies to deterministic Hamiltonian dynamics upon choosing $\gamma = 0$.
The dynamics~$\NL$ 
is a natural extension of the constrained Langevin dynamics~$\EL$.
It is different from the dynamics proposed in~\cite{LHS10},
which is a Langevin dynamics associated with a modified Hamiltonian 
with projected momenta, driven by a forcing term along $\nabla \xi$ which acts 
directly on the position variable.
As explained below (see~\eqref{eq:laglangjarz} and the discussion following
this equation), 
the specific choice~\eqref{eq:gamma_P} (rather than considering unprojected matrices 
$\gamma(q), \sigma(q) \in \mathbb{R}^{3N \times 3N}$)
leads to a simpler analysis and more natural numerical schemes,
based again on a splitting procedure.

As explained in Section~\ref{sec:jarz_lang_cons}, 
it is possible to define the work associated with the constraints exerted on the system between time $0$ and $T$ as the displacement multiplied by the constraining force:
\begin{align}\label{eq:work_intro}
    \W_{0,t}\pare{ \set{q_s,p_s}_{0 \leq s \leq t} } & : =  \int_0^t \dot{z}^T(s) \,  d \lambda_s.
\end{align}
\underline{The third contribution} of the present paper is twofold: (i) We derive a new general Crooks-Jarzynski relation (see Theorem~\ref{th:crookslangcons} below) based on the nonequilibrium constrained dynamics~$\NL$ and the associated work defined in~\eqref{eq:work_intro}; and (ii) An original numerical scheme is proposed, which allows to compute free energy differences \emph{without time discretization error} (see Theorem~\ref{eq:jarz_disc} below). More precisely, concerning the first point, the main corollary is given by the following result. Consider the corrector
\begin{equation}\label{eq:corr}
  C(t,q) = \frac{1}{2\beta} \ln \Big( \det G_M(q) \Big) -
  \frac{1}{2} \dot{z}(t)^T G_M^{-1}(q) \dot{z}(t),
\end{equation}
where $\frac{1}{2\beta} \ln \det G_M(q)$ is the so-called Fixman term due to the geometry of the position constraints (see~\eqref{eq:FFtilde} and Remark~\ref{rem:highosclang}), and $\frac{1}{2} \dot{z}(t)^T G_M^{-1}(q) \dot{z}(t)$ is the kinetic energy term due to the  velocity of the switching.
Then, the free energy profile can be computed through the following fluctuation identity
(see~\eqref{eq:identity_to_approximate}):
\begin{equation}
\label{eq:FK_multi_bis_corr}
F(z(T)) - F(z(0)) = -\frac1\beta \ln  \left( \frac{\E\left(\rme^{-\beta \left[\W_{0,T}\pare{ \set{q_t,p_t}_{0 \leq t \leq T} }  +C(T,q_T)   \right]  }\right)}{\E\left(\rme^{-\beta C(0,q_0)   }\right)} \right),
\end{equation}
where the expectation is with respect to canonical (equilibrium) initial conditions
and for all realizations of the dynamics~$\NL$.

The numerical scheme mentioned in the second point (ii) above is based on a modification of the splitting scheme used to discretize the constrained Langevin dynamics~$\EL$. This modification allows to take into account the evolving constraints. Using the symplecticity of the modified RATTLE scheme, we are able to prove a discrete-in-time version of the Crooks relation, and of the associated Jarzynski free energy estimator~\eqref{eq:FK_multi_bis_corr}. Moreover, for some choice of the parameters, the latter scheme yields a Jarzynski-Crooks relation for an Euler discretization of the overdamped Langevin (Brownian) dynamics with a projection step associated with the evolving constraints, without time discretization error. This can be seen as an extension of the scheme formerly proposed in~\cite{lelievre-rousset-stoltz-07-a} (see Equation~\eqref{eq:Eulerconst_jarz} and Proposition~\ref{p:langtooverd_jarz}). We also check the consistency of the various free energy estimators we introduce.

\subsection{Organization of the paper}
We start with an introduction to the mathematical concepts required 
for mechanically constrained systems in
Section~\ref{sec:notation}. 
Section~\ref{sec:sampling} is devoted to the properties and the discretization of mechanically constrained Langevin processes defined by~$\EL$, and the problem of sampling the canonical distribution~\eqref{eq:canonicalconst}. 
Thermodynamic integration with constrained Langevin processes is presented 
in Section~\ref{sec:Langti}. 
Section~\ref{sec:lang_jarz} discusses nonequilibrium constrained Langevin processes~$\NL$ and the associated Jarzynski-Crooks fluctuation identity~\eqref{eq:FK_multi_bis_corr}. 
Finally, some technical lemmas are gathered in Section~\ref{sec:appendix}.

\section{Preliminaries}
\label{sec:notation}

After making precise our notation for matrices and matrix valued functions 
in Section~\ref{sec:matrix_notation}, we introduce some additional concepts required to describe
constrained systems in Section~\ref{sec:const_notation}, and define the phase space measures
with constraints in Section~\ref{sec:measures}.

\subsection{Notation}
\label{sec:matrix_notation}
Throughout the paper, the following notation is used:
\begin{itemize}
\item Vectors and vector fields are by convention of column type. When vectors are written as 
  a line, they should be understood as the corresponding
  column version.
  For instance, $(q,p) \in \mathbb{R}^{6N}$ should be understood as $(q^T,p^T)^T$, where 
  $q,p\in \mathbb{R}^{3N}$ are both column vectors.
\item Gradients in $\R^{3N}$ (or $\R^{6N}$) of $\nc$-dimensional vector fields are by convention $3N \times \nc $-matrices, for instance:
\[
\nabla \xi(q) = \Big( \nabla \xi_1(q),\dots,\nabla \xi_\nc(q) \Big )
\in \mathbb{R}^{3N \times \nc},
\]
where $\nabla \xi_i(q) \in \mathbb{R}^{3N}$ is a column vector
for any $i=1,\dots,\nc$. Gradients in the space of constraints parameters  $z\in\R^{\nc}$ or $\zeta \in \R^{2 \nc}$ are denoted with the associated subscripts, namely $\nabla_z$ and $\nabla_\zeta$.
\item Second order derivatives in $\R^{3N}$ of $\nc$-dimensional vector fields are characterized through the Hessian bilinear form:
\begin{equation}
  \label{eq:hessnot}
  \op{Hess}_q(\xi)\big(v_1,v_2\big) = \bmat v_1^T \nabla^2 \xi_1(q) v_2 \\ \vdots \\
  v_1^T \nabla^2 \xi_\nc(q) v_2 \emat \in \R^\nc,
\end{equation}
where $v_1,v_2 \in \R^{3N}$ are test vectors.
\item The canonical symplectic matrix is denoted by:
\begin{equation}
  \label{eq:J}
  J:=\bmat
  0 & {\rm Id}_{3N} \\
  - {\rm Id}_{3N} & 0 \emat \in \R^{6N \times 6N} .
\end{equation}
For any smooth test functions $\ph_1 \, : \, \R^{6N} \to \R^{n_1}$ and 
$\ph_2 \, : \, \R^{6N} \to \R^{n_2}$, the Poisson bracket is the $n_1\times n_2$ matrix
\begin{equation}
  \label{eq:poisson}
  \poisson{\ph_1 ,\ph_2 } = \big(\nabla \ph_1 \big)^T J \, 
\nabla \ph_2 \in \R^{n_1\times n_2} .
\end{equation}
\item For two matrices $A,B \in \mathbb{R}^{n \times n}$, $A : B = \tr(A^TB)$.
\end{itemize}

\subsection{Constraints}
\label{sec:const_notation}

Contrarily to what
is often done in the literature, we avoid global changes of variables and 
the use of generalized coordinates.
We observe that global changes of variables are not required for the 
proofs of the theoretical results we present, and they are 
definitely to be avoided in practical numerical computations whenever possible.

Two useful concepts to study constrained Hamiltonian systems (in particular, to
use the co-area formula in phase space, as well as the Poisson bracket
formulation of the Liouville equation) are the
effective velocity $v_\xi$ and the effective momentum
$p_\xi$ associated with the constrained degrees of freedom $\xi$:
\begin{equation}
  \label{eq:effv}
  v_\xi(q,p) = \nabla \xi(q)^T M^{-1} p \in \R^{\nc},
\end{equation}
and
\begin{equation}
  \label{eq:effp}
  p_\xi(q,p) = G_{M}^{-1}(q) \, v_\xi(q,p) 
  = G_{M}^{-1}(q) \nabla \xi(q)^T M^{-1} p \in \R^{\nc}.
\end{equation}
The expression of the effective velocity is obtained by deriving the 
constraint $\xi$ along an unconstrained trajectory of the Hamiltonian dynamics
\[
\begin{cases}
  \dps  \frac{d \tilde q_t}{dt} = M^{-1}\tilde p_t, \\[8pt]
  \dps  \frac{d \tilde p_t}{dt} = -\nabla V(\tilde q_t),
\end{cases}
\]
since
$
\frac{d \xi(\tilde q_t)}{dt} = v_\xi(\tilde q_t,\tilde p_t).
$
The term $G_{M}^{-1}(q)$ in the expression~\eqref{eq:effp}
of the effective momentum may be interpreted as the
effective mass of $\xi$.
This can be motivated by a decomposition of the kinetic
energy of the system into tangential and orthogonal parts, 
using the projector \eqref{eq:proj} for a given position 
$q \in \R^{3N}$:
\[
\begin{aligned}
E_{\rm kin}(p) 
& = \frac12 \, p^T M^{-1} p \\
& = \frac{1}{2} \, p^T P_M(q)^T M^{-1} P_M(q) p + \frac{1}{2} \, p^T
\big(\op{Id}-P_M(q)\big)^T M^{-1} \big(\op{Id}-P_M(q)\big) p. 
\end{aligned}
\]
The orthogonal part can be rewritten, for any $(q,p) \in \R^{6N}$, as:
\begin{align*}
  E_{\rm kin}^{\perp}(q,p) 
  &:= \frac{1}{2} \, p^T \Big(\op{Id}-P_M(q) \Big)^T M^{-1} \Big(\op{Id}-P_M(q)\Big) p \\ 
  &= \frac{1}{2} \, v_\xi(q,p)^T G_{M}^{-1}(q) \, 
  v_\xi(q,p) =\frac{1}{2} \, p_\xi(q,p)^T G_{M}(q) \, 
  p_\xi(q,p).
\end{align*}
The last equations allow to consider $G_{M}^{-1}$ as some effective mass.

The constraints on a mechanical system can also be reformulated in the more general form
\begin{equation}
  \label{eq:fullconst}
  \dps \Xi(q,p) = \zeta \in \R^{2 \nc},
\end{equation}
where either (i) the effective momentum is constrained, in which case $ \Xi = (\xi , p_{\xi})$ and $\zeta = (z,p_z)$; or (ii) the effective velocity is constrained, in which case $\Xi = (\xi , v_{\xi})$ and $\zeta = (z,v_z)$. The phase space associated with such constraints is denoted by
\begin{equation}
  \label{eq:phaseXi}
  \manq_{\Xi}(\zeta) = \Big\{ (q,p) \in \R^{6N} \ \Big|
  \ \Xi(q,p) = \zeta \Big\}.
\end{equation}
A position $q\in \Sigma(z)$ being given, the affine space of constrained momenta verifying~\eqref{eq:fullconst} is then denoted by
\begin{equation}
  \label{eq:phasevxi_q}
  \manq_{v_\xi(q,\cdot)}(v_z) = \Big\{ p \in \R^{3N} \ \Big|
  \ v_\xi(q,p) = v_z \Big\}
\end{equation}
in the effective velocity case, and by $\manq_{p_\xi(q,\cdot)}(p_z)$ in the effective momentum case. 
This notation is very important for nonequilibrium
methods where the constraints evolve in time according to a predefined schedule, see
Section~\ref{sec:lang_jarz}. 
Note that the phase space of mechanical constraints, defined by
\eqref{eq:phasespace}, is simply $T^*\manq(z) = \manq_{\xi,v_\xi}(z,0) = \manq_{\xi,p_\xi}(z,0)$.

We can now define the skew-symmetric Gram tensor of dimension $2\nc \times 2\nc$
  associated with the constraints:
  \begin{equation}
    \label{eq:sgram}
    \sgram(q,p) = \lbrace\Xi,\Xi \rbrace (q,p) = \nabla\Xi^T(q,p) \, J \, \nabla \Xi (q,p) \in \R^{2\nc \times 2\nc}.
  \end{equation}
The Gram matrix $\Gamma$ associated with the generalized constraints
\eqref{eq:fullconst} can be explicitly computed
by block. Indeed, for $\Xi = (\xi,p_\xi)^T$,
\begin{equation}
  \label{eq:Gram_pxi}
  \sgram = \bmat 0 & {\rm Id} \\ - {\rm Id} & \ \nabla p_\xi^T J\,
  \nabla p_\xi \ \\ \emat.
\end{equation}
Therefore, ${\rm det } ( \sgram) = 1$ in this case.
In the case $\Xi=(\xi,v_\xi)^T$, the Gram matrix reads
\begin{equation}
  \label{eq:Gram_vxi}
  \sgram = \bmat 0 & G_{M} \\ - G_{M} & \ \nabla v_\xi^T
  \, J \, \nabla v_\xi \ \emat,
\end{equation}
and ${\rm det } ( \sgram) = {\rm det } (G_{M})^2$. Note that in both cases ${\rm det } ( \sgram) > 0$.
The constrained symplectic (skew-symmetric) matrix is now defined by
\begin{equation}
  \label{eq:constJ}
  J_{\Xi}(q,p) = J - J \, \nabla \Xi(q,p) \, \sgram^{-1}(q,p) \, \nabla \Xi^T(q,p) \, J ,
\end{equation}
and the Poisson bracket associated with generalized 
constraints~\eqref{eq:fullconst} by:
\begin{align} 
  \label{eq:poissonconst}
  \poisson{\ph_1,\ph_2}_{\Xi} 
  &= \nabla \ph_1 ^T  J_{\Xi} \nabla \ph_2.
\end{align}
This Poisson bracket is often called the Dirac bracket in the literature
(in reference to the seminal work of Dirac~\cite{Dirac50,MarsdenRatiu03}).

It is easily checked that $\poisson{\cdot,\cdot}_{\Xi}$ 
verifies the characteristic properties of Poisson brackets, namely the skew-symmetry, 
Jacobi's identity, and Leibniz' rule. 
Therefore, the flow associated with the evolution equation
\begin{equation}
  \label{eq:Hconstgen}
  \frac{d }{dt} \bmat q_t \\ p_t \emat = J_{\Xi} \nabla H (q_t,p_t),
\end{equation}
defines a symplectic map. 
Recall that (see~\cite[Section~VII.1.2]{HairerLubichWanner06})
a map $\phi \, : \, \Sigma_\Xi(\zeta) \to \Sigma_\Xi(\zeta)$ is symplectic if
for any $(q,p) \in \Sigma_\Xi(\zeta)$ and $u,v \in T_{(q,p)}\Sigma_\Xi(\zeta)$,
\[
u^T \nabla\phi(q,p)^T J \nabla\phi(q,p) v = u^T J v.
\]
A consequence of the symplectic structure is the divergence formula~\eqref{eq:divsympl} relating the phase space measure
$\sigma_{\manq_{\Xi}(\zeta)}(dq \, dp)$ on $\manq_{\Xi}(\zeta)$ (defined below), and the Poisson bracket~\eqref{eq:poissonconst}. The reader is referred to Chapter~$8$ in~\cite{Arn89}, and Section~VII.$1$ in \cite{HairerLubichWanner06} for more material on constrained systems. 

It will be shown in Proposition~\ref{s:p:Lconstgen} below 
that the Poisson system~\eqref{eq:Hconstgen} is equivalent to~$\EL$
when $(\gamma,\sigma)=(0,0)$.

\subsection{Phase space measures}
\label{sec:measures}

\subsubsection{Definitions}

The phase space measure (also termed Liouville measure) on the phase space $T^*\manq(z)$ (or more generally on $\manq_{\Xi}(\zeta)$)
of constrained mechanical systems is denoted by $\sigma_{T^*\manq(z)}$ (or more generally $\sigma_{\manq_{\Xi}(\zeta)}$). The latter is induced by the symplectic, or skew-symmetric $2$-form on 
$\R^{6N}$ defined by the canonical skew-symmetric matrix $J$ in $\R^{6N}$. More precisely, it can be defined through the volume form $\abs{\op{det}\,\mathcal{G}(u(q,p))}^{1/2}$, where
\[
\mathcal{G}_{a,b}(u) = (u_a)^T J u_b, \qquad a,b =1,\ldots,6N-2m,
\]
and $(u_1(q,p),\ldots,u_{6N-2 m }(q,p))$ is a basis of tangential vectors of the 
submanifold~$T^\ast\manq(z)$ (or~$\manq_{\Xi}(\zeta)$) at a given point $(q,p)$.

Surface measures induced by scalar products associated with general symmetric definite
positive matrices will also be of interest. We denote
by $\sigma^{M}_{\manq(z)}(dq)$ the surface measure on $\manq(z)$ induced by the scalar
product $\langle q,\tilde{q}\rangle_M = q^T M \tilde{q}$ on $\R^{3N}$, and, 
for a given $q \in \manq(z)$, by $\sigma^{M^{-1}}_{\manq_{p_\xi(q,\cdot)}(p_z)}(dp)$ and 
$\sigma^{M^{-1}}_{\manq_{v_\xi(q,\cdot)}(v_z)}(dp)$ the
surface measures on the affine
spaces $\manq_{p_\xi(q,\cdot)}(p_z)$ and $\manq_{v_\xi(q,\cdot)}(v_z)$ respectively, 
induced by the scalar product
$\langle p,\tilde{p}\rangle_{M^{-1}} = p^T M^{-1} \tilde{p}$ on~$\R^{3N}$.
For more precise definitions of these measures, we refer 
to~\cite[Sections~3.2.1 and~3.3.2]{LelRouStoBook} 
and the references therein.

It is now possible to define a generalization of the 
canonical distribution~\eqref{eq:canonicalconst} as follows:
\begin{equation}
  \label{eq:mu_v}
  \begin{cases}
    \dps  \mu_{\manq_{\xi,v_\xi}(z,v_z)}(dq\, dp ) := \frac{ {\rm e}^{-\beta H(q,p)}}{Z_{z,v_z}} \sigma_{\manq_{\xi,v_\xi}(z,v_z)}(dq\, dp ), \\
    \dps Z_{z,v_z} := \int_{\manq_{\xi,v_\xi}(z,v_z)} {\rm e}^{-\beta H} \, d \sigma_{\manq_{\xi,v_\xi}(z,v_z)} .
  \end{cases}
\end{equation} 
The distribution~\eqref{eq:mu_v} is associated with the generalized constraints~$\Xi = \pare{ \xi , v_{\xi} }^T$, and is used in Section~\ref{sec:lang_jarz} for nonequilibrium methods. Note that $\mu_{\manq_{\xi,v_\xi}(z,0)}=\mu_{T^\ast \manq(z)}$ defined in~\eqref{eq:canonicalconst}.

\subsubsection{Co-area decompositions}
The co-area formula (see~\cite{ambrosio-fusco-pallara-00,evans-gariepy-92}) relates the phase space or surface measures, and the conditional measures. Conditional measures are defined in $\R^{6N}$ by the following conditioning formula: for any test function~$\phi: \R^{6N} \to \R$,
\begin{equation}
  \label{eq:conditional_phase_space_measure}
 \int_{\R^{6N}} \phi(q,p) \, dq \, dp = \int_{\R^{2 \nc}} \! \int_{\manq_{\Xi}(\zeta)} \phi(q,p) \, \delta_{\Xi(q,p)-\zeta}(dq \, dp) \, d \zeta .
\end{equation}
In the same way in $\R^{3N}$, conditional measures are defined, for any test function~$\phi: \R^{3N} \to \R$, by
\begin{equation}
  \label{eq:conditional_measure}
 \int_{\R^{3N}} \phi(q) \, dq = \int_{\R^{ \nc}} \! \int_{\manq(z)} \phi(q) \, \delta_{\xi(q)-z}(dq) \, d z.
\end{equation}
A more concise notation for the above equalities is 
$dq \, dp  = \delta_{\Xi(q,p)-\zeta}(dq \, dp) \, d\zeta$ and $dq = \delta_{\xi(q)-z}(dq) \, dz$.

\begin{proposition}[Co-area]
  \label{p:coareasympl}
  Let $\manq(z)$ be the submanifold~\eqref{eq:sigmaz} defined by the constraints $\xi(q) = z$, and assume that $G_M$ defined in~\eqref{eq:Gram} is non-degenerate in a neighborhood of $\manq(z)$. Then, in the sense of measures on $\R^{3N}$:
\begin{equation}
    \label{eq:coarea}
    \delta_{\xi(q)-z}(dq) = 
    \pare{{\rm det  }\, M}^{-1/2} \big| {\rm det  } \, G_M(q) \big|^{-1/2} \,
    \sigma^{M}_{\manq(z)}(dq ).
  \end{equation}
  Let $\manq_{\Xi}(\zeta)$ be the phase space defined by generalized
  constraints~\eqref{eq:fullconst}.
  Assume that~$\sgram$ defined in~\eqref{eq:sgram} is non-degenerate in a neighborhood of
  $\manq_{\Xi}(\zeta)$. Then, in the sense of measures on $\R^{6N}$:
  \begin{equation}
    \label{eq:coareasympl}
    \delta_{\Xi(q,p)-\zeta}(dq \, dp) = \big| {\rm det  } \, \sgram(q,p) \big|^{-1/2} \,
    \sigma_{\manq_{\Xi}(\zeta)}(dq \, dp).
  \end{equation}
\end{proposition}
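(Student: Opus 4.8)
The plan is to prove both identities \eqref{eq:coarea} and \eqref{eq:coareasympl} by the same mechanism: both are instances of the co-area formula for a submap $\Psi$ (here $\xi$, resp. $\Xi$) whose level sets carry a surface measure induced by some quadratic form, and the Jacobian factor appearing in the co-area formula is exactly the square root of the determinant of the Gram matrix of the differential of $\Psi$ computed with respect to that quadratic form. Concretely, the classical co-area formula (see~\cite{ambrosio-fusco-pallara-00,evans-gariepy-92}) states that for a Lipschitz map $\Psi:\R^n\to\R^k$ and the Euclidean structure, $dq = \pare{\det \nabla\Psi^T\nabla\Psi}^{-1/2}\sigma_{\{\Psi=z\}}^{\rm eucl}(dq)\,dz$, where $\sigma^{\rm eucl}$ is the Euclidean surface measure on the level set. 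The whole content of the proposition is bookkeeping: rewriting this with the $M$-weighted (resp. $J$-induced) surface measure in place of the Euclidean one, and recognizing the resulting Jacobian as $\det G_M$ (resp. $\det\sgram$).

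For \eqref{eq:coarea}, first I would apply the standard Euclidean co-area formula to $\xi:\R^{3N}\to\R^{\nc}$, giving $\delta_{\xi(q)-z}(dq) = \pare{\det\nabla\xi(q)^T\nabla\xi(q)}^{-1/2}\,\sigma^{\rm Id}_{\manq(z)}(dq)$. Then I would perform the linear change of metric from $\op{Id}$ to $M$: under the coordinate change $q = M^{-1/2}\tilde q$ the map $\xi$ becomes $\tilde\xi(\tilde q) = \xi(M^{-1/2}\tilde q)$, its Euclidean Gram matrix becomes $\nabla\tilde\xi^T\nabla\tilde\xi = \nabla\xi^T M^{-1}\nabla\xi = G_M$, Lebesgue measure picks up a factor $\pare{\det M}^{-1/2}$, and the Euclidean surface measure on the level set of $\tilde\xi$ is by definition $\sigma^M_{\manq(z)}$ transported back (this is precisely how $\sigma^M$ is defined in~\cite[Sections~3.2.1,~3.3.2]{LelRouStoBook}). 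Collecting the three factors yields exactly \eqref{eq:coarea}. The non-degeneracy of $G_M$ near $\manq(z)$ guarantees $\xi$ is a submersion there, so the level set is a genuine submanifold and the co-area formula applies in its smooth form.

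For \eqref{eq:coareasympl} the argument is identical in structure but with $\Psi = \Xi:\R^{6N}\to\R^{2\nc}$ and with the surface measure on $\manq_\Xi(\zeta)$ being the one induced not by a scalar product but by the skew-symmetric $2$-form $J$, as defined in Section~\ref{sec:measures} via $\abs{\det\mathcal{G}(u)}^{1/2}$ with $\mathcal{G}_{a,b}(u)=u_a^T J u_b$ on a tangent basis. The key algebraic fact is that for a codimension-$2\nc$ submanifold cut out by $\Xi$, the co-area Jacobian linking $d(q,p)$, this $J$-induced surface measure, and $d\zeta$ is $\abs{\det(\nabla\Xi^T J\nabla\Xi)}^{-1/2} = \abs{\det\sgram}^{-1/2}$ by \eqref{eq:sgram}; this is the standard "symplectic co-area" statement (cf.\ Chapter~8 of~\cite{Arn89} and Section~VII.1 of~\cite{HairerLubichWanner06}). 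I would establish it by choosing, at each point of $\manq_\Xi(\zeta)$, the $J$-adapted frame obtained by complementing a tangent basis $(u_1,\dots,u_{6N-2\nc})$ with the $2\nc$ vectors $J\nabla\Xi_1,\dots,J\nabla\Xi_{2\nc}$ (which span a complement to the tangent space precisely because $\sgram$ is invertible), computing the $\det$ of the full $6N\times6N$ Gram-type matrix built from $J$ in this frame, and noting it factors into the tangential block (giving $\sigma_{\manq_\Xi(\zeta)}$) times the transverse block $\nabla\Xi^T J^T J\nabla\Xi$-type contribution, while the co-area change of variables introduces the reciprocal of the transverse minor $\det(\nabla\Xi^T J\nabla\Xi)=\det\sgram$; the sign/orientation subtleties disappear because we take absolute values throughout and $\det\sgram>0$ as already observed after \eqref{eq:Gram_vxi}.

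The main obstacle is the linear-algebra lemma underlying \eqref{eq:coareasympl}: unlike the symmetric positive-definite case, $J$ is skew and not definite, so one cannot literally invoke the textbook co-area formula for Riemannian surface measures, and one must be careful that the $J$-induced "measure" $\abs{\det\mathcal{G}}^{1/2}$ on the tangent space is well-defined (independent of the chosen tangent basis, which holds because a change of basis multiplies $\mathcal{G}$ by $P^T(\cdot)P$ and hence $\det\mathcal{G}$ by $(\det P)^2$) and that it interacts correctly with the co-area decomposition of Lebesgue measure. I expect the cleanest route is to reduce \eqref{eq:coareasympl} to \eqref{eq:coarea}-type reasoning after a pointwise linear change of coordinates bringing $J$ to its canonical block form and $\nabla\Xi$ to a normalized shape, or alternatively to cite the precise statement from~\cite{LelRouStoBook}; in the write-up I would likely isolate this as a short lemma (possibly deferred to Section~\ref{sec:appendix}) and then deduce both displayed identities as corollaries.
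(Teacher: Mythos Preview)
The paper does not actually prove this proposition; it only cites Chapter~3 of~\cite{LelRouStoBook} for ``an elementary proof.'' Your plan is a correct way to supply one, and matches the spirit of what that reference does.

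For~\eqref{eq:coarea} your reduction to the Euclidean co-area formula via the linear change of variables $q=M^{-1/2}\tilde q$ is clean and complete.

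For~\eqref{eq:coareasympl} your strategy of using the frame $(u_1,\dots,u_{6N-2\nc},J\nabla\Xi_1,\dots,J\nabla\Xi_{2\nc})$ is exactly right, and the block-diagonalization you describe is the key point: since $u_a^T J(J\nabla\Xi_j)=-u_a^T\nabla\Xi_j=0$, the matrix of $J$ in this frame is block-diagonal with tangential block $\mathcal{G}$ and transverse block $(J\nabla\Xi_i)^T J(J\nabla\Xi_j)=-\nabla\Xi_i^T J\nabla\Xi_j=-\sgram_{ij}$. One small slip: you wrote the transverse block as ``$\nabla\Xi^T J^T J\nabla\Xi$-type,'' but $J^TJ=\mathrm{Id}$, so that expression is the \emph{Euclidean} Gram matrix $\nabla\Xi^T\nabla\Xi$, not $\sgram$. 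The correct transverse block is $-\sgram$, and then the bookkeeping runs as follows: if $B$ is the change-of-basis matrix, $(\det B)^2=\det(B^TJB)=\det\mathcal{G}\cdot\det(-\sgram)$; Lebesgue measure is $|\det B|\,ds\,dt$; the map $\Xi$ in $t$-coordinates has Jacobian $\sgram$ so $d\zeta=|\det\sgram|\,dt$; and $\sigma_{\manq_\Xi(\zeta)}=|\det\mathcal{G}|^{1/2}\,ds$. Combining these three relations gives $dq\,dp=|\det\sgram|^{-1/2}\sigma_{\manq_\Xi(\zeta)}\,d\zeta$ directly, with no need to invoke the Euclidean co-area formula as an intermediate step. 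Fix that one expression and your argument is fine.
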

We refer for example to Chapter~$3$ in~\cite{LelRouStoBook} for an elementary proof. 
An equivalent of \eqref{eq:coarea}-\eqref{eq:conditional_measure} for momenta reads,
for constrained effective momenta:
\begin{equation}
  \label{eq:coarea_effective_momentum}
  dp = \delta_{p_\xi(q,p)-p_z}(dp) \, dp_z = 
  \op{det}(M)^{1/2} \, \big| {\rm det  } \, G_M(q) \big|^{1/2} \,
  \sigma^{M^{-1}}_{\manq_{p_\xi(q,\cdot)}(p_z)}(dp) \, dp_z,
\end{equation}
and for constrained effective velocities:
\[
dp = \delta_{v_\xi(q,p)-v_z}(dp) \, dv_z = 
\op{det}(M)^{1/2} \, \big| {\rm det  } \, G_M(q) \big|^{-1/2}
\, \sigma^{M^{-1}}_{\manq_{v_\xi(q,\cdot)}(v_z)}(dp)\, dv_z.
\]
Using the co-area formulas~\eqref{eq:coarea}-\eqref{eq:coareasympl}, and the expressions of symplectic Gram matrices~\eqref{eq:Gram_pxi}-\eqref{eq:Gram_vxi}, we obtain the following expressions of the phase space measures:
\begin{enumerate}[(i)]
\item The phase space measure on $\manq_{\xi,p_\xi}(z,p_z)$ can be identified 
  with the conditional measure defined in~\eqref{eq:conditional_phase_space_measure}:
\begin{equation}
  \label{eq:deltasigmap}
  \sigma_{\manq_{\xi,p_\xi}(z,p_z)}(dq \, dp) =
  \delta_{(\xi(q)-z,p_\xi(q,p)-p_z)}(dq \, dp),
\end{equation}
while the phase space measure on $\manq_{\xi,v_\xi}(z,v_z)$ is
related to the corresponding conditional measure as
\begin{equation}
  \label{eq:deltasigmav}
  \sigma_{\manq_{\xi,v_\xi}(z,v_z)}(dq \, dp) = {\rm det } (G_{M})
  \, \delta_{(\xi(q)-z,v_\xi(q,p)-v_z)}(dq \, dp).
\end{equation}
\item The phase space measures are given by the product of surface
  measures:
  \begin{equation}
    \label{eq:product_decomposition_pz}
  \sigma_{\manq_{\xi,p_\xi}(z,p_z)}(dq \, dp) =
  \sigma^{M^{-1}}_{\manq_{p_\xi(q,\cdot)}(p_z)} (dp) \, \sigma^{M}_{\manq(z)}(dq),
  \end{equation}
and
\begin{equation}
  \label{eq:surfacemeas}
  \sigma_{\manq_{\xi,v_\xi}(z,v_z)}(dq \, dp) =
  \sigma^{M^{-1}}_{\manq_{v_\xi(q,\cdot)}(v_z)}(dp) \, \sigma^{M}_{\manq(z)}(dq).
 \end{equation}
\end{enumerate}
Equations~\eqref{eq:product_decomposition_pz}-\eqref{eq:surfacemeas} are a consequence of the fact that $$\delta_{(\xi(q)-z,p_\xi(q,p)-p_z)}(dq \, dp)= \delta_{p_\xi(q,p)-p_z}(dp) \delta_{\xi(q)-z}(dq)$$ (and a similar relation for $v_\xi$).

\subsubsection{Divergence formulas}
We end this section with an important formula, which is used to
show the invariance of the canonical measure in the proof of Proposition~\ref{p:revcons}.

\begin{proposition}[Divergence theorem in phase space]
  \label{p:divsympl} Consider the Poisson bracket $\poisson{\cdot,\cdot}_{\Xi}$ defined
  by~\eqref{eq:poissonconst}, and an open neighborhood $\mathcal{O}$ of
  $\manq_{\Xi}(\zeta) \subset \R^{6N}$ where $\Gamma$ is invertible. 
  Then for any smooth test functions
  $\ph_1,\ph_2 \, : \,  \R^{6N} \to \R$ with compact support in~$\mathcal{O}$,
  \begin{equation}
    \label{eq:divsympl}
    \int_{\manq_{\Xi}(\zeta)} \poisson{\ph_1,\ph_2}_{\Xi} \, d\sigma_{\manq_{\Xi}(\zeta)} = 0.
  \end{equation}
\end{proposition}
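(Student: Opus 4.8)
The plan is to recognise the integrand $\poisson{\varphi_1,\varphi_2}_{\Xi} = \nabla\varphi_1^T J_\Xi \nabla\varphi_2$ as the derivative of $\varphi_1$ along the Hamiltonian vector field $X_{\varphi_2} := J_\Xi \nabla\varphi_2$ generated by $\varphi_2$ for the Dirac bracket, and then to exploit that the flow of $X_{\varphi_2}$ preserves the measure $\sigma_{\manq_{\Xi}(\zeta)}$. First I would check that $X_{\varphi_2}$ is tangent to $\manq_{\Xi}(\zeta)$: from \eqref{eq:sgram} and \eqref{eq:constJ}, $\nabla\Xi^T J_\Xi = \nabla\Xi^T J - (\nabla\Xi^T J\nabla\Xi)\,\sgram^{-1}\nabla\Xi^T J = \nabla\Xi^T J - \sgram\,\sgram^{-1}\nabla\Xi^T J = 0$, so $\nabla\Xi^T X_{\varphi_2} = 0$, i.e. $X_{\varphi_2}(q,p) \in T_{(q,p)}\manq_{\Xi}(\zeta) = \ker \nabla\Xi(q,p)^T$ for every $(q,p)$ where $\sgram$ is invertible (in particular on all of $\manq_{\Xi}(\zeta)$, which is a smooth embedded submanifold there since invertibility of $\sgram$ forces $\nabla\Xi$ to have full rank). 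Since $\varphi_2$ has compact support in $\mathcal{O}$, the field $X_{\varphi_2}$ is smooth and compactly supported on $\manq_{\Xi}(\zeta)$ (extend it by $0$ outside $\mathcal{O}$, which is consistent because $\nabla\varphi_2$ vanishes there), hence generates a complete flow $\Phi_t : \manq_{\Xi}(\zeta) \to \manq_{\Xi}(\zeta)$ with $\Phi_0 = \op{Id}$.

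The crucial step is that $\Phi_t$ preserves $\sigma_{\manq_{\Xi}(\zeta)}$. The flow solves $\frac{d}{dt}\Phi_t = (J_\Xi\nabla\varphi_2)\circ\Phi_t$, and the argument recalled before Proposition~\ref{p:divsympl} — namely that $\poisson{\cdot,\cdot}_{\Xi}$ satisfies skew-symmetry, Jacobi's identity and Leibniz' rule, so that the flow of $\frac{d}{dt}(q_t,p_t) = J_\Xi\nabla G(q_t,p_t)$ is symplectic — applies verbatim with $G$ replaced by $\varphi_2$. Thus $\nabla\Phi_t^T J\nabla\Phi_t = J$ on tangent vectors of $\manq_{\Xi}(\zeta)$, i.e. $\Phi_t$ pulls back the $2$-form $(u,v)\mapsto u^T J v$ on $\manq_{\Xi}(\zeta)$ to itself. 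Because $\sigma_{\manq_{\Xi}(\zeta)}$ is, by its very definition through $|\op{det}\,\mathcal{G}(u(q,p))|^{1/2}$ with $\mathcal{G}_{ab} = u_a^T J u_b$, exactly the Liouville volume form of this (nondegenerate, since $\sgram$ is invertible) $2$-form — it is the Pfaffian of $\mathcal{G}$ on any tangent basis — a map preserving the $2$-form preserves its top exterior power, and hence $\sigma_{\manq_{\Xi}(\zeta)}$. Therefore $\Phi_t^{*}\sigma_{\manq_{\Xi}(\zeta)} = \sigma_{\manq_{\Xi}(\zeta)}$ for all $t$.

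It then remains to differentiate. Since $\Phi_t$ is $\sigma_{\manq_{\Xi}(\zeta)}$-measure preserving and $\varphi_1$ restricted to $\manq_{\Xi}(\zeta)$ is smooth with compact support (hence $\sigma_{\manq_{\Xi}(\zeta)}$-integrable), the change of variables formula gives $\int_{\manq_{\Xi}(\zeta)} \varphi_1\circ\Phi_t \, d\sigma_{\manq_{\Xi}(\zeta)} = \int_{\manq_{\Xi}(\zeta)} \varphi_1 \, d\sigma_{\manq_{\Xi}(\zeta)}$ for every $t$. Differentiating this identity at $t = 0$ — legitimate since all integrands are smooth in $t$ and supported in a fixed compact set — and using $\frac{d}{dt}\big|_{t=0}(\varphi_1\circ\Phi_t) = \nabla\varphi_1^T X_{\varphi_2} = \nabla\varphi_1^T J_\Xi\nabla\varphi_2 = \poisson{\varphi_1,\varphi_2}_{\Xi}$ (by \eqref{eq:poissonconst}), we obtain $\int_{\manq_{\Xi}(\zeta)} \poisson{\varphi_1,\varphi_2}_{\Xi} \, d\sigma_{\manq_{\Xi}(\zeta)} = 0$, which is the claim.

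The point deserving the most care, and the main obstacle, is the measure-preservation step: one must identify the surface measure $\sigma_{\manq_{\Xi}(\zeta)}$ defined via $|\op{det}\,\mathcal{G}|^{1/2}$ with the Liouville volume attached to the symplectic form induced by $J$ on the submanifold, so that the symplecticity of the Dirac flow (already essentially recorded in the text) translates into $\sigma_{\manq_{\Xi}(\zeta)}$-invariance. A more computational alternative, avoiding exterior algebra, would be to write $\poisson{\varphi_1,\varphi_2}_{\Xi} = \op{div}_{\sigma}(\varphi_1 X_{\varphi_2}) - \varphi_1\,\op{div}_{\sigma}(X_{\varphi_2})$ in a local chart and show $\op{div}_{\sigma}(X_{\varphi_2}) = 0$ there by a direct differentiation involving $\sgram^{-1}$ and the co-area density of Proposition~\ref{p:coareasympl}; this is more laborious, and I would prefer the intrinsic argument above. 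One should also not forget to justify completeness of the flow and the differentiation under the integral sign, both of which are immediate consequences of the compact support of $\nabla\varphi_2$ and of $\varphi_1$.
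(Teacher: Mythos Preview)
Your argument is correct. The paper does not spell out a proof of this proposition; it only points to two routes --- Darboux's theorem with internal coordinates, or a direct computation via the co-area formula (and notes that Lemma~\ref{l:Hnull} can be used for the latter). Your approach is an intrinsic version of the first route: rather than passing to Darboux charts, you use directly that the Dirac--Hamiltonian flow of $\varphi_2$ is symplectic on $\manq_\Xi(\zeta)$ (already recorded in the text) and that the phase-space measure $\sigma_{\manq_\Xi(\zeta)}$, being defined through $|\det\mathcal{G}|^{1/2}$ with $\mathcal{G}_{ab}=u_a^T J u_b$, coincides with the Liouville volume of the induced symplectic form. This is clean and avoids coordinates entirely.

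The co-area alternative the paper has in mind proceeds quite differently: one rewrites $d\sigma_{\manq_\Xi(\zeta)} = |\det\Gamma|^{1/2}\,\delta_{\Xi-\zeta}(dq\,dp)$ via Proposition~\ref{p:coareasympl}, expands $\poisson{\varphi_1,\varphi_2}_\Xi = \poisson{\varphi_1,\varphi_2} - \poisson{\varphi_1,\Xi}\Gamma^{-1}\poisson{\Xi,\varphi_2}$, and integrates by parts in $\R^{6N}$. The identity of Lemma~\ref{l:Hnull}, namely $\sum_b \poisson{|\det\Gamma|^{1/2}(\Gamma^{-1})_{a,b},\Xi_b}=0$, is what makes the extra terms cancel. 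That route is purely computational and never invokes the symplectic/Pfaffian identification you rely on; conversely, your argument bypasses Lemma~\ref{l:Hnull} altogether. Both are legitimate; yours is shorter once the identification of $\sigma_{\manq_\Xi(\zeta)}$ with the Liouville volume is accepted, while the co-area route is more self-contained if one prefers to avoid exterior algebra.
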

The divergence formula~\eqref{eq:divsympl} can be proved using Darboux's 
theorem and internal coordinates, or directly using the co-area formula 
(see Section~3.3 in \cite{LelRouStoBook}).

We will also need the classical divergence formula on affine spaces (see for instance Section~3.3 in~\cite{LelRouStoBook}): for a fixed $q \in \R^{3N}$,
for any compactly supported smooth vector field $\phi(q,p) \in \R^{3N}$,
\begin{equation}\label{eq:divpaff}
\int_{\manq_{v_\xi(q,\cdot)}(v_z) }  
\op{div}_p\Big( P_M(q)  \phi(q,p) \Big) \sigma^{M^{-1}}_{\manq_{v_\xi(q,\cdot)}(v_z)}(dp)  = 0.
\end{equation}


\section{Constrained Langevin processes and sampling}
\label{sec:sampling}

We first give some properties of the constrained Langevin equation~$\EL$
in Section~\ref{sec:sampling_gen}, then propose some numerical schemes to discretize it
in Section~\ref{sec:consnum}, and finally consider the overdamped limit
in Section~\ref{sec:ovd_limit_constraints}.

\subsection{Properties of the dynamics}\label{sec:sampling_gen}

We consider the dynamics~$\EL$:
\[
\begin{cases}
  \dps  d q_{t} = M^{-1}p_t \, dt, &\\[6pt]
  \dps  d p_{t} = -\nabla V(q_{t}) \, dt -\gamma(q_t) M^{-1}p_t \, dt 
  + \sigma(q_t) \, d W_t + \nabla \xi(q_{t}) \, d \lambda_t, & \\[6pt]
  \dps \xi(q_{t}) = z. & (C_q)
\end{cases}
\]
By differentiating with respect to time the constraint $\xi(q_t)=z$, 
the Lagrange multipliers can be computed explicitly (see for instance Section~3.3 in~\cite{LelRouStoBook}):
\begin{align}
  \dps d\lambda_t &=  -G_{M}^{-1}(q_t) \Big [ 
    \op{Hess}_{q_t}(\xi)\big(M^{-1}p_t,M^{-1}p_t\big) \, dt \nonumber\\ 
    & \quad + \nabla \xi(q_t)^T M^{-1} \Big( -\nabla V(q_t) \, dt
    - \gamma(q_{t}) M^{-1}p_{t} \,dt +\sigma(q_{t}) \,d W_t\Big) \Big ] \nonumber\\
&=\, f_{\rm rgd}^M(q_t,p_t) \, dt + G_{M}^{-1}(q_t)\nabla \xi(q_t)^T M^{-1} \pare{\gamma(q_{t}) M^{-1}p_{t} \,dt - \sigma(q_{t}) \,d W_t },  \label{eq:first_expression_lagrange}
\end{align}
where the constraining force $f_{\rm rgd}^M \in \R^\nc $ is defined as:
\begin{equation}\label{eq:constforce}
 f_{\rm rgd}^M(q,p) = G_M^{-1}(q) \nabla \xi (q)^T M^{-1}\nabla V (q) - G_M^{-1}(q) \op{Hess}_q(\xi)(M^{-1} p, M^{-1} p).
\end{equation}
Thus, using the fact that $P_M(q)^T M^{-1} p = M^{-1}p$ when $p \in T^*_q\manq(z)$, the dynamics~$\EL$ can be recast in a more explicit form as
\begin{equation}
  \label{eq:Langevinconst2}
  \left\{
\begin{aligned}
    d q_{t} &= \dps M^{-1} p_{t}  \, dt, \\[6pt]
 d p_{t} &= -\nabla V(q_{t}) \, dt + \nabla \xi(q_t) f_{\rm rgd}^M(q_t,p_t) \, dt -\gamma_P(q_t) M^{-1} p_{t}\, dt \\
&\quad + \sigma_P(q_t) \, d W_t,
  \end{aligned}
\right.
\end{equation}
where we introduced the notation $(\sigma_P,\gamma_P) := ( P_M \, \sigma , P_M \, \gamma \, P_M^T)$.
The constraint therefore has two effects: (i) the matrices $\gamma,\sigma$ in
the dissipation and fluctuation terms are replaced by their projected counterparts
$\gamma_P,\sigma_P$, and (ii) an orthogonal constraining force $\nabla \xi f_{\rm rgd}^M$
is introduced.

The generator of this stochastic Langevin dynamics is the operator 
$\mathcal{L}_{\Xi}$ which appears in the Kolmogorov evolution equation:
for $(q_t,p_t)$ satisfying~$\EL$ or~\eqref{eq:Langevinconst2}, and 
for any smooth test function $\ph$,
\[
\frac{d}{dt} \E \pare{\ph(q_t,p_t) } = \E \big ( \mathcal{L}_{\Xi}(\ph)(q_t,p_t) \big ).
\]
The expression of $\mathcal{L}_{\Xi}$ can be obtained using It\^o calculus, 
as made precise in the following proposition.
\begin{proposition}
  \label{s:p:Lconstgen}
  Consider either the effective momentum~\eqref{eq:effp} or the effective velocity~\eqref{eq:effv}, denoted with the general constraints $\Xi = (\xi,v_\xi)$ or $\Xi = (\xi,p_\xi)$ (see~\eqref{eq:fullconst}). The solution of the constrained dynamics~$\EL$ (or equivalently~\eqref{eq:Langevinconst2} with an initial condition $(q_0,p_0) \in \Sigma_\Xi(z,0)$) belongs to $\Sigma_\Xi(z,0) =  T^*\manq(z)$, and the generator of this Markov process reads (whatever the value of~$z$)
  \begin{equation}
    \label{eq:conslangevinop}
    \mathcal{L}_{\Xi} = \poisson{\cdot,H}_{\Xi}+ \mathcal{L}_{\Xi}^{\rm thm} ,
  \end{equation}
  where the fluctuation-dissipation part is
  \begin{align*} 
    \mathcal{L}_{\Xi}^{\rm thm} &= \frac{1}{2} \op{div}_{p}\Big(
       \sigma_P \, \sigma_P^{T} \nabla_{p} \cdot \Big) - p^T M^{-1} 
    \gamma_P \nabla_{p},
  \end{align*}
with $(\sigma_P,\gamma_P)$ defined in~\eqref{eq:gamma_P}.
Using the fluctuation-dissipation relation~\eqref{eq:FDR_constraints}, the generator $\mathcal{L}_{\Xi}^{\rm thm}$ can be rewritten more compactly as
\begin{align} \label{eq:conslangevinop_thm}
\mathcal{L}_{\Xi}^{\rm thm} = \frac{1}{\beta} \, \op{e}^{\beta H}
\op{div}_{p}\Big( \rme^{-\beta H} \, \gamma_P \,
  \nabla_{p} \cdot\Big).
\end{align}
\end{proposition}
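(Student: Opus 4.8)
The plan is to read the generator off directly from the constraint-free form \eqref{eq:Langevinconst2} of the dynamics via It\^o's formula, and then to recognise the resulting drift as the Dirac bracket $\poisson{\cdot,H}_\Xi$. First I would record that the process stays on $T^*\manq(z)$: differentiating the position constraint $\xi(q_t)=z$ gives $v_\xi(q_t,p_t)=\nabla\xi(q_t)^TM^{-1}p_t=0$, so $(q_t,p_t)\in\manq_{\xi,v_\xi}(z,0)=\manq_{\xi,p_\xi}(z,0)=T^*\manq(z)$, the two constraints $v_\xi=0$ and $p_\xi=0$ being equivalent since $G_M$ is invertible. As a self-consistency check one can also verify $\frac{d}{dt}v_\xi(q_t,p_t)=0$ along \eqref{eq:Langevinconst2} itself, which uses the identity $\nabla\xi^TM^{-1}P_M=0$ coming from \eqref{eq:proj}--\eqref{eq:Gram} (so that the fluctuation--dissipation terms $\gamma_P,\sigma_P$ drop out) and $\nabla\xi^TM^{-1}\nabla\xi\,f_{\rm rgd}^M=G_Mf_{\rm rgd}^M=\nabla\xi^TM^{-1}\nabla V-\op{Hess}_q(\xi)(M^{-1}p,M^{-1}p)$ with $f_{\rm rgd}^M$ from \eqref{eq:constforce} (so that the $\nabla V$ contribution cancels).

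Next, since $\sigma_P$ depends on $q$ only, It\^o's formula applied to $\ph(q_t,p_t)$ along \eqref{eq:Langevinconst2} produces, with no first-order It\^o correction, the second-order term $\tfrac12\op{div}_p(\sigma_P\sigma_P^T\nabla_p\ph)$ together with the drift $M^{-1}p\cdot\nabla_q\ph+(-\nabla V+\nabla\xi\,f_{\rm rgd}^M)\cdot\nabla_p\ph-p^TM^{-1}\gamma_P\nabla_p\ph$. This already isolates $\mathcal L_\Xi^{\rm thm}$ in the stated (non-symmetrised) form, and it remains only to show that the deterministic part $M^{-1}p\cdot\nabla_q\ph+(-\nabla V+\nabla\xi\,f_{\rm rgd}^M)\cdot\nabla_p\ph$ coincides with $\poisson{\ph,H}_\Xi$ on $T^*\manq(z)$.

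For this identification I would expand $\poisson{\ph,H}_\Xi=\poisson{\ph,H}-\nabla\ph^TJ\nabla\Xi\,\sgram^{-1}\poisson{\Xi,H}$ from \eqref{eq:constJ}--\eqref{eq:poissonconst}; the first term is the standard Poisson bracket $M^{-1}p\cdot\nabla_q\ph-\nabla V\cdot\nabla_p\ph$ since $J\nabla H=(M^{-1}p,-\nabla V)^T$. The vector $\poisson{\Xi,H}$ is the time derivative of $\Xi$ along the free Hamiltonian flow $\dot q=M^{-1}p,\ \dot p=-\nabla V$: its $\xi$-component is $v_\xi$, which vanishes on $T^*\manq(z)$, and a short computation gives $\poisson{v_\xi,H}=\op{Hess}_q(\xi)(M^{-1}p,M^{-1}p)-\nabla\xi^TM^{-1}\nabla V=-G_Mf_{\rm rgd}^M$, hence also $\poisson{p_\xi,H}=-f_{\rm rgd}^M$ on $T^*\manq(z)$ via $p_\xi=G_M^{-1}v_\xi$. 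Inverting $\sgram$ by blocks using \eqref{eq:Gram_pxi} (resp.\ \eqref{eq:Gram_vxi}) and again $v_\xi=0$, one obtains in both cases $\sgram^{-1}\poisson{\Xi,H}=(f_{\rm rgd}^M,0)^T$ on $T^*\manq(z)$; and since $\xi$ depends on $q$ only, $J\nabla\xi=(0,-\nabla_q\xi)^T$, so the correction term is $-\nabla\ph^TJ\nabla\xi\,f_{\rm rgd}^M=\nabla_p\ph^T\nabla\xi\,f_{\rm rgd}^M$, exactly the extra drift. This gives \eqref{eq:conslangevinop}. The compact form \eqref{eq:conslangevinop_thm} then follows from $\sigma_P\sigma_P^T=P_M\sigma\sigma^TP_M^T=\tfrac2\beta\gamma_P$ (by \eqref{eq:gamma_P}--\eqref{eq:FDR_constraints}) and the elementary expansion $\tfrac1\beta\rme^{\beta H}\op{div}_p(\rme^{-\beta H}\gamma_P\nabla_p\ph)=\tfrac1\beta\op{div}_p(\gamma_P\nabla_p\ph)-(M^{-1}p)^T\gamma_P\nabla_p\ph$.

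I expect the main obstacle to be precisely the block inversion of $\sgram$ in the two cases $\Xi=(\xi,p_\xi)$ and $\Xi=(\xi,v_\xi)$ together with the careful restriction to $T^*\manq(z)$: one needs all the a priori messy terms (those proportional to $v_\xi$, and $\nabla\xi^TM^{-1}\sigma_P$) to vanish there, so that the Dirac-bracket correction collapses to the single constraining force $\nabla\xi\,f_{\rm rgd}^M$ of \eqref{eq:constforce}. Once this collapse is checked, everything else is routine It\^o calculus and bookkeeping.
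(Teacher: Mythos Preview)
Your proposal is correct and follows essentially the same route as the paper: split off the thermostat part via standard It\^o calculus, then identify the remaining drift with $\poisson{\ph,H}_\Xi$ by computing $\sgram^{-1}\poisson{\Xi,H}=(f_{\rm rgd}^M,0)^T$ on $T^*\manq(z)$ (the paper's \eqref{eq:dbcheckcons}) and using that $\poisson{\ph,\Xi}(a,0)^T=-a^T\nabla\xi^T\nabla_p\ph$. Your additional remarks (the self-consistency check that the dynamics stays on $T^*\manq(z)$, the explicit treatment of both the $v_\xi$ and $p_\xi$ cases, and the derivation of the compact form \eqref{eq:conslangevinop_thm}) are correct and in fact slightly more detailed than the paper's own proof.
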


\begin{proof} 
We perform the computation in two steps: (i) We compute the generator of the Hamiltonian part of the constrained Langevin dynamics, which is~$\EL$ in the case $(\sigma, \gamma) = (0,0)$; (ii) we compute the generator of the ''thermostat'' part of~$\EL$, which is an Ornstein-Uhlenbeck process on momentum variable (corresponding to the second equation in~$\EL$ with $V=0$). 

Let us first consider (i), with $\Xi = (\xi,v_\xi)$ (the case $\Xi = (\xi,p_\xi)$ being similar). Note that
\begin{equation}
  \label{eq:poisson_Xi_H}
  \poisson{\Xi,H}(q,p) = \bmat v_\xi(q,p) \\  
\op{Hess}_q(\xi)(M^{-1} p, M^{-1} p) - \nabla \xi (q) ^T M^{-1}\nabla V (q)  \emat,
\end{equation}
where the Hessian operator $\op{Hess}$ is defined in~\eqref{eq:hessnot}. 
Now, \eqref{eq:Gram_vxi} implies that
\begin{equation}\label{eq:sgraminv}
\sgram^{-1} = \bmat
 G_M^{-1} \, \nabla v_\xi^T J \, \nabla v_\xi \, G_M^{-1} \ & -G_M^{-1} \\
 G_M^{-1} & 0
 \emat.
\end{equation}
Besides, $v_\xi(q_t,p_t)=0$ along a trajectory, since $(q_t,p_t) \in T^*\manq(z)$.
Therefore,
\begin{equation}
  \label{eq:dbcheckcons}
 \forall (q,p) \in T^*\manq(z), \qquad 
 \sgram^{-1} \poisson{\Xi,H}(q,p) = \bmat   f_{\rm rgd}^M(q,p) \\ 0 \emat ,
\end{equation}
where the notation $f_{\rm rgd}^M$ is introduced in~\eqref{eq:constforce}. Consider a test function $\ph:\R^{6N} \to \R$, and remark that 
\begin{equation}
  \label{eq:a}
  \poisson{\ph, \Xi} \bmat a \\ 0 \emat = - a^T \nabla \xi ^T \nabla_p \ph,
\end{equation}
so that, for any $a \in \R^\nc$, 
\begin{equation*}
    \poisson{\ph, \Xi}\sgram^{-1} \poisson{\Xi,H} = - (f_{\rm rgd}^M)^T \nabla \xi^T \nabla_p \ph.
\end{equation*}
Finally, for all $(q,p) \in T^*\manq(z)$,
\begin{equation}
  \label{eq:dbcheckcons2}
\begin{aligned}
    \poisson{\ph, H}_\Xi(q,p) & = -\nabla V(q)^T \nabla_p \ph(q,p) + f_{\rm rgd}^M(q,p) ^T \nabla \xi(q)^T \nabla_p \ph(q,p) \\
& \quad + p^T M^{-1} \nabla_q \ph(q,p) .
\end{aligned}
\end{equation}
The operator~\eqref{eq:dbcheckcons2} is the generator 
of the Hamiltonian part in~\eqref{eq:Langevinconst2}.

We turn to (ii). The diffusive part arises from the fluctuation term $\sigma_P(q_t)\, d W_t$
in~\eqref{eq:Langevinconst2}, and its expression 
\[
\frac{1}{2} \op{div}_{p}\Big(
P_M\, \sigma \sigma^{T}P_M^T \nabla_{p} \, \cdot \, \Big)
\]
is obtained directly from the standard It\^o calculus. Similarly, the dissipation operator 
is 
\[
-\Big( \gamma_P M^{-1} p \Big)^T \nabla_{p} = -p^T M^{-1} P_M \gamma P_M^{T} \nabla_{p}.
\]
The addition of these two contributions gives the expression of 
$\mathcal{L}_{\Xi}^{\rm thm}$. 
\end{proof}

With the expression~\eqref{eq:conslangevinop} of the generator at hand, it is easily checked that the process~$\EL$ satisfies the following equilibrium properties:
\begin{proposition}
  \label{p:revcons}
  When the fluctuation-dissipation relation~\eqref{eq:FDR_constraints} holds,
  the constrained Langevin dynamics~$\EL$ on $T^* \manq(z)$ 
  admits the Boltzmann-Gibbs distribution~\eqref{eq:canonicalconst} as a stationary measure,
  and is reversible up to momentum reversal with respect to~\eqref{eq:canonicalconst}: 
  If ${\rm Law}(q_0,p_0)  =  \mu_{T^* \manq(z)} $, then, for any $T > 0$,
  \[
    {\rm Law}(q_t,p_t; 0 \leq t \leq T) = {\rm Law}(q_{T-t}, -p_{T-t}; 0 \leq t \leq T).
    \]
    Moreover, if $P_M(q) \gamma P_M(q)^T$ is everywhere strictly positive in the sense of symmetric matrices on $T_q^* \manq(z)$, 
    then the process~$\EL$ is ergodic:  
    for any smooth test function~$\ph$,
    \[
    \lim_{T \to + \infty} \frac{1}{T}\int_0^{T} \ph(q_{t},p_{t}) \, dt  
    = \int_{T^* \manq(z)} \ph \, d\mu_{T^* \manq(z)}\qquad \mathrm{a.s.}
    \]
\end{proposition}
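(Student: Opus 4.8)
The plan is to deduce all three assertions from the decomposition $\mathcal{L}_{\Xi} = \mathcal{L}_{\Xi}^{\rm ham} + \mathcal{L}_{\Xi}^{\rm thm}$ of Proposition~\ref{s:p:Lconstgen}, where $\mathcal{L}_{\Xi}^{\rm ham} := \poisson{\cdot,H}_{\Xi}$, by showing that on $L^2(\mu_{T^*\manq(z)})$ the part $\mathcal{L}_{\Xi}^{\rm ham}$ is antisymmetric while $\mathcal{L}_{\Xi}^{\rm thm}$ is symmetric and non-positive. For $\mathcal{L}_{\Xi}^{\rm ham}$, I would combine Leibniz' rule for the Dirac bracket with the identity $\poisson{\rme^{-\beta H},H}_{\Xi} = -\beta\,\rme^{-\beta H}\,\nabla H^T J_\Xi \nabla H = 0$ (skew-symmetry of $J_\Xi$) to write, for smooth test functions $\ph,\psi$ compactly supported in a neighborhood of $T^*\manq(z)$ where $\sgram$ is invertible,
\[
\Big(\psi\,\poisson{\ph,H}_{\Xi} + \ph\,\poisson{\psi,H}_{\Xi}\Big)\rme^{-\beta H} = \poisson{\psi\ph\,\rme^{-\beta H},H}_{\Xi},
\]
and then integrate against $\sigma_{T^*\manq(z)}$: by the phase-space divergence theorem (Proposition~\ref{p:divsympl}) the right-hand side integrates to zero, which is exactly the antisymmetry $\int \psi\,\mathcal{L}_{\Xi}^{\rm ham}\ph\,d\mu_{T^*\manq(z)} = -\int \ph\,\mathcal{L}_{\Xi}^{\rm ham}\psi\,d\mu_{T^*\manq(z)}$. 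A routine cut-off reduces the general case to compactly supported test functions.

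For $\mathcal{L}_{\Xi}^{\rm thm}$ I would start from the compact form~\eqref{eq:conslangevinop_thm}, disintegrate the Liouville measure as $\sigma_{T^*\manq(z)}(dq\,dp) = \sigma^{M^{-1}}_{\manq_{v_\xi(q,\cdot)}(0)}(dp)\,\sigma^{M}_{\manq(z)}(dq)$ via~\eqref{eq:surfacemeas} (recall $T^*\manq(z) = \manq_{\xi,v_\xi}(z,0)$), and integrate by parts in the momentum fibre. Writing $\rme^{-\beta H}\gamma_P\nabla_p\ph = P_M(q)\big(\rme^{-\beta H}\gamma(q)P_M(q)^T\nabla_p\ph\big)$ and using the affine divergence formula~\eqref{eq:divpaff} to discard the boundary term, one obtains
\[
\int_{T^*\manq(z)} \psi\,\mathcal{L}_{\Xi}^{\rm thm}\ph \, d\mu_{T^*\manq(z)} = -\frac{1}{\beta Z_{z,0}}\int_{T^*\manq(z)} \big(\nabla_p\psi\big)^T \gamma_P \,\nabla_p\ph \;\rme^{-\beta H}\,d\sigma_{T^*\manq(z)},
\]
which is symmetric in $(\ph,\psi)$ and non-positive, since~\eqref{eq:FDR_constraints} forces $\gamma = \tfrac\beta2\sigma\sigma^T$, hence $\gamma_P = P_M\gamma P_M^T$, to be symmetric positive semi-definite.

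The first two assertions then follow immediately. Since $\mathcal{L}_{\Xi}^{\rm ham}1 = \mathcal{L}_{\Xi}^{\rm thm}1 = 0$, the adjoint structure yields $\int \mathcal{L}_\Xi\ph\,d\mu_{T^*\manq(z)} = 0$ for all $\ph$, i.e. invariance of $\mu_{T^*\manq(z)}$. For reversibility up to momentum reversal, let $\mathcal{R}\ph(q,p) := \ph(q,-p)$; the measure $\mu_{T^*\manq(z)}$ is $\mathcal{R}$-invariant because $H$ is even in $p$ and both $T^*\manq(z)$ and its Liouville measure are invariant under $(q,p)\mapsto(q,-p)$. From the explicit expression~\eqref{eq:dbcheckcons2} together with the parity in $p$ of $f_{\rm rgd}^M$ one checks $\mathcal{R}\mathcal{L}_{\Xi}^{\rm ham}\mathcal{R} = -\mathcal{L}_{\Xi}^{\rm ham}$, while $\mathcal{R}\mathcal{L}_{\Xi}^{\rm thm}\mathcal{R} = \mathcal{L}_{\Xi}^{\rm thm}$ because the coefficients of $\mathcal{L}_{\Xi}^{\rm thm}$ depend on $q$ only and the operator is even in $p$. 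Combining with the adjoint computation gives $\mathcal{R}\,\mathcal{L}_\Xi^{*}\,\mathcal{R} = \mathcal{L}_\Xi$ on $L^2(\mu_{T^*\manq(z)})$; together with stationarity this is equivalent to the pathwise identity ${\rm Law}(q_t,p_t;\,0\le t\le T) = {\rm Law}(q_{T-t},-p_{T-t};\,0\le t\le T)$ when the process starts from $\mu_{T^*\manq(z)}$.

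For ergodicity under the strict positivity assumption I would follow the classical route for Langevin dynamics: (a) $\mathcal{L}_\Xi$ is hypoelliptic on $T^*\manq(z)$, since the noise now spans the tangential momentum directions (where $\gamma_P$ is invertible by hypothesis) and bracketing with the transport part $p^T M^{-1}\nabla_q$ of the Hamiltonian field reaches the tangential position directions, so Hörmander's condition holds, giving smooth positive transition densities and irreducibility on $T^*\manq(z)$; (b) consequently the invariant probability measure is unique, hence equals $\mu_{T^*\manq(z)}$ by the first part; (c) the ergodic theorem for Markov processes with a unique invariant probability measure gives the claimed almost-sure convergence of time averages. I expect the one genuinely delicate point to be the non-compact case: non-explosion and tightness require a Lyapunov function, for which $H$ is natural — the computation $\mathcal{L}_\Xi H = \mathcal{L}_{\Xi}^{\rm thm}H = \tfrac1\beta\,\op{tr}\!\big(\gamma_P M^{-1}\big) - p^T M^{-1}\gamma_P M^{-1}p$ shows that the tangential kinetic energy is dissipated — but controlling the configurational variable needs a confinement hypothesis on $V$ (automatic under periodic boundary conditions or when $\manq(z)$ is compact). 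For these points one appeals to standard results on the ergodicity of Langevin processes rather than to the constrained structure itself.
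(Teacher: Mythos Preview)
Your proposal is correct and follows essentially the same route as the paper: both split the generator into the Hamiltonian and thermostat parts, use the phase-space divergence formula (Proposition~\ref{p:divsympl}) to handle $\poisson{\cdot,H}_\Xi$, use the affine divergence formula~\eqref{eq:divpaff} together with the factorization~\eqref{eq:surfacemeas} to obtain the symmetric Dirichlet form for $\mathcal{L}_\Xi^{\rm thm}$, and invoke H\"ormander's condition for ergodicity. The only cosmetic difference is that you first establish antisymmetry/symmetry in $L^2(\mu_{T^*\manq(z)})$ and then conjugate by~$\mathcal{R}$, whereas the paper verifies the detailed-balance identity~\eqref{eq:balance_proof3.2} directly for each piece; these are equivalent formulations of the same computation.
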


\begin{proof}
The stationarity and reversibility properties follow from the following detailed balance condition up to momentum reversal (see for instance Section~2.2 in~\cite{LelRouStoBook}): 
for any test functions~$\ph_1$,~$\ph_2$,
\begin{equation}
\label{eq:balance_proof3.2}
\int_{T^* \manq(z)} \ph_1 \, \mathcal{L}_{\Xi} (\ph_2) \, d\mu_{T^*\manq(z)} =
\int_{T^* \manq(z)} (\ph_2\circ S) \, \mathcal{L}_{\Xi}
(\ph_1\circ S) \, d\mu_{T^* \manq(z)},
\end{equation}
where $S \, : (q,p) \mapsto (q,-p)$ 
is the momentum flip. In view of the expression~\eqref{eq:conslangevinop} of the generator,
proving~\eqref{eq:balance_proof3.2} amounts to proving this property 
for the operators $\poisson{.,H}_{\Xi}$ and $\mathcal{L}_{\Xi}^{\rm thm}$.

For the Hamiltonian part~$\poisson{.,H}_{\Xi}$, the expression~\eqref{eq:dbcheckcons2} yields
\[
\poisson{\ph\circ S,H}_{\Xi}(q,p) = - \poisson{\ph,H}_{\Xi} (q,-p) = - \poisson{\ph,H}_{\Xi} (
S(q,p)),
\]
which states the time symmetry under momentum reversal of the Hamiltonian part of the 
equations of motion~$\EL$. 
On the other hand,
\[
{\rm e}^{-\beta H} \poisson{\cdot,H}_{\Xi}  = -\frac{1}{\beta}
\poisson{\cdot, {\rm e}^{-\beta H}}_{\Xi},
\]
so that
\begin{align*}
  {\rm e}^{-\beta H}\, (\ph_2\circ S) \,  \poisson{\ph_1\circ S,H}_{\Xi} &= -\pare{ {\rm e}^{-\beta H}\, \ph_2\,  \poisson{\ph_1,H}_{\Xi} }\circ S \\
&=\pare{{\rm e}^{-\beta H}\, \ph_1\,  \poisson{\ph_2,H}_{\Xi}  + \poisson{\ph_2\ph_1
, \frac{{\rm e}^{-\beta H}}{\beta}}_{\Xi} }\circ S ,
\end{align*}
and the divergence formula~\eqref{eq:divsympl} 
yields the balance condition~\eqref{eq:balance_proof3.2} for the Hamiltonian part, 
in view of the invariance of the distribution $\sigma_{T^*\manq(z)}$ 
under the momentum flip $S$.

For the thermostat part, it is easily checked, that
\[
\mathcal{L}_{\Xi}^{\rm thm}(\ph \circ S) = 
\mathcal{L}_{\Xi}^{\rm thm}(\ph )\circ S 
\]
for any smooth test function $\ph$, so that the detailed balance condition up to momentum reversal~\eqref{eq:balance_proof3.2} follows from the following more general detailed balance condition, in the case $v_z=0$ ($\mu_{\manq_{\xi,v_\xi}(z,v_z)}$ being defined in~\eqref{eq:mu_v}):
\begin{equation}
  \label{eq:balance_momenta_rev}
  \int_{\manq_{\xi,v_\xi}(z,v_z)} \!\!\! \ph_1 \, \mathcal{L}_{\Xi}^{\rm thm} (\ph_2) \, d\mu_{\manq_{\xi,v_\xi}(z,v_z)} =
\int_{\manq_{\xi,v_\xi}(z,v_z)} \!\!\! \ph_2 \, \mathcal{L}_{\Xi}^{\rm thm}
(\ph_1) \, d\mu_{\manq_{\xi,v_\xi}(z,v_z)}.
\end{equation}
It is interesting to prove~\eqref{eq:balance_momenta_rev} for a general $v_z\in  \R$ since it will be used in the proof of Theorem~\ref{th:crookslangcons} below. Consider the divergence formula~\eqref{eq:divpaff} in the affine space for the variable~$p$ (the position $q$
being fixed), with $$\phi = \gamma P_M^T \nabla_p(\ph_2) \,  {\rm e}^{-\beta H}  \, \ph_1 .$$
After integration in $q$, using the formula~\eqref{eq:conslangevinop_thm} for $\mathcal{L}_{\Xi}^{\rm thm}$ and~\eqref{eq:surfacemeas}, an expression symmetric in $(\ph_1,\ph_2)$ is obtained:
\[
\int_{\manq_{\xi,v_\xi}(z,v_z)} \ph_1 \, \mathcal{L}_{\Xi}^{\rm thm} (\ph_2) \, d\mu_{\manq_{\xi,v_\xi}(z,v_z)} = - \int_{\manq_{\xi,v_\xi}(z,v_z)}\nabla_p^T\ph_1  P_M \gamma  P_M^T \nabla_p\ph_2 \, d\mu_{\manq_{\xi,v_\xi}(z,v_z)} ,
\]
hence the detailed balance condition~\eqref{eq:balance_momenta_rev}.

Ergodicity is a consequence of the hypo-ellipticity of the operator
$\mathcal{L}_{\Xi}$ on $T^\ast \manq(z)$ (H\"ormander's criterion 
is satisfied, see~\cite{Hor67}), which is itself a consequence of the
fact that $P_M(q)\gamma  P_M(q)^T$ is strictly positive on each $T^\ast _q\manq(z)$. 
The proof can be carried out using local coordinates and the results from~\cite{Kli87}.
\end{proof}

\begin{remark}[Infinite stiffness limit]
  \label{rem:highosclang}
  We have considered in this section the Langevin dynamics~\eqref{eq:Langevinconst2} with constraints rigidly imposed by a projection onto the submanifold $T^*\Sigma(z)$. This dynamics samples the canonical distribution~\eqref{eq:canonicalconst}
  $\mu_{T^{*}\manq(z)}(dq \, dp)  =  Z_{z,0}^{-1} \, {\rm e}^{-\beta H(q,p)} \,
  \sigma_{T^{*} \manq(z)}(dq \, dp)$, with constraints on both positions and momenta. The marginal on positions of this distribution is, in view of~\eqref{eq:product_decomposition_pz}, 
  proportional to 
  \[
  \mathrm{e}^{-\beta V(q)}\sigma^{M}_{\manq(z)}(dq). 
  \]
  This is what we call in the following rigidly imposed constraints. The canonical distribution~\eqref{eq:canonicalconst}  with rigid constraints is naturally associated to the rigid free energy~\eqref{eq:Frgd} (and this is what justifies the qualification "rigid"), since, we recall,
  $$ F_{\rm rgd}^M(z) =  -\frac{1}{\beta} \ln \int_{T^\ast \manq(z)} {\rm e}^{-\beta H(q,p)} \sigma_{T^\ast \manq(z)}(dq\, dp) .$$
  
  Another way to impose some constraints on a system is to add a penalization term. In our context, this could be done by changing the potential energy $V$ to
  \[
  V_\varepsilon(q) = V(q) + \frac{1}{\varepsilon} \abs{\xi(q) - z}^2.
  \]
  It is easy to check that, in the limit $\varepsilon \to 0$ (infinite stiffness limit), the canonical measure associated to this potential is the canonical distribution~\eqref{eq:canonical} with positions conditioned by $\xi(q)=z$. This distribution is proportional to $\rme^{-\beta H(q,p) } \, \delta_{\xi(q)-z}(dq) \, dp$ and its marginal on positions is proportional to 
  \[
  \mathrm{e}^{-\beta V(q)} \delta_{\xi(q)-z}(dq). 
  \]
  This is what we call in the following softly imposed constraints. 
  The canonical distribution with soft constraints is naturally associated with the standard free energy, since, we recall,
  $$  F(z) = - \frac{1}{\beta} \ln  \int_{\manq(z) \times \mathbb{R}^{3N}}
  \rme^{-\beta H(q,p) } \, \delta_{\xi(q)-z}(dq) \, dp.$$
  
  Note that, in view of~\eqref{eq:coarea}, the marginal on positions for softly imposed constraints can be written in terms of rigidly imposed constraints through a modification of the potential:
  $${\rm e}^{-\beta V(q)} \delta_{\xi(q)-z}(dq) = {\rm e}^{-\beta (V + V_{{\rm fix}})(q)}\sigma^{M}_{\manq(z)}(dq)$$
  where
  \begin{equation}
    \label{eq:Vfix}
    V_{{\rm fix}}(q) = \frac{1}{2\beta} \ln \Big( \det G_{M}(q) \Big),
  \end{equation}
  is sometimes called the Fixman corrector (see~\cite{Fix78}). Thus, if $(q_t,p_t)$ satisfies~\eqref{eq:Langevinconst2} with the modified potential $V + V_{{\rm fix}}$ then $q_t$ samples (in the longtime limit) the probability measure proportional to ${\rm e}^{-\beta V(q)} \delta_{\xi(q)-z}(dq)$, and we thus refer to this dynamics as the softly constrained Langevin dynamics.
  
  These concepts will be used in Section~\ref{TI-sec:MC} to describe the computation of free energy differences for systems with molecular constraints.

  Finally, let us mention that the infinite stiffness limit $\varepsilon \to 0$ of the Langevin dynamics~\eqref{eq:Langevin_std} with the potential $V_\varepsilon$ is not (except for very specific forms of constraints) the softly constrained Langevin dynamics, as one would expect. We refer to [32] for example, where it is shown that adiabatic effective potentials (derived from the conservation of the ratio of energy over frequency of fast modes) are required to describe the limiting dynamics. However, a formal argument based on ``over-damping'' the fast modes indeed leads to the softly constrained Langevin dynamics. 
  \cqfd\end{remark}

\subsection{Numerical implementation}
\label{sec:consnum}

We consider in this section a numerical scheme based on a splitting of the 
Langevin dynamics~$\EL$ into a Hamiltonian part 
(Section~\ref{sec:numHamiltonconst})
and a fluctuation-dissipation part acting only on the momentum 
(Section~\ref{sec:numflucdissconst}). Such a splitting is standard for unconstrained
systems, but other splitting strategies for the Langevin equation can be considered as well 
(see~\cite{MilsteinTretyakov03,MilsteinTretyakov04}).

For simplicity, we restrict ourselves to constant matrices $\gamma$ 
and $\sigma$. Generalizations to position dependent matrices are straightforward. 

The Hamiltonian part of the Langevin dynamics~$\EL$ (namely~$\EL$ with $(\sigma,\gamma) = (0,0)$) is discretized using a velocity-Verlet scheme with constraints, which yields~\eqref{eq:Verletconst} below. The fluctuation-dissipation part on momentum variable in~$\EL$ is the following Ornstein-Uhlenbeck process (for a fixed given $q\in \manq(z)$):
\begin{equation}
  \label{eq:OU}
  \begin{cases}
     d p_t = - \gamma M^{-1} p_t \, dt + \sigma \, d W_t + \nabla \xi (q) \, d\lambda^{\rm OU}_t, &  \\
 \nabla \xi(q) M^{-1} p_t = 0,  & (C_p)
  \end{cases}
\end{equation}
which can be rewritten as (see~\eqref{eq:Langevinconst2})
\[
d p_t = - \gamma_P(q) M^{-1} p_t \, dt + \sigma_P(q) \, d W_t.
\]
This equation can be explicitly integrated on $[0,t]$ to obtain:
\begin{equation}
  \label{eq:OUexpl}
  p_t = \rme^{- t \, \gamma_P(q)M^{-1}} p_0+ \int_0^t \rme^{-(t-s) \gamma_P(q)M^{-1}} \, \sigma_P(q) \, d W_s .
\end{equation}
However, the matrix exponential $\rme^{- t \, \gamma_P(q)M^{-1}}$ may be difficult to compute
in practice (except for certain choices of $\gamma$ and $M$, see the discussion at the end
of Section~\ref{sec:numflucdissconst}). 
Instead of performing an exact integration, 
\eqref{eq:OU} can be discretized using a midpoint Euler scheme, 
which yields~\eqref{eq:flucdiss1} and~\eqref{eq:flucdiss2} below. 

The numerical scheme we investigate, termed 
midpoint Euler-Verlet-midpoint Euler splitting, is therefore the following:
\begin{align} 
 & \begin{cases}
    \dps p^{n+1/4} = p^{n} -\frac{\dt}{4} \gamma \, M^{-1} (p^n+p^{n+1/4}) 
  + \sqrt{\frac\dt2} \, \sigma \, {\mathcal G}^n \\
  \phantom{p^{n+1/4} =} + \nabla\xi(q^n) \, \lambda^{n+1/4},  &\\[6pt]
  \nabla \xi(q^n)^{T} M^{-1} p^{n+1/4} = 0,  &(C_p)\\
  \end{cases} \label{eq:flucdiss1} \\ 
  &\begin{cases}
     p^{n+1/2} = \dps p^{n+1/4} - \frac{\dt}{2} \nabla V (q^{n}) 
  + \nabla \xi(q^n) \, \lambda^{n+1/2}, &\\[6pt] 
  q^{n+1} =  q^{n} + \dt \, M^{-1} \, p^{n+1/2}, \\[6pt] 
  \xi(q^{n+1}) = z,  &(C_q)\\[6pt]
  p^{n+3/4} = \dps p^{n+1/2} - \frac{\dt}{2} \nabla V (q^{n+1}) 
  + \nabla \xi(q^{n+1}) \, \lambda^{n+3/4}, &\\[6pt] 
  \nabla\xi (q^{n+1})^{T} M^{-1} p^{n+3/4} = 0,  &(C_p)\\ 
  \end{cases} \label{eq:Verletconst} \\ 
  &\begin{cases}
       \dps p^{n+1} =  p^{n+3/4} -\frac{\dt}{4} \gamma \, M^{-1} (p^{n+3/4}+p^{n+1}) 
  + \sqrt{\frac\dt2} \, \sigma \, {\mathcal G}^{n+1/2}  \\
  \phantom{p^{n+1} =} + \nabla\xi(q^{n+1}) \, \lambda^{n+1}, &  \\[6pt]
  \nabla \xi(q^{n+1})^{T} M^{-1} p^{n+1} = 0, & \hspace{-1cm} (C_p)\\
  \end{cases} \label{eq:flucdiss2} 
\end{align}
where $({\mathcal G}^n)_{n \geq 0}$ and $({\mathcal G}^{n+1/2})_{n \geq 0}$ are sequences of independently and identically distributed (i.i.d.) Gaussian random variables of mean $0$ and covariance 
matrix~$\mathrm{Id}_{3N}$.

Note that when $\gamma=0$ and $\sigma=0$, the scheme~\eqref{eq:flucdiss1}-\eqref{eq:Verletconst}-\eqref{eq:flucdiss2} becomes deterministic, and reduces to~\eqref{eq:Verletconst}, which is a scheme for the deterministic Hamiltonian equations of motion with position
constraints $\xi(q) = z$. The latter scheme is referred to as the ''Hamiltonian scheme~\eqref{eq:Verletconst}'' below.

\subsubsection{Comments on the Hamiltonian scheme~\eqref{eq:Verletconst}}
\label{sec:numHamiltonconst}

The Hamiltonian part~\eqref{eq:Verletconst} of the scheme, often
called 'RATTLE' in the literature, is an explicit integrator, and is a modification of the classical 'SHAKE' algorithm (see Chapter~VII.$1$ in \cite{HairerLubichWanner06}, or Chapter~$7$ in \cite{leimkuhler-reich-04} for more precisions and historical references). In~\eqref{eq:Verletconst}, $\lambda^{n+1/2} \in \R^\nc$ are the Lagrange
multipliers associated with the position constraints $(C_{q})$, and
$\lambda^{n+3/4}\in \R^\nc$ are the Lagrange multipliers associated with the
velocity constraints $(C_{p})$. The nonlinear constraints $(C_{q})$
are typically enforced using Newton's algorithm. In~\eqref{eq:Verletconst}, the (linear) momentum projection $(C_p)$ is always well defined since we assumed that the Gram matrix $G_M(q)$ is invertible. On the other hand, the nonlinear projection used to enforce the position constraints $\xi(q^{n+1}) = z$ is in general well defined only on a subset of phase space.

\begin{definition}[Domain $D_{\dt}$]
  The domain $D_{\dt} \subset T^\ast  \manq(z)$ is defined as the set of configurations $(q^{n},p^{n+1/4})\in T^\ast  \manq(z)$ such that there is a unique solution $(q^{n+1},p^{n+3/4})$ verifying~\eqref{eq:Verletconst}.
\end{definition}
Solving the position constraints $(C_q)$ consists in projecting onto $\manq(z)$ a point in a $\dt$-neighborhood of $q^n$. Thus, by the implicit function theorem, the domain $D_{\dt}$ verifies:
\[
\lim_{\dt \to 0} D_{\dt} =  T^\ast \manq(z) .
\]
It may happen that there is no solution if the time-step is too large, 
and, even for small time-steps, that several 
projections exist, see for instance Example~$2$ in Chapter~$7$
of \cite{leimkuhler-reich-04}.  In practice, $D_{\dt}$ can be chosen to be the set of $(q^{n},p^{n+1/4})$ such that the Newton algorithm enforcing the constraints $(C_q)$ has converged within a given precision threshold and a limited number of iterations.
 
As for the Verlet scheme in the unconstrained case, the
associated numerical flow shares two important qualitative properties with the
exact flow: It is time reversible and
symplectic (see~\cite{LeiSke94}). This implies quasi-conservation of energy, in the sense that energy is conserved within a given precision threshold over exponentially long times, see \cite{HairerLubichWanner06,leimkuhler-reich-04}.

\subsubsection{Comments on the fluctuation-dissipation part~\eqref{eq:flucdiss1} and~\eqref{eq:flucdiss2}}\label{sec:numflucdissconst}
The new momentum $p^{n+1/4}\in T^\ast\manq(z)$ in~\eqref{eq:flucdiss1} (or $p^{n+1}$ in~\eqref{eq:flucdiss2}) may be obtained by first integrating the unconstrained dynamics with a midpoint scheme, and then computing the Lagrange multiplier $\lambda^{n+1/4}$ (or $\lambda^{n+1}$) by solving the following linear system implied by the constraints $(C_p)$:
\[
\begin{aligned}
\nabla\xi(q^n)^T M^{-1} \left(\op{Id}+\frac\dt4 \gamma M^{-1} \right)^{-1} & 
\left[ \left(\op{Id}-\frac\dt4 \gamma M^{-1} \right) \, p^n \right. \\
& \left. \quad + \sqrt{\frac{\dt}{2}} \,  \sigma \, {\mathcal G}^n + \nabla \xi(q^n) \, \lambda^{n+1/4} 
\right] = 0 .
\end{aligned}
\]
A sufficient criteria for stability is
\[
\frac{\dt}{4} \gamma \leq M .
\]
Besides, it can be checked (see Sections~2.3.2 and~3.3.5 in~\cite{LelRouStoBook}) that the Markov chain induced by the fluctuation-dissipation part of the scheme~\eqref{eq:flucdiss1} (or~\eqref{eq:flucdiss2}) verifies a detailed balance equation (both in the plain sense and 
up to momentum reversal) with respect to the stationary measure $\kappa_{T^\ast_q \manq(z)}^{M^{-1}}(dp)$. The latter is defined as the kinetic probability distribution
\begin{equation}
  \label{eq:kinetic_part}
  \kappa_{T^\ast_q \manq(z)}^{M^{-1}}(dp) =  \left(\frac{\beta}{2\pi}\right)^{(3N-\nc)/2} \exp\left(-\beta \frac{p^T M^{-1} p}{ 2}\right)
\, \sigma^{M^{-1}}_{T^{\ast}_q \manq(z)}(dp),
\end{equation}
and is the marginal in the $p$-variable of the canonical distribution $\mu_{T^*\manq(z)}(dq \, dp)$ conditioned by a given $q\in \manq(z)$. Moreover, if $\gamma_P  := P_M \, \gamma \, P_M^T$ is strictly positive in the sense of symmetric linear transformations of $T^*_q\manq(z)$, then the Markov chain on momentum variable induced by~\eqref{eq:flucdiss1} (or~\eqref{eq:flucdiss2}) alone is ergodic with respect to~$\kappa_{T^\ast_q \manq(z)}^{M^{-1}}(dp)$. 

Finally, an important simplification occurs in the integration of~\eqref{eq:OU} in the special case when $\gamma$ and $M$ are equal up to a multiplicative constant (so that $\gamma M^{-1}$ is proportional to identity). Indeed in this case the equality $\pare{\gamma_P(q)M^{-1}}^n = P_M(q) \pare{\gamma \, M^{-1}}^n$ holds for any $n \geq 0$, and~\eqref{eq:OUexpl} simplifies to
\begin{equation}
  \label{eq:OUexpl_simpl}
  p_t = P_M(q) \left( \rme^{- t \, \gamma M^{-1}} p_0+  
  \int_0^t \rme^{-(t-s) \gamma M^{-1}} \, \sigma  \, d W_s \right).
\end{equation}
The numerical integration of~\eqref{eq:OU} can thus be carried out in two steps: 
(i) exactly integrating~\eqref{eq:OU} without constraint, 
and then (ii) projecting the result onto~$T^\ast_q \manq(z)$.

\subsubsection{Metropolis-Hastings correction}
\label{sec:nummetroconst}

Usually, the invariant probability distribution sampled by the solution of a numerical scheme is biased by the time discretization. Relying on (i) the time symmetry (up to momentum reversal)
and (ii) the preservation 
of the phase space measure $\sigma_{T^*\manq(z)}(dq \, dp)$ by the solution of the RATTLE scheme~\eqref{eq:Verletconst}, it
is possible to eliminate the time discretization error in the splitting scheme
\eqref{eq:flucdiss1}-\eqref{eq:Verletconst}-\eqref{eq:flucdiss2} by resorting to a Generalized Hybrid Monte Carlo algorithm.

\begin{algorithm}[GHMC with constraints]
  \label{a:GHMCconst}
  Consider an initial configuration $(q^0,p^0) \in T^*\manq(z)$,
  and a sequence $({\mathcal G}^n,{\mathcal G}^{n+1/2})_{n\geq 0}$ of independently and identically distributed 
  standard Gaussian vectors. Iterate on $n \geq 0$:  
  \begin{enumerate}
  \item Evolve the momentum according to the midpoint Euler scheme~\eqref{eq:flucdiss1}, and compute the energy $E^n = H(q^n,p^{n+1/4})$ of the new configuration;
  \item Integrate the Hamiltonian part according to the RATTLE
    scheme~\eqref{eq:Verletconst}, denote $(\widetilde{q}^{n+1},\widetilde{p}^{n+3/4})$ the resulting state, and set $E^{n+1} = H(\widetilde{q}^{n+1},\widetilde{p}^{n+3/4})$.
  \item Accept the proposal
    $(q^{n+1},p^{n+3/4}):=(\widetilde{q}^{n+1},\widetilde{p}^{n+3/4})$ with probability
    \[
    \min \Big ( \rme^{ -\beta (E^{n+1}-E^n) },1 \Big ).
    \]
    Otherwise, reject and flip the momentum:
    $(q^{n+1},p^{n+3/4}) = (q^n,-p^{n+1/4})$.
  \item Evolve the momentum according to the midpoint Euler scheme~\eqref{eq:flucdiss2}.
  \end{enumerate}
\end{algorithm}
By construction, the GHMC algorithm with constraints leaves
invariant the equilibrium distribution $\mu_{T^*\manq(z)}(dq\,dp)$ (see Section~3.3.5 in \cite{LelRouStoBook}).

To understand the momentum reversal required upon rejection, 
it is useful to write more explicitly the Markov
chain as the composition of a Metropolis-Hastings part, where the proposal is obtained 
by a RATTLE step followed by a momentum reversal (the latter operation is needed to ensure
the symmetry of the proposition), which is then accepted or rejected;
and another momentum reversal (which leaves invariant the targeted probability measure $\mu_{T^*\manq(z)}(dq\,dp)$). When the proposal is accepted, the two momentum reversals
cancel out each other. On the other hand, when the proposal is rejected, momenta are
actually reversed. See~\cite[Section~2.1.4]{LelRouStoBook} for more background 
on generalized Metropolis-Hastings algorithms.

In the above, we implicitly assume that the RATTLE scheme~\eqref{eq:Verletconst} is everywhere well defined.
 In practice however, it is necessary to modify Algorithm~\ref{a:GHMCconst} by restricting the sampled configurations to $D_{\dt}$. This can be achieved by introducing additional tests in steps (1), (2) and (4), and rejecting the states that have gone outside the set $D_{\dt} \subset T^\ast \manq(z)$ where the position constraint $(C_q)$ is well defined. By doing so, the global algorithm has an invariant equilibrium distribution given by $\mu_{T^*\manq(z)}(dq\,dp)$
conditioned on the set of states $D_{\dt}$. This invariant distribution can be written explicitly as follows:
\begin{equation}\label{eq:mu_with_cutoff}
\frac{1}{Z_{z,0,\dt}} {\rm e}^{-\beta H(q,p)} \one_{(q,p)\in D_{\dt} }\sigma_{T^*\manq(z)}(dq \, dp).
\end{equation}
Alternatively, the rejection tests in steps (1), (2) and~(4) of Algorithm~\ref{a:GHMCconst} can be performed with a cut-off parameter $R_{\dt}>0$ on the momentum variable,
chosen so that the position constraint $(C_q)$ in~\eqref{eq:Verletconst} is everywhere well defined
when $\frac{1}{2} p^T M^{-1} p  \leq R_{\dt}$. This can be achieved when there exists $R_{\dt}>0$ small enough so that $\manq(z) \times \{ \frac{1}{2} p^T M^{-1} p \leq R_{\dt} \} \subset D_{\dt} \subset T^\ast \manq(z)$. Since this is useful for later purposes (see the discussion at the end of Section~\ref{sec:num_ti}), we provide a rough estimate of $R_{\dt}$ in terms of $\dt$, assuming
for simplicity that~$\manq(z)$ is compact. First, by the implicit function theorem, there exists
$\alpha > 0$ such that, for all $q \in \manq(z)$ and $\delta q$ with norm 
$\| \delta q \| < \alpha$, there is a unique $\lambda \in \R^\nc$ satisfying
\[
\xi\left(q + M^{-1} (\delta q + \nabla \xi(q) \lambda)\right) = z.
\] 
Therefore, there exists $a > 0$ small enough such that, when $\|p^{n+1/4}\| \leq a/\dt$,
the RATTLE scheme in~\eqref{eq:Verletconst} is well defined, namely
there exists a unique $q^{n+1}$ satisfying the constraint $(C_q)$.
This shows that 
\begin{equation}
  \label{eq:estimate_Rdt}
  R_{\dt} \geq A\dt^{-2}
\end{equation}
for some $A > 0$.

The invariant probability distribution of the Markov chain generated by
GHMC with the additional rejection steps ensuring $\frac{1}{2} p^T M^{-1} p \leq R_{\dt}$,
is given by~\eqref{eq:mu_with_cutoff}, and actually reads
\[
\frac{1}{Z_{z,0,\dt}} {\rm e}^{-\beta H(q,p)} \one_{ \frac{1}{2} p^T M^{-1} p \leq R_{\dt}}
\sigma_{T^*\manq(z)}(dq \, dp).
\]
Its marginal distribution in the position variable is then exactly given by:
\begin{equation*}
\frac{1}{Z_{z}} {\rm e}^{-\beta V(q)}  \sigma^M_{\manq(z)}(dq).
\end{equation*}
This is also the marginal distribution in the position variable of~$\mu_{T^*\Sigma(z)}$.
Note however, that if $R_{\dt}$ is too small, only small momenta will be sampled in step~(1) of Algorithm~\ref{a:GHMCconst}, and the correlation time of the sampling will be large. In practice, the threshold $R_{\dt}$ should be tuned in preliminary computations so that:
(i)~$R_{\dt}$ is small enough so that the maximal number $N_{\rm max}$ of iterations for the Newton algorithm used to enforce $(C_q)$ in~\eqref{eq:Verletconst} is never reached;
(ii)~$R_{\dt}$ is large enough so that the correlation time of the sampling is as small as possible.

Let us end this section with a warning: It is now known that the correction of the bias 
in discretizations of the Langevin dynamics by a Metropolization of the scheme may reduce
the efficiency of the sampling, see for instance~\cite{AR08}.

\subsection{Exact sampling on a submanifold with overdamped dynamics}
\label{sec:ovd_limit_constraints}

Constrained overdamped Langevin processes (or Brownian dynamics) are solutions of the 
stochastic differential equation (see also \cite{LelRouStoBook,ciccotti-lelievre-vanden-einjden-08})
\begin{equation}
  \label{eq:overdampconst}
  \left \{ \begin{aligned}
    d q_t & = -\nabla V(q_t) \, dt + \sqrt{\frac{2}{\beta}} \, d W_t +  \nabla \xi (q_t) \, 
    d \lambda_t , \\
    \xi(q_t) & = z,
  \end{aligned} \right.
\end{equation}
where $\lambda_t$ is an adapted stochastic process. Equivalently, \eqref{eq:overdampconst} can be rewritten 
in the Stratonovitch form as
\[
d q_t= - P(q_t)  \nabla V(q_t) \, dt + \sqrt{\frac{2}{\beta}} P(q_t) \, \circ d W_t,
\]
where $\circ$ denotes the Stratonovitch integration, 
and $P$ is the projector defined by~\eqref{eq:proj} with the choice $M=\op{Id}$:
\begin{equation}
\label{eq:G_M=Id}
P(q) = \op{Id} -  \nabla \xi(q) \, G^{-1}(q) \nabla \xi(q)^T,
\qquad
G(q) =  \nabla \xi(q)^T \nabla \xi(q) .
\end{equation}
It can be shown that~\eqref{eq:overdampconst} verifies the detailed balance condition for (and is ergodic with respect to) the invariant distribution
\begin{equation}
  \label{eq:over_inv}
  Z_z^{-1} \, {\rm e}^{-\beta V(q)}\sigma^{\op{Id}}_{\manq(z)}(dq),
\end{equation}
which is the marginal in the $q$-variable of the canonical distribution with 
constraints~\eqref{eq:canonicalconst} for the choice $M=\mathrm{Id}$.
It is easy to generalize all our results to scalar products associated with a general symmetric definite positive matrix $M$ upon considering~\eqref{eq:overdampM}, see Remark~\ref{rmk:discrete_ovd_M} below.

The constrained overdamped 
Langevin process~\eqref{eq:overdampconst} may be obtained from a scaling limit 
of the constrained Langevin dynamics~$\EL$ (in the limit when either the mass goes to zero, or the damping~$\gamma$ goes to infinity), see Propositions~$2.14$ and~$2.15$ in~\cite{LelRouStoBook}. 

Likewise, at the discrete level, an Euler-Maruyama discretization of the overdamped process~\eqref{eq:overdampconst} can be obtained as a particular case of the numerical discretization~\eqref{eq:flucdiss1}-\eqref{eq:Verletconst}-\eqref{eq:flucdiss2} for the Langevin equation~$\EL$, yielding a Markov chain $(q^n)_{n \ge 0}$ on positions. 
The condition that the mass goes to~0 is replaced by the condition that the mass is proportional
to the time-step. This is the content of the following proposition.

\begin{proposition}
\label{p:langtooverd}
Suppose that the following relation is satisfied:
\begin{equation}
  \label{eq:scaling_num_ovd_const}
  \frac{\dt}{4} \gamma = M = \frac{\dt}{2} \mathrm{Id}. 
\end{equation}
With a slight abuse of notation, the mass matrix and the friction matrix 
are rewritten as $M \, \mathrm{Id}$ and $\gamma \, \mathrm{Id}$ with $M,\gamma \in \mathbb{R}$.
Then the splitting scheme~\eqref{eq:flucdiss1}-\eqref{eq:Verletconst}-\eqref{eq:flucdiss2} yields the following Euler scheme for the overdamped Langevin constrained dynamics~\eqref{eq:overdampconst}:
\begin{equation}
  \label{eq:Eulerconst}
  \begin{cases}
      \dps q^{n+1} = q^{n} - \dt   \nabla V (q^{n}) 
    + \sqrt{\frac{2\dt }{\beta}} \, {\mathcal G}^n + \nabla \xi (q^n) \, \lambda^{n+1}_{\rm od},  \\[6pt]
    \xi(q^{n+1}) = z,
  \end{cases}
\end{equation}
where $({\mathcal G}^{n})_{n\geq 0}$ are independent and identically distributed 
centered and normalized Gaussian
variables, and $(\lambda^n_{\rm od})_{n \geq 1}$ are the Lagrange multipliers
associated with the constraints $(\xi(q^{n}) = z)_{n \geq 1}$. 
Moreover, the Lagrange multipliers in \eqref{eq:Verletconst} verify:
\begin{eqnarray}
2 \lambda^{n+1/2} &=& G^{-1}(q^n) \Big( \nabla \xi (q^n)^T \pare{q^{n+1}-q^{n}} + \dt  \nabla \xi (q^n)^T \nabla V (q^n) \Big) \label{eq:lag_od_1} \\
 &=&  \lambda^{n+1}_{\rm od} + \sqrt{\frac{2\dt }{\beta}}  G^{-1}(q^n)  \nabla \xi (q^n) ^T {\mathcal G}^n,   \label{eq:lag_od_2} 
\end{eqnarray}
as well as
\begin{equation}
2 \lambda^{n+3/4} = G^{-1}(q^{n+1}) \Big( \nabla \xi (q^{n+1})^T \pare{q^{n}-q^{n+1}} + \dt  \nabla \xi (q^{n+1})^T \nabla V (q^{n+1}) \Big). \label{eq:lag_od_3}
 \end{equation}
where $G$ is defined in~\eqref{eq:G_M=Id}.  
\end{proposition}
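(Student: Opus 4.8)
The plan is to substitute the scaling~\eqref{eq:scaling_num_ovd_const} directly into the three substeps~\eqref{eq:flucdiss1}--\eqref{eq:Verletconst}--\eqref{eq:flucdiss2} and to check, by elementary algebra, that the positions $q^{n}$ obey the Euler scheme~\eqref{eq:Eulerconst}, and then to read off the Lagrange multipliers. In the scalar notation of the statement, \eqref{eq:scaling_num_ovd_const} forces the friction to be $\gamma = 2$, so that $\tfrac{\dt}{4}\gamma M^{-1} = \mathrm{Id}$, $M^{-1} = \tfrac{2}{\dt}\mathrm{Id}$ and $\dt M^{-1} = 2$. With $M = \tfrac{\dt}{2}\mathrm{Id}$, the Gram matrix~\eqref{eq:Gram} is $G_M = \tfrac{2}{\dt}G$ with $G = \nabla\xi^{T}\nabla\xi$ as in~\eqref{eq:G_M=Id}, while the projector~\eqref{eq:proj} reduces to the $M = \mathrm{Id}$ projector $P$ of~\eqref{eq:G_M=Id}; and by the fluctuation--dissipation relation~\eqref{eq:FDR_constraints}, $\sqrt{\dt/2}\,\sigma\,\mathcal{G}^{n}$ is a centered Gaussian of covariance $\tfrac{\dt}{2}\sigma\sigma^{T} = \tfrac{2\dt}{\beta}\mathrm{Id}$, hence has the law of $\sqrt{2\dt/\beta}\,\mathcal{G}^{n}$ with $\mathcal{G}^{n}$ standard normal.

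The conceptual point is that under this scaling the momentum decouples. Indeed, substituting $\tfrac{\dt}{4}\gamma M^{-1} = \mathrm{Id}$ into~\eqref{eq:flucdiss1} gives $2\,p^{n+1/4} = \sqrt{2\dt/\beta}\,\mathcal{G}^{n} + \nabla\xi(q^{n})\,\lambda^{n+1/4}$, with no dependence on $p^{n}$; since $M^{-1}\propto\mathrm{Id}$, the constraint $(C_{p})$ reads $\nabla\xi(q^{n})^{T}p^{n+1/4} = 0$, which fixes $\lambda^{n+1/4} = -\sqrt{2\dt/\beta}\,G^{-1}(q^{n})\nabla\xi(q^{n})^{T}\mathcal{G}^{n}$ and thus $2\,p^{n+1/4} = \sqrt{2\dt/\beta}\,P(q^{n})\mathcal{G}^{n}$. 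Combining this with $q^{n+1} = q^{n} + \dt M^{-1}p^{n+1/2} = q^{n} + 2\,p^{n+1/2}$, the first line of~\eqref{eq:Verletconst}, and $P(q^{n}) = \mathrm{Id} - \nabla\xi(q^{n})G^{-1}(q^{n})\nabla\xi(q^{n})^{T}$, one obtains
\[
q^{n+1} = q^{n} - \dt\,\nabla V(q^{n}) + \sqrt{2\dt/\beta}\,\mathcal{G}^{n}
+ \nabla\xi(q^{n})\Big(2\,\lambda^{n+1/2} - \sqrt{2\dt/\beta}\,G^{-1}(q^{n})\nabla\xi(q^{n})^{T}\mathcal{G}^{n}\Big).
\]
Denoting the parenthesis by $\lambda^{n+1}_{\rm od}$ and recalling that $(C_{q})$ in~\eqref{eq:Verletconst} is precisely $\xi(q^{n+1}) = z$, this is exactly~\eqref{eq:Eulerconst}. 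The last substep~\eqref{eq:flucdiss2} updates only $p^{n+1}$, and by the same computation $p^{(n+1)+1/4}$ does not depend on $p^{n+1}$, so~\eqref{eq:flucdiss2} plays no role in the position recursion and can be discarded.

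It remains to identify the multipliers. Projecting the RATTLE relation $q^{n+1} - q^{n} = 2\,p^{n+1/4} - \dt\,\nabla V(q^{n}) + 2\,\nabla\xi(q^{n})\lambda^{n+1/2}$ onto $\nabla\xi(q^{n})^{T}$ and using $\nabla\xi(q^{n})^{T}p^{n+1/4} = 0$ gives $2\,G(q^{n})\lambda^{n+1/2} = \nabla\xi(q^{n})^{T}(q^{n+1}-q^{n}) + \dt\,\nabla\xi(q^{n})^{T}\nabla V(q^{n})$, which is~\eqref{eq:lag_od_1}; equation~\eqref{eq:lag_od_2} is then just the definition of $\lambda^{n+1}_{\rm od}$ rewritten. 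For~\eqref{eq:lag_od_3}, the relation $q^{n+1} = q^{n} + 2\,p^{n+1/2}$ yields $p^{n+1/2} = \tfrac12(q^{n+1}-q^{n})$, so the fourth line of~\eqref{eq:Verletconst} reads $p^{n+3/4} = \tfrac12(q^{n+1}-q^{n}) - \tfrac{\dt}{2}\nabla V(q^{n+1}) + \nabla\xi(q^{n+1})\lambda^{n+3/4}$; projecting onto $\nabla\xi(q^{n+1})^{T}$ and using the constraint $\nabla\xi(q^{n+1})^{T}p^{n+3/4} = 0$ gives~\eqref{eq:lag_od_3}.

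The computation itself is routine. The two points that deserve care are: that $(C_{p})$ collapses to the Euclidean orthogonality $\nabla\xi^{T}p = 0$ precisely because $M^{-1}$ is a scalar matrix (this is why the projector $P$, and not $P_M$, appears); and that the RATTLE and Euler position projections are solvable on the same set, since $\lambda^{n+1/2}\mapsto\lambda^{n+1}_{\rm od}$ is an invertible affine map, so a configuration $q^{n+1}$ with $\xi(q^{n+1}) = z$ exists (and is unique) in the RATTLE step exactly when it does in the Euler step, i.e.\ on the domain $D_{\dt}$. Beyond bookkeeping the factors $\dt$, $\gamma$, $M$ and distinguishing $G$ from $G_M$, there is no analytical obstacle.
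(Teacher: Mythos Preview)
Your proof is correct and follows essentially the same approach as the paper: substitute the scaling~\eqref{eq:scaling_num_ovd_const} into the midpoint step to see that $p^{n+1/4}$ no longer depends on $p^{n}$, combine with the RATTLE update to recover the Euler step~\eqref{eq:Eulerconst} with $\lambda^{n+1}_{\rm od}=\lambda^{n+1/4}+2\lambda^{n+1/2}$, and then read off~\eqref{eq:lag_od_1}--\eqref{eq:lag_od_3} by projecting onto $\nabla\xi$. Your presentation is slightly more explicit (you compute $\lambda^{n+1/4}$ and note that $(C_p)$ becomes Euclidean orthogonality because $M^{-1}$ is scalar, and you remark on the bijection between the RATTLE and Euler projection steps via the affine map $\lambda^{n+1/2}\mapsto\lambda^{n+1}_{\rm od}$), but the argument is the same.
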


\begin{proof}
Irrespective of $p^n$, the choice~\eqref{eq:scaling_num_ovd_const} in the scheme
\eqref{eq:flucdiss1}-\eqref{eq:Verletconst}-\eqref{eq:flucdiss2} leads to
\[
p^{n+1/4} = \sqrt{\frac\dt8} \, \sigma \, {\mathcal G}^n + \frac12 \nabla\xi(q^n) \, \lambda^{n+1/4},
\]
where $\lambda^{n+1/4}$ is associated with the constraints $\nabla \xi (q^n) ^T p^{n+1/4} =0$. This gives 
\[
p^{n+1/2} = -\frac{\dt}{2} \nabla V(q^n) + 
\sqrt{\frac{\dt}{8}} \, \sigma \, {\mathcal G}^n + \nabla\xi(q^n) \left( \frac12 \, 
\lambda^{n+1/4} + \lambda^{n+1/2} \right),
\]
where $\lambda^{n+1/2}$ is such that $\xi(q^{n+1})=z$.
The fluctuation-dissipation relation~\eqref{eq:FDR_constraints} can be reformulated 
in this context as
\[
\sigma \sigma^T= \frac{2}{\beta} \gamma =\frac{4}{\beta} \ \op{Id},
\]
and the scheme~\eqref{eq:Eulerconst} is recovered by taking the associated Lagrange multiplier equal to $\lambda^{n+1}_{\rm od} = \lambda^{n+1/4} + 2\lambda^{n+1/2}$. 
Finally, remarking that $G_{M} = \frac{2 }{\dt} G$ and computing explicitly $\lambda^{n+1/2}$ and $\lambda^{n+3/4}$ in \eqref{eq:Verletconst} yields \eqref{eq:lag_od_1}-\eqref{eq:lag_od_2}-\eqref{eq:lag_od_3}.
\end{proof}

This point of view allows to construct a Metropolis correction
to the Euler scheme~\eqref{eq:Eulerconst}, using the
Generalized Hybrid Monte Carlo scheme (Algorithm~\ref{a:GHMCconst}) with the 
time-step chosen according to~\eqref{eq:scaling_num_ovd_const}.
In this way, assuming that the position constraint $(C_q)$ in~\eqref{eq:Verletconst} is everywhere well defined, we obtain a Markov chain $(q^{n})_{n\geq 0}$ discretizing the overdamped dynamics~\eqref{eq:Eulerconst} which \emph{exactly} samples the 
invariant distribution~\eqref{eq:over_inv}. 
Deriving such a Metropolis-Hastings correction to the Euler scheme~\eqref{eq:Eulerconst} without
resorting to phase-space dynamics does not seem to be natural.

\begin{remark}[Discrete overdamped limit]
\label{rmk:discrete_ovd_M}
Proposition~\ref{p:langtooverd} can be seen as a discrete version of the zero-mass limit of the Langevin dynamics. It is also possible to obtain a discrete version of the overdamped limit ($\gamma \to \infty$) of the Langevin dynamics by assuming that the parameters satisfy the relation
\begin{equation*}
\frac{\dt}{4} \gamma = M \propto \mathrm{Id},
\end{equation*}
which is less restrictive than~\eqref{eq:scaling_num_ovd_const}. Equation~\eqref{eq:Eulerconst} is then obtained with $\dt$ replaced by
\begin{equation}
\label{eq:dteff}
\Delta s = \frac{\dt^{2}}{2M} = \frac{2\dt}{\gamma}.
\end{equation}
In this case, the effective discretization time-step $\Delta s$ for the overdamped Langevin dynamics is thus different from the time-step $\Delta t$ originally used for the discretization of the Langevin dynamics. This is reminiscent of the fact that the overdamped limit (at the continuous level) of the Langevin dynamics requires a change of timescale to obtain the overdamped Langevin dynamics.

Note also that in the more general case
$$\frac{\dt}{4} \gamma = M$$
where $M$ is not supposed to be proportional to the identity, the following numerical scheme
is obtained:
\begin{equation*}
\begin{cases}
\dps q^{n+1} = q^{n} - \widetilde{\Delta s} \, M^{-1} \nabla V (q^{n})
+ \sqrt{\frac{2 \widetilde{\Delta s} }{\beta}} \, M^{-1/2} {\mathcal G}^n + M^{-1}\nabla \xi (q^n) \, \lambda^{n+1}_{\rm od}, \\[6pt]
\xi(q^{n+1}) = z,
\end{cases}
\end{equation*}
where $\widetilde{\Delta s}=\dt^2/2$. This is a discretization of the overdamped dynamics
\begin{equation}
\label{eq:overdampM}
\left \{ \begin{aligned}
d q_s & = -M^{-1} \nabla V(q_s) \, ds + \sqrt{\frac{2}{\beta}} \, M^{-1/2} d W_s + M^{-1} \nabla \xi (q_s) \, d \lambda_s , \\
\xi(q_s) & = z,
\end{aligned} \right.
\end{equation}
which is a generalization of~\eqref{eq:overdampconst} to a scalar product on $\Sigma(z)$ induced by a general positive definite mass matrix~$M$. 
\cqfd\end{remark}


\section{Thermodynamic integration with constrained Langevin dynamics}\label{sec:Langti}

In this section, we focus on the computation of the gradient of the rigid free 
energy~\eqref{eq:Frgd}
\[
F_{\rm rgd}^M(z) =  -\frac{1}{\beta} \ln \int_{T^\ast \manq(z)} {\rm e}^{-\beta H(q,p)} 
\sigma_{T^\ast \manq(z)}(dq\, dp),
\]
using a numerical discretization of the constrained Langevin process~$\EL$. 

As explained in the introduction, we may indeed concentrate on the computation of the rigid free energy~\eqref{eq:Frgd}, since the standard free energy~\eqref{eq:Fbis} can be computed from the latter one using~\eqref{eq:FFtilde}. The relation~\eqref{eq:FFtilde} can be proved with the co-area formula~\eqref{eq:coarea}. Indeed, the free energy defined in~\eqref{eq:Fbis} can be rewritten as
(where C denotes a constant which may vary from line to line):
\begin{align}
  F(z) &= -\frac{1}{\beta} \ln \int_{\manq(z)\times \R^{3N}} {\rm e}^{-\beta H(q,p)} \delta_{\xi(q)-z}(dq) \, dp \\
  &=-\frac{1}{\beta} \ln \int_{\manq(z)} {\rm e}^{-\beta V(q)} (\det  G_M(q) )^{-1/2} \sigma_{\manq(z)}^{M}(dq) + {\rm C} \nonumber \\
  & = -\frac{1}{\beta} \ln \int_{T^\ast \manq(z)} {\rm e}^{-\beta H(q,p)} (\det  G_M(q) )^{-1/2} \sigma_{T^\ast \manq(z)}(dq \, dp) + {\rm C} \nonumber 
\end{align}
hence 
\begin{equation}
  \label{eq:Fqp_final_difference}
  F(z) = F_{\rm rgd}^M(z) -\frac{1}{\beta} \ln \int_{T^{*}\manq(z)}   
  (\det G_M) ^{-1/2} d\mu_{T^{*}\manq(z)} + {\rm C},
\end{equation}
where surface measures are defined in Section~\ref{sec:measures}. 
Note that the rigid free energy $F_{\rm rgd}^M$
indeed depends explicitly on the mass matrix since
\begin{equation}
\label{eq:explicit_dependence_M}
F_{\rm rgd}^M(z) = -\frac{1}{\beta} \ln \int_{\manq(z)} {\rm e}^{-\beta V(q)} \sigma_{\manq(z)}^M(dq) 
+ \mathrm{C}.
\end{equation}

This section is organized as follows. First, we show how systems with 
molecular constraints and systems with constrained values of the reaction coordinate
can be treated in a unified framework (Section~\ref{TI-sec:MC}). We then relate the Lagrange
multipliers arising in the constrained Langevin dynamics, and the gradient of the rigid free
energy (the so-called mean force) in Section~\ref{sec:lag}.
We consider the numerical computation of the mean force in Section~\ref{sec:num_ti},
where we prove consistency results for the corresponding approximation formulas.
Finally, some numerical results on a model system illustrate 
the approach in Section~\ref{sec:simple_example_WCA}.

\subsection{Molecular constraints}
\label{TI-sec:MC}

We discuss here how to generalize all the computations to systems with molecular constraints, generalizing thereby some results of~\cite{ciccotti-kapral-vanden-eijnden-05}. 
Without loss of generality, we consider rigidly imposed molecular constraints,
see Remark~\ref{rmk:choice_mc} below for a discussion of this assumption.
This section can be considered as independent of the remainder of the paper and may therefore be omitted in a first reading.

In practice, many systems are subject to molecular constraints, such as fixed lengths for covalent bonds, or fixed angles between covalent bonds. The reader is referred to \cite{Rap} for practical aspects related to the simulation of molecular constraints. In the context of free energy computations, two types of constraints are therefore considered: first, the molecular constraints, 
$$
\xi_{\rm mc}(q)=(\xi_{{\rm mc},1}(q),\ldots, \xi_{{\rm mc},\overline{m}} (q) )=0,
$$
for $\overline{m} < 3N$, and second, the reaction coordinates denoted in this section by $\xi_{\rm rc} \, : \, \R^{3N} \to \R^{\nc}$, with $\overline{m} +\nc < 3N$. The submanifold of molecular constraints is denoted by
$$
\Sigma_{{\rm mc}} = \setbig{q \in \R^{3N} \ \big | \ \xi_{\rm mc}(q) =0},
$$
and the submanifold associated with the reaction coordinates by
$$\Sigma_{\rm rc}(z_{\rm rc})=\setbig{q \in \R^{3N} \, \big | \, \xi_{\rm rc}(q) =z_{\rm rc}}.$$
It is assumed that the full Gram matrix:
$$
G_M^{{\rm mc},{\rm rc}} := \nabla (\xi_{\rm mc},\xi_{\rm rc})^T M^{-1} \nabla  (\xi_{\rm mc},\xi_{\rm rc}) \in \R^{(\overline{m}+\nc)\times (\overline{m}+\nc)}
$$
is everywhere invertible on $\Sigma_{\rm mc} \cap \Sigma_{\rm rc}(z_{\rm rc})$. Likewise, we denote
$$
G_M^{\rm rc} := \nabla \xi_{\rm rc}^T M^{-1} \nabla  \xi_{\rm rc} \in \R^{\nc \times \nc},
$$
and
$$
G_M^{\rm mc} := \nabla \xi_{\rm mc}^T M^{-1} \nabla  \xi_{\rm mc} \in \R^{\overline{m} \times \overline{m}} .
$$
Assuming rigid mechanical constraints on the molecular constraints $\xi_{\rm mc}$ (see Remark~\ref{rmk:choice_mc} below), we are led to considering the canonical distribution
\begin{align}
  \mu_{T^{*}\Sigma_{\rm mc}} (dq \, dp) & = \frac{1}{Z_{\rm mc}}  {\rm e}^{-\beta H(q,p)} \delta_{\xi_{\rm mc} (q),p_{\xi_{\rm mc}}(p,q)}(dq \, dp) \label{eq:gibbsmc} \\
  & =\frac{1}{Z_{\rm mc}}  {\rm e}^{-\beta H(q,p)} \sigma_{T^{*}\Sigma_{\rm mc}}(dq \, dp), \nonumber
\end{align}
to describe systems with molecular constraints at a fixed temperature. The measure~$\sigma_{T^{*}\Sigma_{\rm mc}}$ denotes the phase space measure on $T^*\Sigma_{\rm mc}$, equal by~\eqref{eq:deltasigmap} to the conditional measure $ \delta_{\xi_{\rm mc} (q),p_{\xi_{\rm mc}}(q,p)}(dq \, dp )$ associated with the constraints $(\xi_{\rm mc} (q) = 0 ,p_{\xi_{\rm mc}}(q,p)=0)$, where $p_{\xi_{\rm mc}}$ is the effective momentum~\eqref{eq:effp} associated with $\xi_{\rm mc}$. 

\begin{remark}[On the choice of the distribution~\eqref{eq:gibbsmc}]
  \label{rmk:choice_mc}
The distribution $\mu_{T^{*}\Sigma_{\rm mc}}$ in~\eqref{eq:gibbsmc} is obtained by constraining rigidly $\xi_{\rm mc}(q)$ to~0, and not 'softly' (in which case $\delta_{\xi_{\rm mc} (q),p_{\xi_{\rm mc}}(p,q)}(dq \, dp) $ would be replaced by $\delta_{\xi_{\rm mc} (q)}(dq) \, dp$, see also Remark~\ref{rem:highosclang}). As explained in Section~\ref{sec:sampling}, the distribution~\eqref{eq:gibbsmc} is the equilibrium distribution of a Langevin process (thermostated Hamiltonian dynamics) with rigid position constraints $\xi_{\rm mc} (q)=0$. Choosing whether constraints should be soft or rigid is a modeling choice, and there is no clear consensus on this issue in the current literature.

Note however that it is possible to rewrite the remainder of this section by considering the softly constrained potential rather than the rigidly constrained potential, up to an appropriate modification of~\eqref{eq:FFtildemc} below, with the help of some Fixman corrective potential.
\cqfd\end{remark}
By associativity of the conditioning of measures, the distribution $\mu_{T^{*}\Sigma_{\rm mc}}$ conditioned by a value of the reaction coordinates $\xi_{\rm rc}(q)=z_{\rm rc}$ is given, up to a normalizing factor, by:
$$
{\rm e}^{-\beta H(q,p)} \delta_{\xi_{\rm mc} (q),p_{\xi_{\rm mc}}(q,p),\xi_{\rm rc}(q)-z_{\rm rc}}(dq\,dp).
$$
Therefore, considering the marginal probability distribution of the reaction coordinates $\xi_{\rm rc}(q)$ leads to the following definition of the free energy associated with~$\xi_{\rm rc}$:
$$
F^{\rm mc}(z_{\rm rc}) = -\frac{1}{\beta} \ln \int_{T^\ast \manq_{\rm mc} \cap (\manq_{\rm rc}(z_{\rm rc}) \times \mathbb{R}^{3N})} {\rm e}^{-\beta H(q,p)} \delta_{\xi_{\rm mc} (q),p_{\xi_{\rm mc}}(q,p),\xi_{\rm rc}(q)-z_{\rm rc}}(dq\,dp) .
$$
The conditional distribution can be decomposed as follows, using the co-area formulas~\eqref{eq:coarea}-\eqref{eq:coarea_effective_momentum} and the definition of effective momentum~\eqref{eq:effp}:
\begin{align*}
\begin{split}
& \delta_{\xi_{\rm mc} (q),p_{\xi_{\rm mc}}(q,p),\xi_{\rm rc}(q)-z_{\rm rc}}(dq\,dp)  \\
&\qquad = \delta_{p_{\xi_{\rm mc}}(q,p)}(dp)\delta_{\xi_{\rm mc} (q),\xi_{\rm rc}(q)-z_{\rm rc}}(dq) 
\end{split} \\
&\qquad = \pare{ \det G_M^{\rm mc}(q) }^{1/2}\sigma^{M^{-1}}_{T^{*}_q\Sigma_{\rm mc}}(dp)\,   \pare{ \det G_M^{{\rm mc},{ \rm rc}}(q) }^{-1/2} \sigma^{M}_{\manq_{\rm rc}(z_{\rm rc})\cap \Sigma_{\rm mc}}(dq).
\end{align*}
Integrating out the momentum in the linear space $T^{\ast }_q\Sigma_{\rm mc}$ with scalar product $\langle p_1,p_2 \rangle_{M^{-1}}=p_1^T M^{-1} p_2$, the free energy can be rewritten as:
$$
F^{\rm mc}(z_{\rm rc}) = -\frac{1}{\beta} \ln \int_{{\manq_{\rm rc}(z_{\rm rc}) \cap \Sigma_{\rm mc}}} {\rm e}^{-\beta V(q)} \pare{ \frac{  \det G_M^{\rm mc} (q) }{ \det G_M^{{\rm mc}, { \rm rc} }(q)}  }^{1/2} \sigma^{M}_{\manq_{\rm rc}(z_{\rm rc}) \cap \Sigma_{{\rm mc}}}(dq) + {\rm C} .
$$
As a consequence, the free energy $F^{\rm mc}$ can be computed from the generalized rigid free energy:
\begin{align}\label{eq:Frgdmc}
  F_{\rm rgd}^{{\rm mc},M}(z_{\rm rc},0) &= -\frac{1}{\beta} \ln \int_{T^\ast\! \pare{\manq_{\rm rc}(z_{\rm rc})\cap \Sigma_{{\rm mc}} }} {\rm e}^{-\beta H(q,p)} \sigma_{T^{*}(\manq_{\rm rc}(z_{\rm rc})\cap \Sigma_{{\rm mc}})}(dp\,dq), 
\end{align}
using the following formula, similar to~\eqref{eq:FFtilde}:
\begin{equation}
  \label{eq:FFtildemc}
  \begin{aligned}
    & F^{\rm mc}(z_{\rm rc}) - F_{\rm rgd}^{{\rm mc},M}(z_{\rm rc},0) \\
    & \quad = -\frac{1}{\beta} \ln \int_{T^{*}\pare{\manq_{\rm rc}(z_{\rm rc})\cap \Sigma_{{\rm mc}} }} \! \!  \frac{ \pare{ \det G_M^{\rm mc} }^{1/2}}{ \pare{\det G_M^{{\rm mc},{\rm rc}} }^{1/2}}  d\mu_{T^{*}(\manq_{\rm rc}(z_{\rm rc})\cap \Sigma_{{\rm mc}} )} + {\rm C}.
  \end{aligned}
\end{equation}
In the above, $\mu_{T^{*}(\manq_{\rm rc}(z_{\rm rc})\cap \Sigma_{{\rm mc}} )}$ is defined similarly to~\eqref{eq:canonicalconst}. The case of molecular constraints can therefore be treated within the general framework considered in this paper, the sampling of the canonical measure $\mu_{T^{*}(\manq_{\rm rc}(z_{\rm rc})\cap \Sigma_{{\rm mc}} )}$ and the computation of the rigid free energy~\eqref{eq:Frgdmc} being the problems at hand.

\begin{remark}[The overdamped Langevin case]
For systems with molecular constraints, the measure sampled by 
the overdamped Langevin dynamics~\eqref{eq:overdampconst} with $\xi=(\xi_{\rm mc},\xi_{\rm rc})$
is the marginal distribution in the position
variables of $\mu_{T^{*}(\manq_{\rm rc}(z_{\rm rc})\cap \Sigma_{{\rm
mc}} )}$ in the special case $M=\mathrm{Id}$, namely
\[
Z_{z}^{-1}  {\rm e}^{-\beta V(q)} \, 
\sigma^{\mathrm{Id}}_{\manq_{\rm rc}(z_{\rm rc})\cap \Sigma_{{\rm mc}}}(dq).
\]
The rigid free energy which is thus naturally computed with such a dynamics is
\[
F_{\rm rgd}^{{\rm mc},\mathrm{Id}}(z_{\rm rc},0) = -\frac{1}{\beta} \ln
\int_{\manq_{\rm rc}(z_{\rm rc})\cap \Sigma_{{\rm mc}} } {\rm
  e}^{-\beta V(q)} \sigma^{\mathrm{Id}}_{\manq_{\rm rc}(z_{\rm rc})\cap
\Sigma_{{\rm mc}}}(dq),
\]
and the actual free energy is thus recovered by a formula similar to~\eqref{eq:FFtildemc}, namely
\[
\begin{aligned}
  & F^{\rm mc}(z_{\rm rc}) - F_{\rm rgd}^{{\rm mc},\mathrm{Id}}(z_{\rm rc},0) \\
  & \qquad = -\frac{1}{\beta} \ln \int_{\manq_{\rm rc}(z_{\rm rc})\cap \Sigma_{{\rm mc}} } 
  \frac{ \pare{ \det G_M^{\rm mc}}^{1/2}}{ \pare{\det G_{\mathrm{Id}}^{{\rm mc},{\rm rc}} }^{1/2}}
  \, {\rm e}^{-\beta V} \, d\sigma^{\mathrm{Id}}_{\manq_{\rm rc}(z_{\rm rc})\cap \Sigma_{{\rm mc}})}
  + {\rm C}.
\end{aligned}
\]
\end{remark}

\subsection{The mean force and the Lagrange multipliers}
\label{sec:lag}

In this section, the average of the constraining force~\eqref{eq:constforce} is related to the gradient of the rigid free energy~\eqref{eq:Frgd} (or mean force). We also give a similar result
for the following generalized rigid free energy:
\begin{align}\label{eq:F_xi}
F^{\Xi}_{\rm rgd}(\zeta) =  -\frac{1}{\beta} \ln \int_{\manq_{\Xi}(\zeta)} {\rm e}^{-\beta H(q,p)} \sigma_{\manq_{\Xi}(\zeta)}(dq\, dp).
\end{align}

\begin{proposition}\label{p:meanforce}
The constraining force $f_{\rm rgd}^M: T^*\manq(z) \to \R^{\nc}$ defined in~\eqref{eq:constforce} as
\[
f_{\rm rgd}^M(q,p) = G_M^{-1}(q) \nabla \xi (q)^T M^{-1}\nabla V (q) - G_M^{-1}(q) \op{Hess}_q(\xi)(M^{-1} p, M^{-1} p),
\]
yields on average the rigid free energy derivative:
\begin{equation}
  \label{eq:meanforce_lang}
  \nabla_{z} F_{\rm rgd}^M(z) =   \int_{T^*\manq(z)} f_{\rm rgd}^M(q,p) \, \mu_{T^*\manq(z)}(dq\,dp).
\end{equation}
Moreover, for general constraints~\eqref{eq:fullconst} and the associated generalized free energy~\eqref{eq:F_xi}, the formula can be extended as follows: The generalized constraining force is
\begin{equation}\label{eq:genconstforce}
 \bmat f^\Xi \\ g^\Xi \emat := \sgram^{-1} \poisson{\Xi,H},
\end{equation}
where 
\[
\sgram(q,p) = \lbrace\Xi,\Xi \rbrace (q,p) = \nabla\Xi^T(q,p) \, J \, \nabla \Xi (q,p)
\]
is defined in~\eqref{eq:sgram}, and the rigid mean force is
\begin{equation}
  \label{eq:genconstforcevar}
  \nabla_{\zeta} F^\Xi_{\rm rgd}(\zeta)  =  \frac{1}{Z_\zeta}\int_{\manq_{\Xi}(\zeta)} \bmat f^\Xi \\ g^\Xi \emat  {\rm e}^{-\beta H} \, d\sigma_{\manq_{\Xi}(\zeta)},
\end{equation}
where $\dps Z_\zeta= \int_{\manq_{\Xi}(\zeta)} {\rm e}^{-\beta H} d\, \sigma_{\manq_{\Xi}(\zeta)}$, and $F^\Xi_{\rm rgd}(\zeta)$ is defined in~\eqref{eq:F_xi}. When $(q,p)$ verifies $p_\xi(q,p) = v_\xi(q,p)= 0$, then $g^\Xi(q,p)=0$ and $f^\Xi(q,p)= f_{\rm rgd}^M(q,p)$.
 \end{proposition}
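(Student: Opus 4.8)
The plan is to establish the general identity~\eqref{eq:genconstforcevar} by a direct differentiation of $\zeta \mapsto F^\Xi_{\rm rgd}(\zeta)$ in phase space; the first formula~\eqref{eq:meanforce_lang} is then the special case $\Xi = (\xi,v_\xi)^T$, $\zeta = (z,0)$, for which $\manq_\Xi(z,0) = T^*\manq(z)$, $\sigma_{\manq_\Xi(z,0)} = \sigma_{T^*\manq(z)}$, and, by~\eqref{eq:dbcheckcons}, $\sgram^{-1}\poisson{\Xi,H} = (f_{\rm rgd}^M,0)^T$ on $T^*\manq(z)$. Writing $Z_\zeta = \op{e}^{-\beta F^\Xi_{\rm rgd}(\zeta)} = \int_{\manq_\Xi(\zeta)} \op{e}^{-\beta H}\, d\sigma_{\manq_\Xi(\zeta)}$, so that $\nabla_\zeta F^\Xi_{\rm rgd} = -\beta^{-1} Z_\zeta^{-1}\nabla_\zeta Z_\zeta$, everything reduces to computing $\nabla_\zeta Z_\zeta$.

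The engine of the proof is the change-of-variables vector field $X_i(q,p) = J\,\nabla\Xi(q,p)\,\sgram^{-1}(q,p)\,e_i$ (for $i=1,\dots,2\nc$), smooth wherever $\sgram$ is invertible. Since $\nabla\Xi^T J\,\nabla\Xi = \sgram$ and $J,\sgram$ are skew-symmetric, one checks at once that
\[
X_i^T\,\nabla\Xi_j = \delta_{ij}, \qquad X_i^T\,\nabla H = \pare{\sgram^{-1}\poisson{\Xi,H}}_i ,
\]
the first relation meaning $X_i^T\nabla_{(q,p)}(\psi\circ\Xi) = (\partial_i\psi)\circ\Xi$ for any smooth $\psi : \R^{2\nc}\to\R$. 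Combining the co-area formula~\eqref{eq:coareasympl}, which gives $\sigma_{\manq_\Xi(\zeta)} = \abs{\det\sgram}^{1/2}\delta_{\Xi(q,p)-\zeta}(dq\,dp)$, with the conditioning relation $dq\,dp = \delta_{\Xi(q,p)-\zeta}(dq\,dp)\,d\zeta$, one gets for any test function $\psi$
\[
\int_{\R^{2\nc}} \psi(\zeta)\, Z_\zeta\, d\zeta = \int_{\R^{6N}} \psi(\Xi(q,p))\, \op{e}^{-\beta H}\,\abs{\det\sgram}^{1/2}\, dq\, dp .
\]
Replacing $\psi$ by $\partial_i\psi$, transferring the derivative onto $\psi\circ\Xi$ via the first identity, and integrating by parts in $\R^{6N}$ (legitimate thanks to the Gaussian decay of $\op{e}^{-\beta H}$ in the momenta, since $\abs{\det\sgram}^{1/2}X_i$ grows at most polynomially there, together with suitable growth assumptions on $V$ or compactness of $\manq(z)$), I obtain
\[
\partial_{\zeta_i} Z_\zeta = \int_{\manq_\Xi(\zeta)} \abs{\det\sgram}^{-1/2}\,\op{div}_{(q,p)}\!\pare{\abs{\det\sgram}^{1/2}\,\op{e}^{-\beta H}\,X_i}\, d\sigma_{\manq_\Xi(\zeta)} .
\]

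Expanding the divergence splits $\partial_{\zeta_i} Z_\zeta$ into $-\beta \int_{\manq_\Xi(\zeta)}\pare{\sgram^{-1}\poisson{\Xi,H}}_i \op{e}^{-\beta H}\, d\sigma_{\manq_\Xi(\zeta)}$, using the second identity above, plus a spurious term proportional to $\int_{\manq_\Xi(\zeta)} \abs{\det\sgram}^{-1/2}\op{e}^{-\beta H}\,\op{div}_{(q,p)}(\abs{\det\sgram}^{1/2}X_i)\, d\sigma_{\manq_\Xi(\zeta)}$. The crux — and the step I expect to demand the most care — is to show that this spurious term vanishes, which I would do by proving the pointwise identity $\op{div}_{(q,p)}(\abs{\det\sgram}^{1/2}X_i) = 0$. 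Here the symplectic structure is decisive: each Hamiltonian vector field $J\nabla\Xi_k$ is divergence-free (since $J$ is constant and skew-symmetric), so writing $X_i = \sum_k (\sgram^{-1})_{ki}\,J\nabla\Xi_k$ expresses $\op{div}_{(q,p)}X_i$ through the \emph{unconstrained} Poisson brackets $\poisson{\Xi_k,(\sgram^{-1})_{ki}}$, and $X_i^T\nabla\ln\abs{\det\sgram}^{1/2}$ through the traces $\frac12(\sgram^{-1})_{ki}\op{tr}(\sgram^{-1}\poisson{\Xi_k,\sgram})$. Using $\poisson{\Xi_k,\sgram^{-1}} = -\sgram^{-1}\poisson{\Xi_k,\sgram}\sgram^{-1}$, the identities $\sgram_{lm} = \poisson{\Xi_l,\Xi_m}$ and $\poisson{\Xi_k,\sgram_{lm}} = -\poisson{\Xi_k,\sgram_{ml}}$, the skew-symmetry of $\sgram^{-1}$, and Jacobi's identity $\poisson{\Xi_k,\poisson{\Xi_l,\Xi_m}} + \poisson{\Xi_l,\poisson{\Xi_m,\Xi_k}} + \poisson{\Xi_m,\poisson{\Xi_k,\Xi_l}} = 0$, a short index manipulation shows the two contributions cancel (this is the same mechanism underlying the divergence theorem in phase space, Proposition~\ref{p:divsympl}).

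Collecting terms, $\partial_{\zeta_i} Z_\zeta = -\beta \int_{\manq_\Xi(\zeta)}\pare{\sgram^{-1}\poisson{\Xi,H}}_i\op{e}^{-\beta H}\, d\sigma_{\manq_\Xi(\zeta)}$, hence $\nabla_\zeta F^\Xi_{\rm rgd}(\zeta) = Z_\zeta^{-1}\int_{\manq_\Xi(\zeta)}\sgram^{-1}\poisson{\Xi,H}\,\op{e}^{-\beta H}\, d\sigma_{\manq_\Xi(\zeta)}$, which is~\eqref{eq:genconstforcevar}. Finally, in the case $\Xi = (\xi,v_\xi)^T$ and $\zeta = (z,0)$ one has $Z_{z,0}^{-1}\op{e}^{-\beta H}\sigma_{T^*\manq(z)} = \mu_{T^*\manq(z)}$, and by~\eqref{eq:dbcheckcons} the last $\nc$ components of $\sgram^{-1}\poisson{\Xi,H}$ vanish on $T^*\manq(z)$ while the first $\nc$ equal $f_{\rm rgd}^M$; reading off the first $\nc$ components of~\eqref{eq:genconstforcevar} yields~\eqref{eq:meanforce_lang} and also the last assertion of the proposition.
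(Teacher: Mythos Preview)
Your proof is correct and follows essentially the same route as the paper. The paper packages the two key steps into appendix lemmas: your differentiation formula $\partial_{\zeta_i}\int_{\manq_\Xi(\zeta)}\varphi\,d\sigma_{\manq_\Xi(\zeta)} = \int_{\manq_\Xi(\zeta)}(\sgram^{-1}\poisson{\Xi,\varphi})_i\,d\sigma_{\manq_\Xi(\zeta)}$ is Lemma~\ref{l:consint} (applied with $\varphi = \rme^{-\beta H}$), and your pointwise identity $\op{div}_{(q,p)}(\abs{\det\sgram}^{1/2}X_i)=0$ is exactly Lemma~\ref{l:Hnull}, proved with the same ingredients you list (Jacobi's identity, skew-symmetry of $\sgram^{-1}$, and the matrix derivative rules $\partial\ln\abs{\det\sgram}=\op{tr}(\sgram^{-1}\partial\sgram)$, $\partial\sgram^{-1}=-\sgram^{-1}(\partial\sgram)\sgram^{-1}$); the reduction to the tangential case via~\eqref{eq:dbcheckcons} is also the same.
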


\begin{proof}
Formulas~\eqref{eq:genconstforce} and~\eqref{eq:genconstforcevar} are obtained directly by replacing~$\ph$ by~${\rm e}^{-\beta H}$ in Lemma~\ref{l:consint} (see the Appendix). The fact that $(f^\Xi(q,p),g^\Xi(q,p))= (f_{\rm rgd}^M(q,p),0)$ in the tangential case (namely when $p_\xi(q,p) = v_\xi(q,p)= 0$) is a consequence of~\eqref{eq:dbcheckcons}.
 \end{proof}

The following lemma gives a momentum-averaged version of the constraining force (a similar formula exists in the overdamped case, see Equations~(4.8)-(4.9) in~\cite{ciccotti-lelievre-vanden-einjden-08}, for example).

\begin{lemma}
  \label{l:meanforce_av} 
  The rigid mean force~\eqref{eq:meanforce_lang} can be rewritten as:
   \begin{equation}
     \label{eq:avfbar}
     \nabla_{z} F_{\rm rgd}(z) =  \int_{T^*\manq(z)}  \overline{f}_{\rm rgd}^M(q) \, \mu_{T^*\manq(z)}(dq\,dp),
   \end{equation}
   where
   \begin{equation}\label{eq:constforceav}
     \overline{f}_{\rm rgd}^M(q) = G_M^{-1}(q) \nabla \xi (q)^T M^{-1}\nabla V (q) - \beta^{-1} G_M^{-1}(q) \op{Hess}_q(\xi) : \pare{M^{-1}P_M(q)}.
\end{equation}
\end{lemma}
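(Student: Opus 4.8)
The plan is to start from the expression \eqref{eq:meanforce_lang} of Proposition~\ref{p:meanforce} and to carry out the momentum integration explicitly. Using the product decomposition \eqref{eq:surfacemeas} of the phase space measure $\sigma_{T^\ast\manq(z)} = \sigma_{\manq_{\xi,v_\xi}(z,0)}$ together with the separable form of the Hamiltonian, the canonical measure factorizes as
\[
\mu_{T^\ast\manq(z)}(dq\,dp) = \frac{{\rm e}^{-\beta V(q)}}{Z_z}\,\sigma^M_{\manq(z)}(dq)\,\kappa^{M^{-1}}_{T^\ast_q \manq(z)}(dp),
\]
where $\kappa^{M^{-1}}_{T^\ast_q \manq(z)}$ is the Gaussian kinetic measure \eqref{eq:kinetic_part}; the point is that the momentum normalization $\int_{T^\ast_q\manq(z)} {\rm e}^{-\beta p^T M^{-1}p/2}\,\sigma^{M^{-1}}_{T^\ast_q\manq(z)}(dp)$ is independent of $q$, as already noted after \eqref{eq:kinetic_part}. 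Since the first term $G_M^{-1}(q)\nabla\xi(q)^T M^{-1}\nabla V(q)$ of the constraining force \eqref{eq:constforce} does not depend on $p$, it only remains to average the Hessian term $G_M^{-1}(q)\op{Hess}_q(\xi)(M^{-1}p,M^{-1}p)$ against $\kappa^{M^{-1}}_{T^\ast_q \manq(z)}$.

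I would then establish the second-moment identity for this (degenerate) Gaussian, namely
\[
\int_{T^\ast_q\manq(z)} p\,p^T\,\kappa^{M^{-1}}_{T^\ast_q \manq(z)}(dp) = \frac{1}{\beta}\,P_M(q)\,M,
\]
which can be obtained by introducing an $M^{-1}$-orthonormal basis of $T^\ast_q\manq(z)$, in which $\kappa^{M^{-1}}_{T^\ast_q \manq(z)}$ becomes a standard isotropic Gaussian of variance $\beta^{-1}$, or equivalently by writing $p = P_M(q)\tilde p$ with $\tilde p$ centered Gaussian of covariance $\beta^{-1}M$ on $\R^{3N}$ and using $P_M M P_M^T = P_M M$, itself a consequence of the orthogonality property $M^{-1}P_M(q) = P_M(q)^T M^{-1}$ recorded in Section~\ref{sec:const_notation}. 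Taking the $i$-th component of the Hessian term and writing $p^T M^{-1}\nabla^2\xi_i(q)M^{-1}p = \tr\big(M^{-1}\nabla^2\xi_i(q)M^{-1}\,p\,p^T\big)$ then gives
\[
\int_{T^\ast_q\manq(z)} p^T M^{-1}\nabla^2\xi_i(q)M^{-1}p\,\kappa^{M^{-1}}_{T^\ast_q \manq(z)}(dp) = \frac{1}{\beta}\,\tr\big(M^{-1}\nabla^2\xi_i(q)M^{-1}P_M(q)M\big) = \frac{1}{\beta}\,\nabla^2\xi_i(q):\big(M^{-1}P_M(q)\big),
\]
where the last equality uses $M^{-1}P_M(q)M = P_M(q)^T$, the symmetry of $\nabla^2\xi_i(q)$, and the definition $A:B = \tr(A^T B)$. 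In vector form this means $\int_{T^\ast_q\manq(z)} \op{Hess}_q(\xi)(M^{-1}p,M^{-1}p)\,\kappa^{M^{-1}}_{T^\ast_q \manq(z)}(dp) = \beta^{-1}\op{Hess}_q(\xi):(M^{-1}P_M(q))$, so that the $p$-average of $f_{\rm rgd}^M(q,p)$ is precisely $\overline f_{\rm rgd}^M(q)$ defined in \eqref{eq:constforceav}.

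Substituting back into \eqref{eq:meanforce_lang} and using the factorization above yields $\nabla_z F_{\rm rgd}^M(z) = \int_{T^\ast\manq(z)} \overline f_{\rm rgd}^M(q)\,\mu_{T^\ast\manq(z)}(dq\,dp)$, which is \eqref{eq:avfbar}, the remaining $p$-integration being harmless since $\overline f_{\rm rgd}^M$ depends only on $q$. The only mildly delicate step is the Gaussian second-moment identity on the subspace $T^\ast_q\manq(z)$: one must keep track of the fact that the surface measure $\sigma^{M^{-1}}_{T^\ast_q\manq(z)}$ is the one induced by the $M^{-1}$ scalar product, so that in $M^{-1}$-orthonormal coordinates the conditional law of $p$ is genuinely isotropic and the covariance operator is $\beta^{-1}P_M(q)M$ rather than something distorted by $M$; everything else is bookkeeping with the projector identities $P_M^2 = P_M$ and $M^{-1}P_M = P_M^T M^{-1}$ already available.
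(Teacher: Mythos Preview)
Your proof is correct and follows essentially the same route as the paper: both reduce the statement to averaging $f_{\rm rgd}^M(q,p)$ against the conditional Gaussian $\kappa^{M^{-1}}_{T^\ast_q\manq(z)}$, and both hinge on the covariance identity $\int_{T^\ast_q\manq(z)} p\,p^T\,\kappa^{M^{-1}}_{T^\ast_q\manq(z)}(dp) = \beta^{-1}P_M(q)M$ to handle the Hessian term. The only cosmetic difference is that the paper derives this covariance by evaluating $p_1^T M^{-1}\mathcal{C}M^{-1}p_2$ directly via the projector identity, whereas you suggest an $M^{-1}$-orthonormal basis or the representation $p = P_M(q)\tilde p$; these are equivalent justifications of the same formula.
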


\begin{proof}
  Consider the Gaussian distribution ${\kappa}^{M^{-1}}_{T^*_q\manq(z)}(dp)$ defined in~\eqref{eq:kinetic_part}:
  \[
  \kappa_{T^\ast_q \manq(z)}^{M^{-1}}(dp) =  
  \left(\frac{\beta}{2\pi}\right)^{(3N-\nc)/2} \exp\left(-\beta \frac{p^T M^{-1} p}{ 2}\right)
  \, \sigma^{M^{-1}}_{T^{\ast}_q \manq(z)}(dp),
  \]
  which is the marginal distribution in the momentum variable of the canonical distribution $\mu_{T^*\manq(z)}(dq \, dp)$, conditioned by a given $q\in \manq(z)$. Proving Lemma~\ref{l:meanforce_av} amounts to showing that the average of the constraining force $f_{\rm rgd}^M$ with respect to ${\kappa}^{M^{-1}}_{T^\ast  _q\manq(z)}(dp)$ yields $\overline{f}_{\rm rgd}^M$:
\[
\overline{f}_{\rm rgd}^M(q) = \int_{T^\ast  _q\manq(z)}  
f_{\rm rgd}^M(q,p) \, {\kappa}^{M^{-1}}_{T^\ast  _q\manq(z)}(dp).
\]
First, we compute the covariance matrix
\[
\mathcal{C} := \op{cov}\Big(\kappa^{M^{-1}}_{T^\ast  _q\manq(z)}\Big)
\]
of the Gaussian distribution $\kappa^{M^{-1}}_{T^\ast  _q\manq(z)}(dp)$. Since $\kappa^{M^{-1}}_{T^\ast  _q\manq(z)}(dp)$ is a centered Gaussian distribution, $\mathcal{C}$ satisfies, for all $p_1,p_2 \in \R^{3N}$,
\begin{align*}
  p_1^T M^{-1} \mathcal{C} M^{-1} p_2  
&:= \int_{T^\ast  _q\manq(z)} \pare{ p^TM^{-1} p_1    }  \pare{ p^TM^{-1} p_2    }   \kappa^{M^{-1}}_{T^\ast  _q\manq(z)}(dp) \\
&= \int_{T^\ast  _q\manq(z)} \pare{ p^TM^{-1} P_M(q) p_1    }  \pare{ p^TM^{-1} P_M(q) p_2    }   \kappa^{M^{-1}}_{T^\ast  _q\manq(z)}(dp).
   \end{align*}
Denoting $\langle p_1,p_2\rangle_{M^{-1}} = p_1^T M^{-1} p_2$, this yields
\begin{align*}
 p_1^T M^{-1} \mathcal{C} M^{-1} p_2  
&= \int_{T^\ast  _q\manq(z)} \langle p, P_M(q) p_1 \rangle_{M^{-1}}\langle p, P_M(q) p_2 \rangle_{M^{-1}} \frac{\dps {\rm e}^{-\frac{\beta  }{2} \langle p,p\rangle_{M^{-1}}}}{\dps \left(2\pi/\beta\right)^{(3N-\nc)/2}} 
\, \sigma^{M^{-1}}_{T^{\ast }_q \manq(z)}(dp)   \\
&=\beta^{-1} \langle P_M(q) p_1  ,  P_M(q) p_2  \rangle_{M^{-1}},
   \end{align*}
so that 
\begin{equation}
  \label{eq:cov_mat}
  \mathcal{C} = \beta^{-1} P_M(q) M.
\end{equation}
This gives
$$
 \int_{T^\ast  _q\manq(z)} \op{Hess}_q(\xi)(M^{-1} p, M^{-1} p) \, \kappa^{M^{-1}}_{T^\ast  _q\manq(z)}(dp) =   \beta^{-1} \op{Hess}_q(\xi) : \pare{M^{-1}P_M(q)} .
$$
Averaging~\eqref{eq:constforce} over momenta thus leads to the desired result.
 \end{proof}
  
Free energy derivatives can also be obtained from the Lagrange multipliers of the Langevin constrained process~$\EL$. This is very useful in practice since it avoids the computation of second order derivatives of the reaction coordinates which appear in the expressions of $f^M_{\rm rgd}$ and $\overline{f}^M_{\rm rgd}$ (see the discussion at the beginning of Section~\ref{sec:avg_Lagrange}):

\begin{theorem} Consider the rigidly constrained Langevin process solution of~$\EL$, with associated Lagrange multipliers $\lambda_t$. Assume that $\nabla \xi$, $G_M^{-1}$ and $\sigma$ are bounded functions on $\manq(z)$, and $\gamma_P$ is strictly positive on $T_q^*\Sigma(z)$ (in the sense of symmetric matrices). Then,
the almost sure convergence~\eqref{eq:avlagr} claimed in the introduction holds:
\begin{equation}
  \label{eq:avg_lagr_for_interpretation}
  \lim_{T \to + \infty}\frac{1}{T}\int_0^T d \lambda_t  = \nabla_{z} F_{\rm rgd}^M(z) \qquad {\rm a.s}
\end{equation}
A similar result holds for the `Hamiltonian part' of the Lagrange multipliers, defined by:
\begin{equation}
  \label{eq:lagrtilde}
  d \lambda_t^{\rm ham}  =  d \lambda_t + G_M^{-1} \nabla\xi(q_t) ^T  M^{-1} \pare{-\gamma(q_t)   M^{-1} p_t \, dt  +\sigma(q_t) d W_t } 
= f^M_{\rm rgd}(q_t,p_t) \, dt.
\end{equation}
Indeed, the following almost sure convergence holds:
\begin{equation}
  \label{eq:avlagr_tilde}
  \lim_{T \to + \infty}\frac{1}{T}\int_0^T d \lambda_t^{\rm ham} =  
  \nabla_{z} F_{\rm rgd}^M(z) \qquad {\rm a.s}.
\end{equation}
\end{theorem}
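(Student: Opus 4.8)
The plan is to start from the closed-form expression~\eqref{eq:first_expression_lagrange} of the Lagrange multipliers of~$\EL$ and to separate the running average into an ergodic part and a martingale remainder. Writing $(q_t,p_t)$ for the solution of~$\EL$ (which stays on $T^*\manq(z)$), one has
\[
\begin{aligned}
\frac1T\int_0^T d\lambda_t &= \frac1T\int_0^T f_{\rm rgd}^M(q_t,p_t)\,dt
+ \frac1T\int_0^T G_M^{-1}(q_t)\,\nabla\xi(q_t)^T M^{-1}\gamma(q_t)M^{-1}p_t\,dt \\
&\quad - \frac1T\int_0^T G_M^{-1}(q_t)\,\nabla\xi(q_t)^T M^{-1}\sigma(q_t)\,dW_t,
\end{aligned}
\]
and I will denote the three terms by $A_T$, $B_T$ and $-N_T$.

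First I would handle the martingale $N_T$. Since $\nabla\xi$, $G_M^{-1}$ and $\sigma$ are assumed bounded on $\manq(z)$ and the trajectory remains on $T^*\manq(z)$, the integrand of $N_T$ is uniformly bounded, so its bracket obeys $\langle N\rangle_T \le C\,T$ for a deterministic $C$. Applying the strong law of large numbers for continuous local martingales componentwise then gives $N_T/T\to0$ a.s.: on $\set{\langle N\rangle_\infty<\infty}$, $N_T$ converges a.s.\ to a finite limit, hence $N_T/T\to0$; on $\set{\langle N\rangle_\infty=\infty}$, $N_T/\langle N\rangle_T\to0$ a.s.\ while $\langle N\rangle_T/T\le C$, so again $N_T/T\to0$.

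Next, $A_T$ and $B_T$ are Lebesgue-time averages of smooth observables with at most quadratic growth in $p$, hence $\mu_{T^*\manq(z)}$-integrable because $\mu_{T^*\manq(z)}$ has Gaussian tails in the momentum. The ergodicity asserted in Proposition~\ref{p:revcons}, valid here since $\gamma_P=P_M\gamma P_M^T$ is strictly positive on each $T^*_q\manq(z)$, rests on hypoellipticity and yields the Birkhoff ergodic theorem for all such integrable observables; therefore $A_T\to\int_{T^*\manq(z)}f_{\rm rgd}^M\,d\mu_{T^*\manq(z)}$ and $B_T\to\int_{T^*\manq(z)}G_M^{-1}\nabla\xi^T M^{-1}\gamma M^{-1}p\,d\mu_{T^*\manq(z)}$ a.s. The second limit vanishes: by the product decomposition~\eqref{eq:product_decomposition_pz}, $\mu_{T^*\manq(z)}$ disintegrates over $q\in\manq(z)$ with conditional law the centered Gaussian $\kappa^{M^{-1}}_{T^*_q\manq(z)}$ of~\eqref{eq:kinetic_part}, and the integrand is linear in $p$, so it averages to zero for every fixed $q$. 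For the first limit, Proposition~\ref{p:meanforce} (equation~\eqref{eq:meanforce_lang}) gives exactly $\int_{T^*\manq(z)}f_{\rm rgd}^M\,d\mu_{T^*\manq(z)}=\nabla_z F_{\rm rgd}^M(z)$. Combining the three contributions yields~\eqref{eq:avg_lagr_for_interpretation}.

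Finally, the ``Hamiltonian part'' is immediate: by its definition~\eqref{eq:lagrtilde}, $d\lambda_t^{\rm ham}=f_{\rm rgd}^M(q_t,p_t)\,dt$, so $\frac1T\int_0^T d\lambda_t^{\rm ham}$ is precisely the term $A_T$, and the same ergodic argument together with Proposition~\ref{p:meanforce} gives~\eqref{eq:avlagr_tilde}. The core identity $\int f_{\rm rgd}^M\,d\mu_{T^*\manq(z)}=\nabla_z F_{\rm rgd}^M$ is therefore not the difficulty — it is already Proposition~\ref{p:meanforce} — and the main obstacle lies in the soft analytic points: global-in-time well-posedness (non-explosion) of the constrained SDE, the extension of the ergodic theorem to the unbounded (polynomially $p$-growing) observables $f_{\rm rgd}^M$ and $G_M^{-1}\nabla\xi^T M^{-1}\gamma M^{-1}p$, and the martingale law of large numbers; this is where the boundedness of $\nabla\xi$, $G_M^{-1}$, $\sigma$ and suitable growth control on $V$ enter.
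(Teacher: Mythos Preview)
Your proof is correct and follows essentially the same approach as the paper: decompose $d\lambda_t$ into the constraining force $f_{\rm rgd}^M\,dt$, the dissipation term, and the martingale term; handle the first two by ergodicity (Proposition~\ref{p:revcons}) together with Proposition~\ref{p:meanforce} and the fact that the momentum marginal of $\mu_{T^*\manq(z)}$ is centered Gaussian; and kill the martingale term by a law of large numbers using the bounded quadratic variation growth. The paper's proof is slightly terser (it cites a specific theorem for the martingale part rather than arguing the two cases $\langle N\rangle_\infty<\infty$ and $=\infty$ separately), but the structure and the ingredients are identical.
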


The estimator based on~\eqref{eq:avlagr_tilde} has a smaller variance than the estimator
based on~\eqref{eq:avlagr} (or~\eqref{eq:avg_lagr_for_interpretation} above)
since only the bounded variation part is retained, and the
martingale part due to the Brownian increments and the dissipation term are subtracted
out. Similar results on variance reduction where obtained in the overdamped case
in~\cite{ciccotti-lelievre-vanden-einjden-08}.

\begin{proof} 
  Recall the expression~\eqref{eq:first_expression_lagrange} of the Lagrange multipliers, which can be decomposed as the sum of the constraining force, a dissipation term and a martingale (fluctuation) term:
  \begin{align}
    d \lambda_t   
    &= f_{\rm rgd}^M(q_t,p_t) \, dt + G_M^{-1} \nabla\xi (q_t)^T M^{-1}\pare{ \gamma(q_t) M^{-1} p_t  \, dt  -\sigma(q_t) d W_t }.
    \label{eq:lagr}
\end{align}
The result follows from three facts. First, the process is ergodic with respect to the equilibrium distribution $\mu_{T^*\manq(z)}(dq \, dp)$ and averaging $f_{\rm rgd}^M$ yields the rigid free energy derivative in view of Proposition~\ref{p:meanforce}. This already shows~\eqref{eq:avlagr_tilde}.

Second, the Gaussian distribution of $\mu_{T^*\manq(z)}(dq \, dp)$ with respect to momentum variables is centered, which yields:
\[
 \int_{T^*\manq(z)}   G_M^{-1}(q) \nabla\xi(q)^TM^{-1}\gamma(q) M^{-1}\,  p \, \mu_{T^*\manq(z)}(dq \, dp) = 0.
\]
Third, the variance of the martingale term can be uniformly bounded as
\[
\E\abs{\frac{1}{\sqrt{T}}\int_0^T G_M^{-1}(q_t)\nabla\xi^T(q_t)M^{-1}\sigma(q_t) d W_t}^2 
\leq \norm{\op{Tr}(G_M^{-1}\nabla\xi^TM^{-1} \sigma\sigma^T M^{-1} \nabla\xi G_M^{-1} ) }_{\infty}.
\]
This implies the almost sure convergence 
\[
\dps \lim_{T \to +\infty} \frac{1}{T}\int_0^T G_M^{-1}(q_t)\nabla\xi (q_t) ^TM^{-1}\sigma(q_t) d W_t 
= 0,
\]
see for example Theorem 1.3.15 in~\cite{duflo-97}.
\end{proof}

The fact that averaging the Lagrange multiplier in~\eqref{eq:avg_lagr_for_interpretation} indeed yields the mean force may not be intuitive. This is actually very much related to the cost interpretation of the Lagrange multipliers in optimization, see~\cite[Remark~3.29]{LelRouStoBook}.

\subsection{Numerical discretization of the mean force}
\label{sec:num_ti}

Estimates of the mean force  based on either~\eqref{eq:meanforce_lang}, \eqref{eq:avfbar} or~\eqref{eq:avlagr_tilde} can be obtained.

\subsubsection{Averaging local rigid mean forces}

Free energy derivatives can be computed by averaging $\overline{f}_{\rm rgd}^M(q)$ or $f_{\rm rgd}^M(q,p)$ with respect to the distribution $\mu_{T^*\manq(z)}(dq \, dp)$, for instance using the estimators: 
\[
\lim_{K \to +\infty} \frac{1}{K}\sum_{k=0}^{K-1} \overline{f}_{\rm rgd}^M(q^k)
\]
or
\[
\lim_{K \to +\infty} \frac{1}{K}\sum_{k=0}^{K-1} f_{\rm rgd}^M(q^k,p^k).
\]
The functions $\overline{f}_{\rm rgd}^M(q)$ and $f_{\rm rgd}^M(q,p)$ may thus be called ``rigid local mean forces''. 
Note that using the momentum-averaged local mean force $\overline{f}_{\rm rgd}^M$ instead of the original $f_{\rm rgd}^M$ reduces the variance
since the fluctuations of the momentum variable have been averaged out analytically.
Table~\ref{tab:comparison_variance_Langevin_TI} below confirms this analysis, although the variance reduction appears to be small in our specific numerical experiment.

Assuming the convergence of the constrained splitting scheme~\eqref{eq:flucdiss1}-\eqref{eq:Verletconst}-\eqref{eq:flucdiss2} in the probability distribution sense\footnote{This convergence is also called weak convergence in probability theory. The proof of convergence in the present case may be carried out using classical results, see {\it e.g.} \cite{EthKur86}.} to the limiting Langevin process~$\EL$, the convergence of these estimators to $\nabla_z F^M_{\rm rgd}(z)$ is ensured, when taking first the limit $\dt \to 0$ with $K=N_{\dt}$ such that $N_{\dt} \dt \to T$, and then $T \to \infty$.
 
\subsubsection{Averaging the Lagrange multipliers}
\label{sec:avg_Lagrange}

Free energy derivatives can also be computed using the Lagrange multipliers of a Langevin constrained process according to~\eqref{eq:avlagr} or~\eqref{eq:avlagr_tilde}. This technique avoids the possibly cumbersome computation of second order derivatives $\op{Hess}_q(\xi) $ of the reaction coordinate, which appear in the expressions of $f^M_{\rm rgd}$ or $\overline{f}_{\rm rgd}^M$. Besides, the Lagrange multipliers are needed anyway for the numerical integration of the dynamics.

The computation can be performed as before with a longtime simulation of the splitting scheme~\eqref{eq:flucdiss1}-\eqref{eq:Verletconst}-\eqref{eq:flucdiss2} discretizing the Langevin process with constraints. The following approximation formula can for instance be used:
\begin{equation}\label{eq:freeestim}
 \nabla_z F_{\rm rgd}^M(z) \simeq \frac{1}{K\dt }\sum_{k=0}^{K-1} (\lambda^{k+1/2} +\lambda^{k+3/4}) 
\end{equation}
where $(\lambda^{k+1/2},\lambda^{k+3/4})$ are the Lagrange multipliers in the Hamiltonian part~\eqref{eq:Verletconst}. The consistency of this estimator is given by the following proposition.

\begin{proposition}[Consistency]\label{p:consist} The approximation formula~\eqref{eq:freeestim} is consistent. More precisely, the Lagrange multipliers $(\lambda^{n+1/2},\lambda^{n+3/4})$ in~\eqref{eq:flucdiss1}-\eqref{eq:Verletconst}-\eqref{eq:flucdiss2} are both equivalent when $\dt \to 0$ to the constraining force defined in~\eqref{eq:constforce}:
  $$
  \begin{cases}
    \dps \lambda^{n+1/2}   =    f_{\rm rgd}^M(q^n,p^{n+1/2}) \frac{\dt}{2} +{\rm O }(\dt^2), \\
    \dps \lambda^{n+3/4}   =    f_{\rm rgd}^M(q^{n+1},p^{n+1/2}) \frac{\dt}{2} + {\rm O } (\dt^2) .
  \end{cases}
  $$
  Moreover, the following second order consistency holds for the sum of the Lagrange multipliers:
  \begin{equation}
    \label{eq:lag_order_var_other}
    \lambda^{n+1/2}  +  \lambda^{n+3/4}= \frac{\dt}{2} \Big( f_{\rm rgd}^M(q^n,p^{n+1/2}) +  
    f_{\rm rgd}^M(q^{n+1},p^{n+1/2}) \Big)  +  { \rm O} (\dt^3), 
  \end{equation}
together with the variant:
\begin{equation}
\label{eq:lag_order_var}
  \dps \lambda^{n+1/2}  +  \lambda^{n+3/4}  =  \frac{\dt}{2} \Big(
  f_{\rm rgd}^M(q^n,p^{n+1/4}) + f_{\rm rgd}^M(q^{n+1},p^{n+3/4}) \Big) +  { \rm O} (\dt^3).
\end{equation}
\end{proposition}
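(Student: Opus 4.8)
The plan is to expand each Lagrange multiplier as a Taylor series in $\dt$ around $\dt = 0$, using the explicit linear systems defining $\lambda^{n+1/2}$ and $\lambda^{n+3/4}$ in the RATTLE step~\eqref{eq:Verletconst}. First I would write down these systems explicitly. The position constraint $(C_q)$ reads $\xi\big(q^n + \dt M^{-1}(p^{n+1/4} - \frac{\dt}{2}\nabla V(q^n) + \nabla\xi(q^n)\lambda^{n+1/2})\big) = z$, and since $\xi(q^n) = z$, a Taylor expansion of $\xi$ to second order gives, after projecting with $G_M^{-1}(q^n)\nabla\xi(q^n)^T M^{-1}$, a fixed-point relation for $\lambda^{n+1/2}$. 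Solving it iteratively (it is a contraction for small $\dt$) yields $\lambda^{n+1/2} = \frac{\dt}{2}f_{\rm rgd}^M(q^n,p^{n+1/2}) + {\rm O}(\dt^2)$, where the Hessian term in $f_{\rm rgd}^M$ comes precisely from the second-order term in the Taylor expansion of $\xi$ and the potential-gradient term from the $-\frac{\dt}{2}\nabla V$ contribution; one should be careful that the natural velocity appearing here is $M^{-1}p^{n+1/2} = (q^{n+1}-q^n)/\dt$, which is why $p^{n+1/2}$ (rather than $p^{n+1/4}$) is the right argument. The velocity constraint $\nabla\xi(q^{n+1})^T M^{-1}p^{n+3/4} = 0$ is linear in $\lambda^{n+3/4}$ and solved directly: $\lambda^{n+3/4} = -\frac{\dt}{2}G_M^{-1}(q^{n+1})\big[\nabla\xi(q^{n+1})^T M^{-1}\big(\frac{2}{\dt}p^{n+1/2} - \nabla V(q^{n+1})\big)\big] \cdot \frac{\dt}{2} + \dots$; expanding $\nabla\xi(q^{n+1})^T M^{-1}p^{n+1/2}$ around $q^{n+1}=q^n$ (using that $\nabla\xi(q^n)^T M^{-1}p^{n+1/2} = G_M(q^n)\lambda^{n+1/2} + {\rm O}(\dt^2)$ is itself ${\rm O}(\dt)$, together with the Hessian correction from $q^{n+1}-q^n = \dt M^{-1}p^{n+1/2}$) produces the Hessian term and gives $\lambda^{n+3/4} = \frac{\dt}{2}f_{\rm rgd}^M(q^{n+1},p^{n+1/2}) + {\rm O}(\dt^2)$.

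For the second-order statement~\eqref{eq:lag_order_var_other}, the point is that the leading ${\rm O}(\dt^2)$ error terms in $\lambda^{n+1/2}$ and $\lambda^{n+3/4}$ cancel when summed, so that only an ${\rm O}(\dt^3)$ remainder survives. I would establish this by carrying the Taylor expansions one order further, tracking the ${\rm O}(\dt^2)$ coefficients. The mechanism behind the cancellation is the time-reversibility and symplecticity of the RATTLE scheme noted in Section~\ref{sec:numHamiltonconst}: the half-step from $q^n$ to $q^{n+1}$ and the conjugate half-step are mirror images under $\dt \mapsto -\dt$ combined with momentum flip, and $\lambda^{n+1/2}+\lambda^{n+3/4}$ is the total impulse along $\nabla\xi$ over the full Verlet step, which is an even function of $\dt$ up to the relevant order. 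A cleaner route may be to use the exact identities~\eqref{eq:lag_od_1} and~\eqref{eq:lag_od_3} (valid, with $G$ replaced by $G_M$, even outside the overdamped scaling) expressing $2\lambda^{n+1/2}$ and $2\lambda^{n+3/4}$ in terms of $\nabla\xi^T(q^{n+1}-q^n)$ and $\dt\nabla\xi^T\nabla V$ evaluated respectively at $q^n$ and $q^{n+1}$: adding them, the terms $\nabla\xi(q^n)^T(q^{n+1}-q^n)$ and $\nabla\xi(q^{n+1})^T(q^n-q^{n+1})$ combine into $-(q^{n+1}-q^n)^T(\nabla\xi(q^{n+1})-\nabla\xi(q^n))$, which by a Taylor expansion equals $-\dt^2\,{\rm Hess}(\xi)(M^{-1}p^{n+1/2},M^{-1}p^{n+1/2}) + {\rm O}(\dt^3)$ (an exactly ${\rm O}(\dt^2)$ quantity with no ${\rm O}(\dt^2)$-level error), while the two gradient terms sum to $\dt\big(\nabla\xi(q^n)^T\nabla V(q^n) + \nabla\xi(q^{n+1})^T\nabla V(q^{n+1})\big)$; dividing by $2$ and inserting $G_M^{-1}$ at the appropriate points, with the ${\rm O}(\dt)$ discrepancy between $G_M^{-1}(q^n)$ and $G_M^{-1}(q^{n+1})$ multiplying an already-${\rm O}(\dt^2)$ bracket and hence producing only ${\rm O}(\dt^3)$, yields~\eqref{eq:lag_order_var_other}.

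The variant~\eqref{eq:lag_order_var} then follows by replacing $p^{n+1/2}$ with $p^{n+1/4}$ in the first summand and with $p^{n+3/4}$ in the second. From~\eqref{eq:flucdiss1}–\eqref{eq:Verletconst}–\eqref{eq:flucdiss2} one has $p^{n+1/2} - p^{n+1/4} = -\frac{\dt}{2}\nabla V(q^n) + \nabla\xi(q^n)\lambda^{n+1/2} = {\rm O}(\dt)$ and $p^{n+3/4} - p^{n+1/2} = -\frac{\dt}{2}\nabla V(q^{n+1}) + \nabla\xi(q^{n+1})\lambda^{n+3/4} = {\rm O}(\dt)$; since $f_{\rm rgd}^M$ is smooth and enters~\eqref{eq:lag_order_var_other} already multiplied by $\frac{\dt}{2}$, an ${\rm O}(\dt)$ change in its momentum argument produces an ${\rm O}(\dt^2)$ change in each term. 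This would only give ${\rm O}(\dt^2)$ accuracy naively, so the real content is again a cancellation: the momentum substitutions in the two summands are conjugate under the $\dt\mapsto-\dt$ symmetry of the Verlet step, so their ${\rm O}(\dt^2)$ contributions to the sum cancel, leaving ${\rm O}(\dt^3)$. I expect \textbf{this cancellation argument — making precise why the ${\rm O}(\dt^2)$ terms vanish in the sum, both for~\eqref{eq:lag_order_var_other} and for~\eqref{eq:lag_order_var} — to be the main obstacle}; the cleanest way to handle it is probably to exploit the reversibility of~\eqref{eq:Verletconst} directly (the map $(q^n,p^{n+1/4})\mapsto(q^{n+1},-p^{n+3/4})$ composed with itself is the identity) rather than to grind through the ${\rm O}(\dt^2)$ coefficients by brute force, together with the symmetric roles played by $q^n$ and $q^{n+1}$ in the RATTLE step.
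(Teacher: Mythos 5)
Your overall strategy --- Taylor-expand the position constraint to identify $\lambda^{n+1/2}$, use the reversibility of the RATTLE step (the map sending $(q^{n+1},-p^{n+3/4})$ back to $(q^n,-p^{n+1/4})$) to treat $\lambda^{n+3/4}$, and then argue that the ${\rm O}(\dt^2)$ terms cancel in the sum --- is essentially the paper's. Two remarks on the differences. First, the mechanism behind the ${\rm O}(\dt^3)$ bound in~\eqref{eq:lag_order_var_other} is simpler in the paper than either of your two proposed routes: pushing the expansion of $\xi(q^n+\dt M^{-1}p^{n+1/2})$ one order further gives
\[
\lambda^{n+1/2} = \frac{\dt}{2}\, f_{\rm rgd}^M(q^n,p^{n+1/2}) - \frac{\dt^2}{6}\,\alpha^{n+1/2}(q^n) + {\rm O}(\dt^3),
\qquad
\alpha^{n+1/2}(q) = G_M^{-1}(q)\,{\rm D}^3_{q}\xi\big(M^{-1}p^{n+1/2},M^{-1}p^{n+1/2},M^{-1}p^{n+1/2}\big),
\]
and, by the $\dt\mapsto-\dt$ symmetry you invoke, the same expansion with $+\frac{\dt^2}{6}\alpha^{n+1/2}(q^{n+1})$ for $\lambda^{n+3/4}$; the $\dt^2$ contributions to the sum are then $\frac{\dt^2}{6}\big(\alpha^{n+1/2}(q^{n+1})-\alpha^{n+1/2}(q^n)\big)={\rm O}(\dt^3)$ simply because $q^{n+1}-q^n={\rm O}(\dt)$. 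No symplecticity is needed, only the observation that the two $\dt^2$ coefficients are the same function evaluated at two ${\rm O}(\dt)$-close points with opposite signs. Your treatment of the variant~\eqref{eq:lag_order_var} is exactly the paper's: express $p^{n+1/2}-p^{n+1/4}$ and $p^{n+1/2}-p^{n+3/4}$ from the scheme, expand $f_{\rm rgd}^M$ in the momentum argument, and note that the two ${\rm O}(\dt)$ corrections are opposite up to ${\rm O}(\dt^2)$.

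The one step that would fail as written is the ``cleaner route'' you propose for~\eqref{eq:lag_order_var_other}. In the general (non-overdamped) scaling the exact identities read $\lambda^{n+1/2}=G_M^{-1}(q^n)\big[\dt^{-1}\nabla\xi(q^n)^T(q^{n+1}-q^n)+\frac{\dt}{2}\nabla\xi(q^n)^TM^{-1}\nabla V(q^n)\big]$, and symmetrically for $\lambda^{n+3/4}$, with a $\dt^{-1}$ prefactor on the displacement terms. Since $G_M^{-1}$ is evaluated at $q^n$ in one identity and at $q^{n+1}$ in the other, you cannot directly merge the two displacement terms into $-(q^{n+1}-q^n)^T\big(\nabla\xi(q^{n+1})-\nabla\xi(q^n)\big)$; the cross term $\dt^{-1}\big[G_M^{-1}(q^n)-G_M^{-1}(q^{n+1})\big]\nabla\xi(q^{n+1})^T(q^{n+1}-q^n)$ is ${\rm O}(\dt)\times{\rm O}(\dt^2)/\dt={\rm O}(\dt^2)$, not ${\rm O}(\dt^3)$ as you assert (here $\nabla\xi(q^{n+1})^T(q^{n+1}-q^n)={\rm O}(\dt^2)$ by the constraint). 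This ${\rm O}(\dt^2)$ term does cancel, but only against the ${\rm O}(\dt^2)$ discrepancy between $\frac{\dt}{2}\big(h(q^n)+h(q^{n+1})\big)$ and $\dt\,h(q^n)$, where $h=G_M^{-1}\,{\rm Hess}(\xi)(M^{-1}p^{n+1/2},M^{-1}p^{n+1/2})$, together with the third-derivative term --- so the coefficient bookkeeping you hoped to bypass reappears. The first route you describe (carrying the Taylor expansions one order further) is the one that actually closes the argument.
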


The variant~\eqref{eq:lag_order_var}, which involves positions and momenta at the beginning 
and at the end of the Hamiltonian steps only, is used in~\eqref{eq:estim_lag_metro} 
below to estimate the time discretization error in the thermodynamic integration method based on
the estimator~\eqref{eq:avlagr_tilde}.

\begin{proof}
For sufficiently small time-steps $\dt$, the implicit function theorem ensures that the two projection steps associated with the nonlinear constraints in~\eqref{eq:flucdiss1}-\eqref{eq:Verletconst}-\eqref{eq:flucdiss2} have a unique smooth solution. A Taylor expansion with respect to~$\dt$ 
of the position constraints gives
\begin{align*}
  z & =  \xi(q^{n+1}) = \xi(q^n + \dt  M^{-1} p^{n+1/2})  \\
  &=  \xi(q^n) + \dt \nabla \xi (q^n) ^T  M^{-1} p^{n+1/2} 
  + \frac{\dt^2}{2} \op{Hess}_{q^n}(\xi)(M^{-1} p^{n+1/2}, M^{-1} p^{n+1/2}) \\
  &\quad + \frac{\dt^3}{6} \op{D}^3_{q^n}(\xi)(M^{-1} p^{n+1/2}, M^{-1} p^{n+1/2},M^{-1} p^{n+1/2}) + { \rm O}(\dt^4),
\end{align*}
where $D^3_q(\xi)(x,y,z) \in \R^{\nc}$ denotes the order $3$ differential of $\xi$ computed at $q$ and evaluated with the vectors $x,y,z \in \R^{3N}$. We denote
\[
\alpha^{n+1/2}(q) :=  G_M^{-1}(q)\op{D}_{q}^3(\xi)(M^{-1} 
p^{n+1/2}, M^{-1} p^{n+1/2},M^{-1} p^{n+1/2}).
\]
Then, the fact that $z= \xi(q^{n+1}) = \xi(q^n)$ and the identity
\[
\nabla \xi  (q^n) ^T  M^{-1} p^{n+1/2} = - \frac{\dt}{2} \nabla \xi (q^n) ^T M^{-1}  \nabla V(q^n) + G_M(q^n)  \lambda^{n+1/2} 
\]
yield the following expansion of $ \lambda^{n+1/2}$ in terms of $(q^n,p^{n+1/2})$:
\[
\lambda^{n+1/2} = f_{\rm rgd}^M(q^n,p^{n+1/2}) \frac{\dt}{2} -\frac{\dt^2}{6} \alpha^{n+1/2}(q^n)  + {\rm O }(\dt^3).
\]
By time symmetry, the same computation holds for $\lambda^{n+3/4}$, starting from $(q^{n+1},p^{n+3/4})$ and by formally replacing $\dt$ by $-\dt$. This can be double checked by Taylor expanding with respect to $\dt$ the position constraints, as done above for $\lambda^{n+1/2}$.
It thus holds:
\[
\lambda^{n+3/4} = f_{\rm rgd}^M(q^{n+1},p^{n+1/2}) \frac{\dt}{2} +\frac{\dt^2}{6} \alpha^{n+1/2}(q^{n+1})  + {\rm O }(\dt^3).
\]
The sum of the multipliers therefore reads
\begin{align*} 
  & \lambda^{n+1/2} + \lambda^{n+3/4}  -  f_{\rm rgd}^M(q^n,p^{n+1/2}) \frac{\dt}{2}  -  f_{\rm rgd}^M(q^{n+1},p^{n+1/2}) \frac{\dt}{2} \\
  &\qquad   = \frac{\dt^2}{6} \pare{\alpha^{n+1/2}(q^{n+1}) - \alpha^{n+1/2}(q^{n}) } + { \rm O} (\dt^3) = { \rm O} (\dt^3),
\end{align*}
which gives~\eqref{eq:lag_order_var_other}.
Now, using the previous calculations, we remark that:
\[
\begin{cases}
  \dps p^{n+1/2} = p^{n+1/4} -\frac{\dt}{2} \nabla V (q^n) + \frac{\dt}{2} \nabla \xi (q^{n})   f_{\rm rgd}^M(q^n,p^{n+1/2}) + {\rm O}(\dt ^2 ), \\[6pt]
  \dps  p^{n+1/2} = p^{n+3/4} + \frac{\dt}{2} \nabla V (q^{n+1}) - \frac{\dt}{2} \nabla \xi (q^{n+1})   f_{\rm rgd}^M(q^{n+1},p^{n+1/2}) + {\rm O}(\dt ^2 ).
\end{cases}
\]
Thus, it holds
\begin{align*}
& f_{\rm rgd}^M(q^n,p^{n+1/2}) + f_{\rm rgd}^M(q^{n+1},p^{n+1/2}) = 
f_{\rm rgd}^M(q^n,p^{n+1/4}) + f_{\rm rgd}^M(q^{n+1},p^{n+3/4})  \\
& \qquad + \nabla_p f_{\rm rgd}^M(q^n,p^{n+1/4}) \pare{ -\frac{\dt}{2} \nabla V (q^n) + \frac{\dt}{2} \nabla \xi (q^{n})   f_{\rm rgd}^M(q^n,p^{n+1/2})    } \\
& \qquad - \nabla_p f_{\rm rgd}^M(q^{n+1},p^{n+3/4}) \pare{ -\frac{\dt}{2} \nabla V (q^{n+1}) + \frac{\dt}{2} \nabla \xi (q^{n+1})   f_{\rm rgd}^M(q^{n+1},p^{n+1/2})    }  + {\rm O}(\dt ^2 )\\
& = f_{\rm rgd}^M(q^n,p^{n+1/4}) + f_{\rm rgd}^M(q^{n+1},p^{n+3/4}) + {\rm O}(\dt ^2 ).
\end{align*}
This gives the claimed second order consistency of the sum of the Lagrange multipliers~\eqref{eq:lag_order_var}.
\end{proof}

Let us discuss the convergence of the approximation~\eqref{eq:freeestim}.
Assuming again that the constrained splitting scheme~\eqref{eq:flucdiss1}-\eqref{eq:Verletconst}-\eqref{eq:flucdiss2} converges in the probability distribution sense 
to the limiting Langevin process~$\EL$,  
the following convergence in probability distribution occurs
when $\dt \to 0$ and $N_{\dt} \dt \to T$:
\begin{equation}\label{eq:consistlagr}
 \lim_{\dt\to 0  } {\rm Law} \pare{\frac{1}{N_{\dt }\dt }\sum_{n=0}^{N_{\dt}-1} (\lambda^{n+1/2} +\lambda^{n+3/4})  }= {\rm Law} \pare{ \frac{1}{T}\int_0^T d\lambda_t^{\rm ham} }.
\end{equation}
This shows the convergence of the estimate~\eqref{eq:freeestim} of the mean force when taking first the limit $\dt \to 0$ and then $T \to \infty$.

\subsubsection{Estimates relying on the Metropolized scheme}

When the scheme~\eqref{eq:flucdiss1}-\eqref{eq:Verletconst}-\eqref{eq:flucdiss2} is complemented
with a Metropolis step (see Algorithm~\ref{a:GHMCconst}), it is possible to prove a result on the longtime limit
of trajectorial averages (\emph{i.e.} letting first the number of iterations go to infinity, and then 
taking the limit $\dt \to 0$), upon assuming
the irreducibility of the numerical scheme. 

Indeed, let us consider the Markov chain $(q^k,p^k)$ generated by the GHMC scheme in
Algorithm~\ref{a:GHMCconst}, and assume (i) the irreducibility of the
 Markov chain, and (ii) that appropriate rejections outside the
set $\overline{D}_{\dt} = \manq(z) \times \{ \frac12 p^T M^{-1} p \leq R_{\dt} \}$ 
are made in the steps (1)-(2)-(4) of the
algorithm. In particular, the projection steps associated with the nonlinear
constraints in Step~(2) of Algorithm~\ref{a:GHMCconst} are
well defined.

Then, by ergodicity, an average of the analytic expression of the local rigid mean force $\overline{f}_{\rm rgd}^M$ given in~\eqref{eq:constforceav} yields an estimate of  the free energy without time discretization error:
\[
\lim_{K \to + \infty}\frac{1}{K} \sum_{k=0}^{K-1} \overline{f}_{\rm rgd}^M(q^k) =\nabla_z F_{\rm rgd}^M(z) \quad \mathrm{a.s.}
\]
If $f_{\rm rgd}^M$ is used instead of~$\overline{f}_{\rm rgd}^M$, then the mean force is computed with some exponentially small error: almost surely,
\begin{align*}
\lim_{K \to + \infty}\frac{1}{K} \sum_{k=0}^{K-1} f_{\rm rgd}^M(q^k,p^k) &=
\frac{ \dps \int_{ \overline{D}_{\dt}} f_{\rm rgd }^M(q,p) 
     \mu_{T^\ast  \manq(z)}(dq \, dp)}{ \dps \int_{ \overline{D}_{\dt} } 
     \mu_{T^\ast  \manq(z)}(dq \, dp) } \\
&= \frac{ \dps \int_{ T^*\manq(z)} f_{\rm rgd }^M(q,p) 
     \mu_{T^\ast  \manq(z)}(dq \, dp)}{ \dps \int_{ T^*\manq(z) } 
     \mu_{T^\ast  \manq(z)}(dq \, dp) } + \mathrm{O}\left(\mathrm{e}^{-\alpha \dt^{-2}}\right) \\
&=\nabla_z F_{\rm rgd}^M(z) + \mathrm{O}\left(\mathrm{e}^{-\alpha \dt^{-2}}\right)
\end{align*}
for some $\alpha > 0$.
The error arising from replacing $\overline{D}_{\dt}$ with $T^*\manq(z)$ is indeed 
exponentially small in view of~\eqref{eq:estimate_Rdt} (namely 
$R_{\Delta t} \geq A \, \Delta t^{-2}$) and
using the fact that the marginal distribution in the $p$-variable is Gaussian.

Likewise, for estimates based on Lagrange multipliers, the following longtime averaging
holds: almost surely,
\begin{equation}
  \label{eq:estim_lag_metro}
   \lim_{K \to + \infty}\frac{1}{K \dt}\sum_{k=0}^{K-1} 
  (\lambda^{k+1/2} +\lambda^{k+3/4}) 
   = \frac{ \dps \int_{ \overline{D}_{\dt}} f_{\rm rgd }^M(q,p) 
     \mu_{T^\ast  \manq(z)}(dq \, dp)}{ \dps \int_{ \overline{D}_{\dt} } 
     \mu_{T^\ast  \manq(z)}(dq \, dp) } + {\rm O}(\dt^2),
\end{equation}
where we have used the estimate~\eqref{eq:lag_order_var} on the Lagrange multipliers. The limit $\dt \to 0$ is obtained by a dominated convergence argument:
\[
\lim_{\dt\to 0} \, \lim_{K \to + \infty}\frac{1}{K \dt}\sum_{n=0}^{K-1} (\lambda^{k+1/2} +\lambda^{k+3/4})  =\nabla_z F_{\rm rgd}^M(z) \quad \mathrm{a.s.}
\]
Note that, due to the Metropolis correction in Algorithm~\ref{a:GHMCconst}, the time discretization error in the sampling of the invariant measure is removed. The only remaining time discretization errors come from (i)~the approximation of the local mean force by the Lagrange multipliers (this is a second order error), and (ii)~the integration domain being $\overline{D}_{\dt}$ instead of $T^*\manq(z)$ (as discussed above, this is an exponentially small error in~$\dt$). In conclusion, the left-hand side of~\eqref{eq:estim_lag_metro} is an approximation of~$\nabla_z F_{\rm rgd}^M(z)$ up to a ${\rm O}(\dt^2)$ error term.

\subsubsection{Overdamped limit}

Finally, let us emphasize that free energy derivatives can 
be computed with the estimator~\eqref{eq:freeestim}
within the overdamped Langevin framework, using the scheme~\eqref{eq:Eulerconst} and the expressions~\eqref{eq:lag_od_1}-\eqref{eq:lag_od_2}-\eqref{eq:lag_od_3} of Proposition~\ref{p:langtooverd}. Let us recall that the latter are equivalent to the scheme~\eqref{eq:flucdiss1}-\eqref{eq:Verletconst}-\eqref{eq:flucdiss2} with fluctuation-dissipation matrices satisfying
$\frac{\dt}{4} \gamma = M = \frac{\dt}{2} \op{Id}$.
This leads to the original free energy estimator 
(recall that, for the overdamped dynamics, $\R^{3N}$ is equipped with the 
scalar product associated with the identity matrix):
\begin{equation}\label{eq:freeestim_od}
 \nabla_z F_{\rm rgd}^\op{Id}(z) \simeq \frac{1}{K\dt}\sum_{k=0}^{K-1} (\lambda^{k+1/2} + \lambda^{k+3/4}),
\end{equation}
which can be seen as a variant of the variance reduced estimator proposed directly for the overdamped scheme~\eqref{eq:Eulerconst} in~\cite{ciccotti-lelievre-vanden-einjden-08}:
\begin{equation*}
 \nabla_z F_{\rm rgd}^{\rm Id}(z) \simeq \frac{1}{K\dt }\sum_{k=0}^{K-1} \pare{ \lambda^{k+1}_{\rm od} + \sqrt{\frac{2 \dt}{\beta}} G^{-1}(q^k)  \nabla \xi (q^k) ^T {\mathcal G}^k} = 
\frac{1}{K\dt }\sum_{k=0}^{K-1} 2\lambda^{k+1/2}.
\end{equation*}
The rigorous justification of the consistency of~\eqref{eq:freeestim_od} in the limit $\dt \to 0$ follows from the results of~\cite{ciccotti-lelievre-vanden-einjden-08}.
See also Section~\ref{sec:time_continuous_limit_ovd} below for similar results.

\subsection{Numerical illustration}
\label{sec:simple_example_WCA}

We consider a system composed of $N$ particles in a 2-dimensional periodic box of side
length~$L$, interacting through the purely repulsive 
WCA pair potential, which is a truncated Lennard-Jones potential:
\[
V_{\rm WCA}(r) = \left \{ \begin{array}{cl}
\dps 4 \varepsilon \left [ \left ( \frac{\sigma}{r} \right )^{12} 
  - \left ( \frac{\sigma}{r}\right )^6 \right ] + \varepsilon & \quad {\rm if \ } r \leq r_0, \\
0 & \quad {\rm if \ } r > r_0,
\end{array} \right.
\]
where $r$ denotes the distance between two particles, 
$\varepsilon$ and $\sigma$ are two positive parameters and $r_0=2^{1/6}\sigma$.
Among these particles, two (numbered 1 and 2 in the following) are
designated to form a dimer while the others are
solvent particles. Instead of the above WCA
potential, the interaction potential between the two particles of the dimer 
is a double-well potential
\begin{equation}
  \label{intro:eq:VS}
  V_{\rm S}(r) = h \left [ 1 - \frac{(r-r_0-w)^2}{w^2} \right ]^2,
\end{equation}
where $h$ and $w$ are two positive parameters.  
The total energy of the system is therefore, for $q \in (L\mathbb{T})^{dN}$ with $d=2$,
\[
V(q) = V_{\rm S}(|q_1-q_2|) + \sum_{3 \leq i<j \leq N} V_{\rm WCA}(|q_i-q_j|)
+ \sum_{i=1,2} \sum_{3 \leq j \leq N} V_{\rm WCA}(|q_i-q_j|).
\]
See~\cite{DBC99,SBB88} for instance for other computational studies using this model.

The potential $V_{\rm S}$ exhibits two energy minima, one corresponding
to the compact state where the length of the dimer is $r=r_0$, and one corresponding to
the stretched state where this length is
$r=r_0+2w$. The energy barrier separating both states is $h$. 
The reaction coordinate used to describe the transition from the compact to 
the stretched state is the normalized bond length of the dimer molecule:
\begin{equation}
  \label{coord_reac_diatomic}
  \xi(q) = \frac{|q_1-q_2|-r_0}{2w},
\end{equation}
where $q_1$ and $q_2$ are the positions of the two particles forming the
dimer. The compact state (resp. the stretched state) corresponds to the
value $z=0$ (resp. $z=1$) of the reaction coordinate.

The inverse temperature is set to $\beta = 1$, with $N = 100$ particles ($N-2$ solvent particles
and the dimer) with solvent density 
$\rho = (1-2/N) a^{-2} = 0.436$,
since there are $N-2$ solvent 
particles in a square box of side length $L=a\sqrt{N}$ with $a = 1.5$. The parameters
describing the WCA interactions are set to $\sigma = 1$ and $\varepsilon = 1$, and
the additional parameters for the dimer are $w = 2$ and $h = 2$. 

For this system, $M=\op{Id}$ and $|\nabla \xi|$ is constant, 
so that the rigid free energy $F^M_{\rm rgd }(z)$ is equal to the free energy $F(z)$.

The mean force is estimated at the values $z_i = z_{\rm min} + i \Delta z$,
with $z_{\rm min} = - 0.2$, $z_{\rm max} = 1.2$ and $\Delta z = 0.014$,
by ergodic averages obtained with the projected dynamics
with Metropolis correction (Algorithm~\ref{a:GHMCconst}, where in the simple case
considered here, the fluctuation-dissipation part can be integrated exactly). 
For each value of~$z$, we integrate the dynamics on a time 
$T = 2 \times 10^4$ with a step size $\dt = 0.02$, using a scalar friction
coefficient~$\gamma = 1$.

The resulting mean force profile is presented in Figure~\ref{fig:TI_lang_1}, 
together with the associated free energy profile.
\begin{figure}
\centering
\includegraphics[width=9cm]{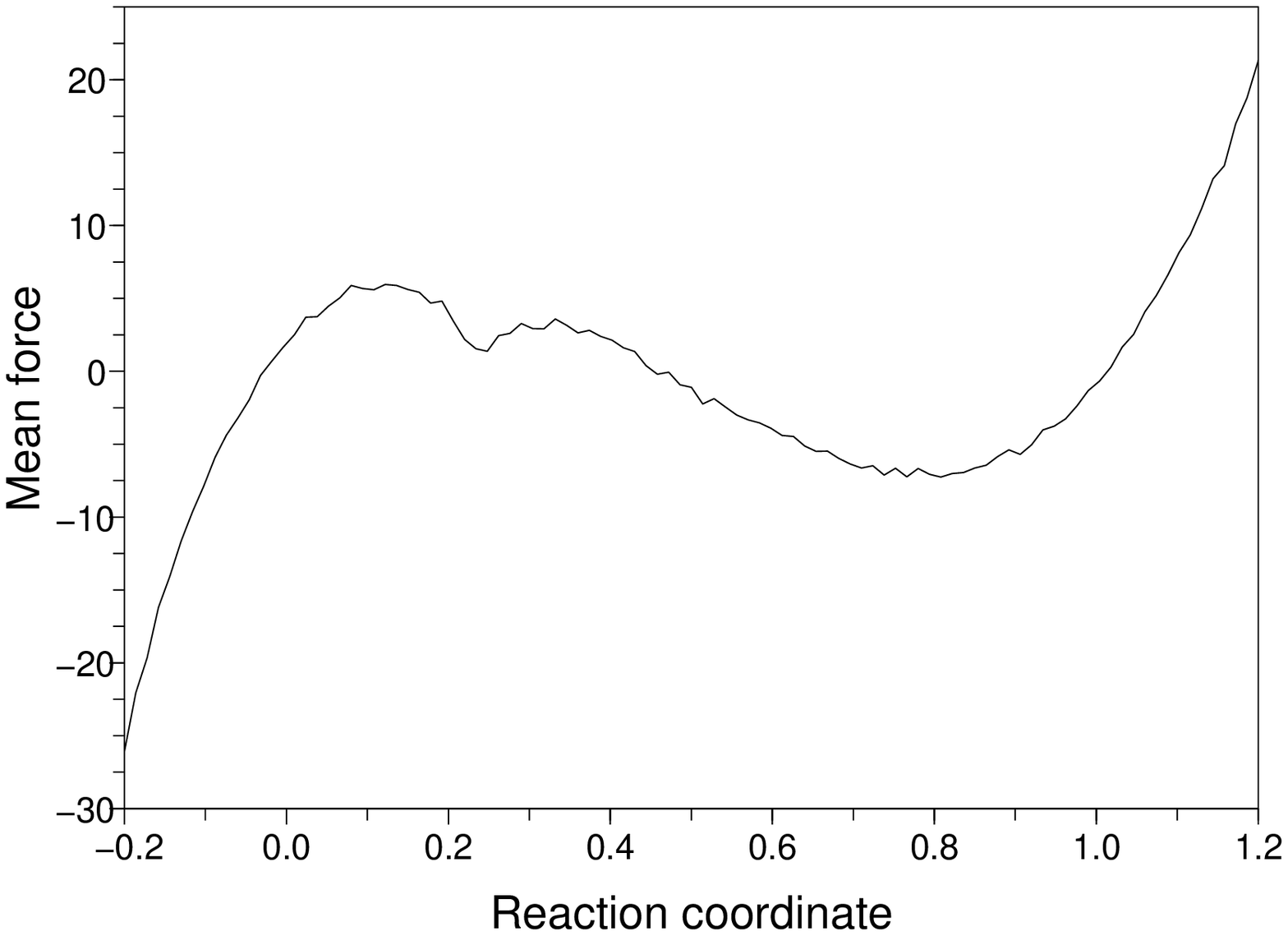}
\includegraphics[width=9cm]{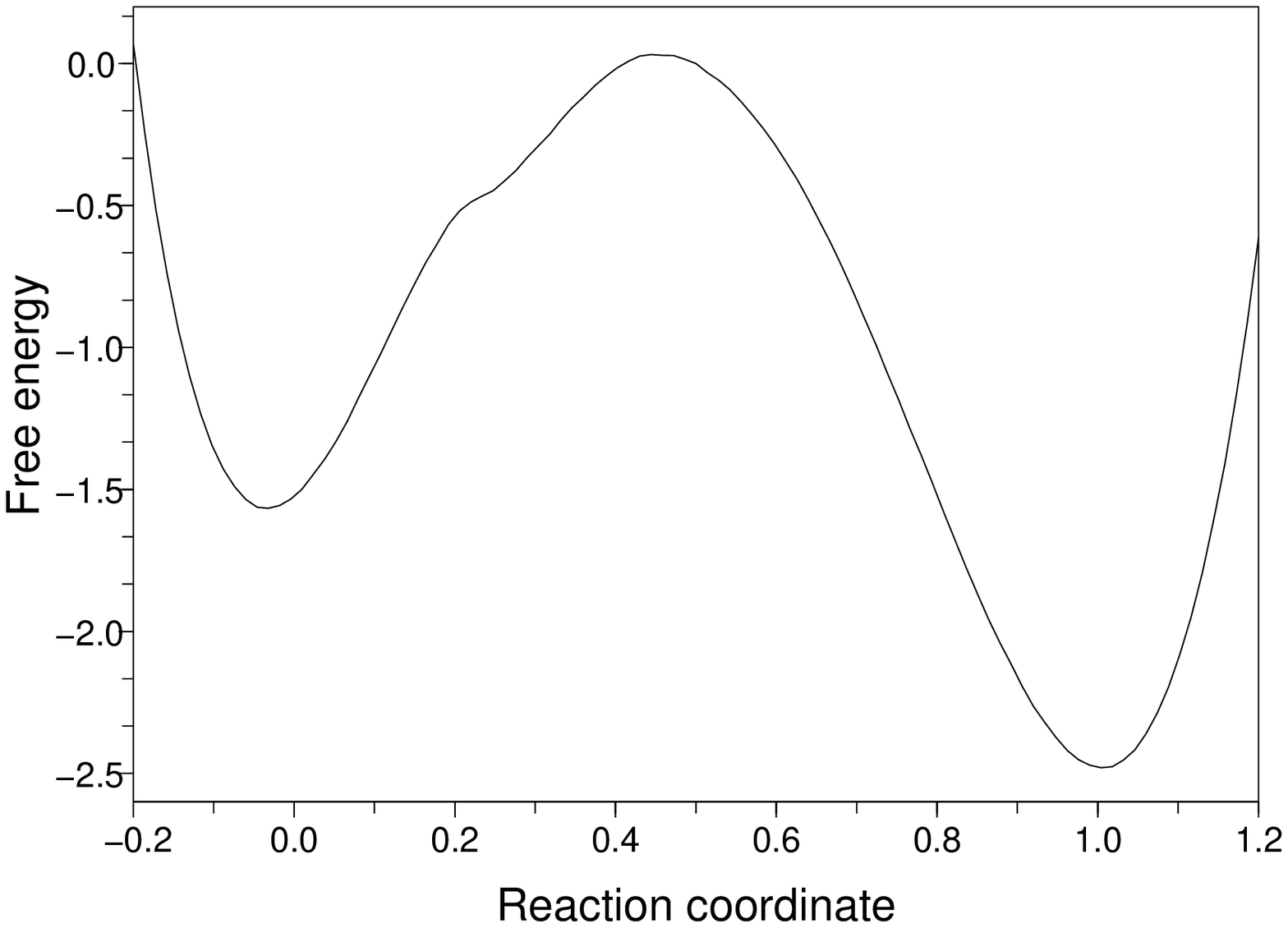}
\caption{\label{fig:TI_lang_1} 
  Top: Estimated mean force. Bottom: Corresponding free energy profile.
}
\end{figure}
Figure~\ref{fig:TI_lang_2} 
compares the analytical constraining force $f^M_{\rm rgd}(q^n,p^n)$ and the Lagrange multipliers, see Proposition~\ref{p:consist}. In Figure~\ref{fig:TI_lang_2}, the $x$-axis represents the blocks of $10^5$ simulation steps, concatenated for the $101$ different values of $z_i$. It can be seen that the difference between $f^M_{\rm rgd}(q^n,p^n)$ and the Lagrange multipliers is small in any cases, though somewhat larger for the lowest values of~$\xi$.
\begin{figure}
\centering
\includegraphics[width=9cm]{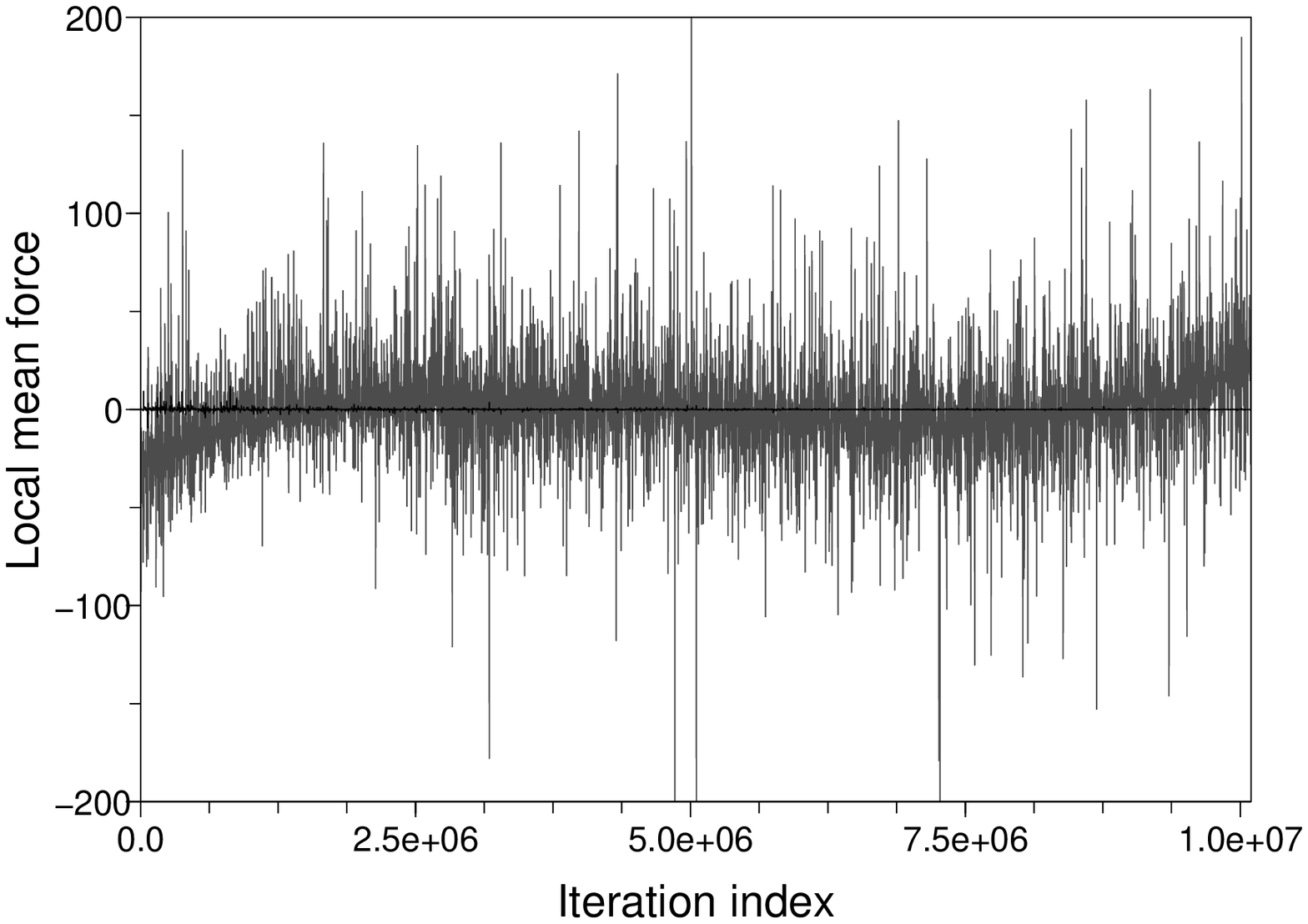}
\includegraphics[width=9cm]{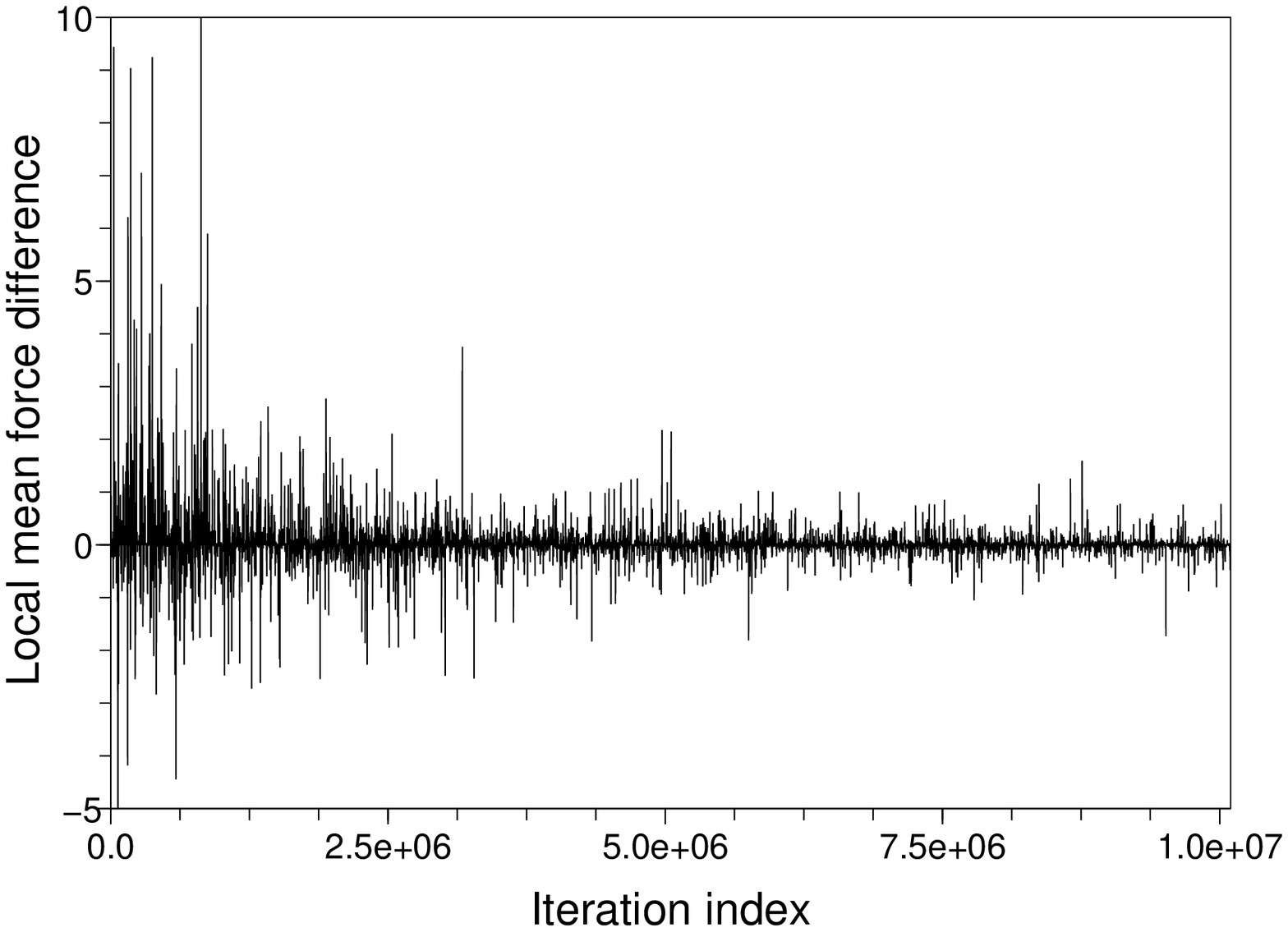}
\includegraphics[width=9cm]{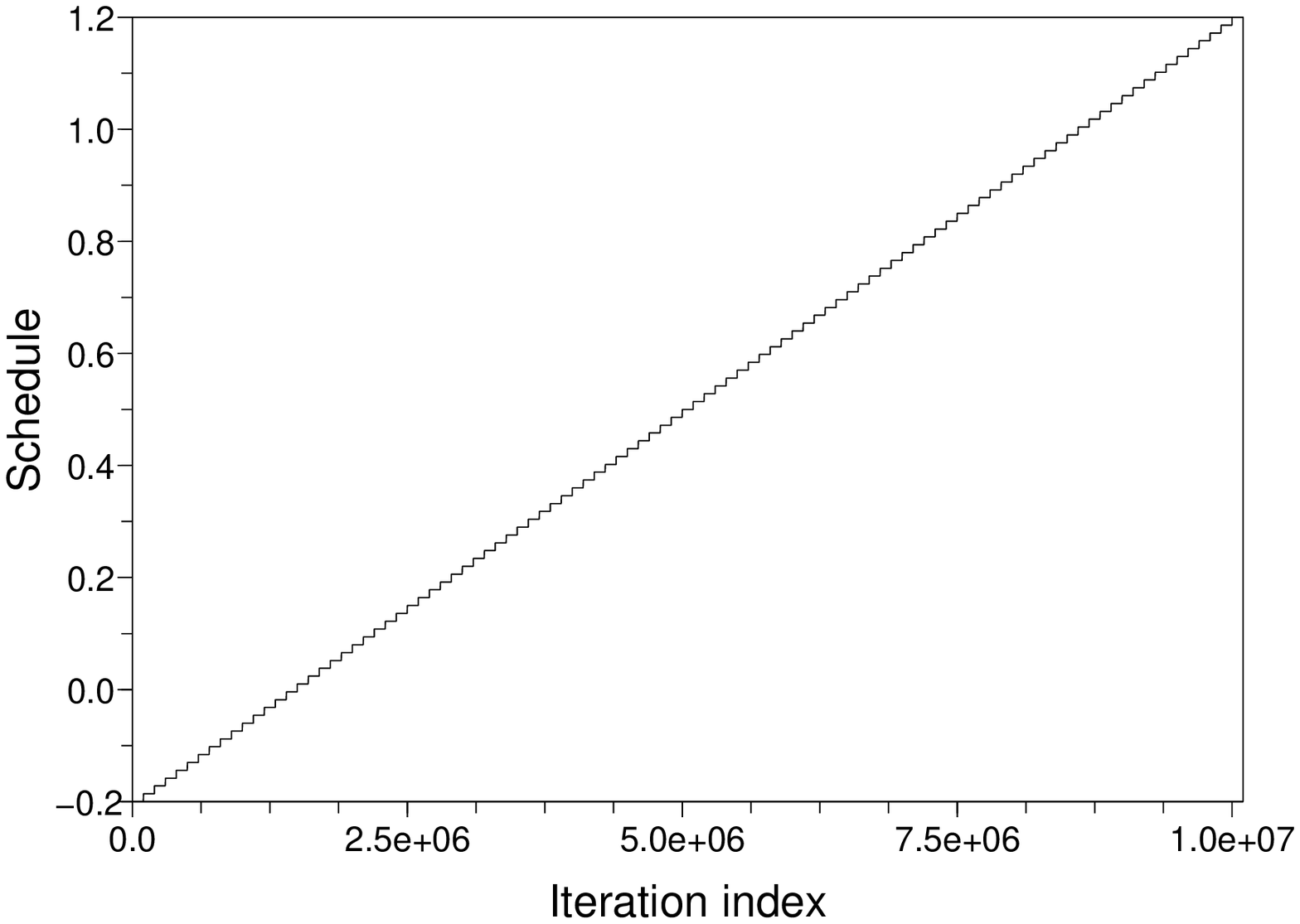}
\caption{\label{fig:TI_lang_2} 
  Top: 
  The constraining force $f^M_{\rm rgd}(q^n,p^n)$ (pale line), and the difference between the
  constraining force and its estimate from the Lagrange multipliers (dark line).
  Middle: 
  Zoom on the difference between the
  constraining force and the Lagrange multipliers. Note the difference of 
  scales for the $y$-axis.
  Bottom: The schedule
  $\xi$ is piecewise constant.
  In all figures, the $x$-axis represents the blocks of $10^5$ simulation steps, 
  concatenated for the $101$ different values of $z_i$.
}
\end{figure}

It can be checked numerically that the differences
$|\lambda^{n+1/2} - f_{\rm rgd}^M(q^{n},p^{n+1/2})\frac{\dt}{2}|$ and $|\lambda^{n+3/4} - f_{\rm rgd}^M(q^{n+1},p^{n+1/2})\frac{\dt}{2}|$ are indeed of order $\dt^2$, and that the difference 
\[
\abs{ \lambda^{n+1/2} +\lambda^{n+3/4}   -  f_{\rm rgd}^M(q^n,p^{n+1/2}) \frac{\dt}{2}  -  f_{\rm rgd}^M(q^{n+1},p^{n+1/2}) \frac{\dt}{2} }
\]
is indeed of order $\dt^3$ (by computing the average of these elementary differences for various step sizes). The Lagrange multipliers are in any case very good approximations to the constraining force~$f^M_{\rm rgd}$.

Let us finally discuss the efficiency of the different estimators of the mean force, in
terms of their variances. They can be written as the empirical average of the following random sequences:
\[
\pare{
f^M_{\rm rgd}(q^n,p^n), 
\overline{f}^M_{\rm rgd}(q^n),
\frac{\lambda^{n+1/2} +\lambda^{n+3/4}}{\dt}
},
\]
where $q^n,p^n,\lambda^{n+1/2},\lambda^{n+3/4}$ are given by the numerical scheme~\eqref{eq:flucdiss1}-\eqref{eq:Verletconst}-\eqref{eq:flucdiss2}. The correlations in time (between the iterates) are very similar for the three methods,
and we therefore simply compute the variance over all the samples.
Table~\ref{tab:comparison_variance_Langevin_TI} compares the so-obtained
standard errors over $10^5$ time-steps with $\dt = 0.02$ (simulation time $T=2 000$
for each value of the reaction coordinate). The results show that the 
different estimators are more or less equivalent.
This is related to the fact that the essential source of variance comes from the 
sampling of the positions, and not the sampling of the velocities. Note however that, 
for the smallest value of
the reaction coordinate, the estimator 
based on the averaged local mean force $\overline{f}^M_{\rm rgd}(q^n)$ appears to 
be better in terms of variance. 

\begin{table}[ht]
\begin{center}
{\begin{tabular}{c@{\quad}c@{\quad}c@{\quad}c}
\hline
$z$ & $f^M_{\rm rgd}(q^n,p^n) $ & $\dps \frac{\lambda^{n+1/2} +\lambda^{n+3/4}}{\dt}$ 
& $\overline{f}^M_{\rm rgd}(q^n)$ \\[10pt]
\hline
-0.2 & 22.1 & 21.9 & 14.7 \\  
 0.0 & 16.0 & 15.5 & 15.4 \\  
 0.2 & 23.1 & 22.5 & 22.9 \\  
 0.4 & 21.1 & 20.4 & 21.0 \\  
 0.6 & 21.4 & 20.7 & 21.3 \\  
 0.8 & 21.6 & 20.9 & 21.5 \\  
 1.0 & 21.4 & 20.6 & 21.4 \\  
 1.2 & 21.0 & 20.3 & 20.9 \\  
\hline  
\end{tabular}}
\caption{Standard error (square-root of the variance) 
  of three mean force estimators, with correlations in time
  neglected, for different values $z$ of the reaction coordinate.}
\label{tab:comparison_variance_Langevin_TI}
\end{center}
\end{table}

%
%

\section{Hamiltonian and Langevin nonequilibrium dynamics}
\label{sec:lang_jarz}

This section presents nonequilibrium Hamiltonian and
Langevin dynamics with time-evolving constraints. We thus consider $(q_t,p_t)$ solution to the dynamics~$\NL$, which we recall for convenience:
\[
\NL \qquad
  \left\{
  \begin{aligned}
    d q_t & = M^{-1}p_t \, dt,\\[6pt]
    d p_t & = -\nabla V (q_t)  \,dt -\gamma_P(q_t) M^{-1} p_t \, dt+ \sigma_P(q_t)  \,d W_t  + \nabla \xi(q_t) \, d \lambda_t, \\[6pt]
    \xi(q_t) & = z(t), \hspace{5cm} (C_q(t))
  \end{aligned} \right.
\]
We prove in particular the fluctuation identity~\eqref{eq:FK_multi_bis_corr}, 
see~\eqref{eq:FK_multi_bis_corr_Section4} below. Recall (see~\eqref{eq:gamma_P}) that, for simplicity, we assume in this section that the fluctuation-dissipation matrices are assumed to be of the form 
$(\sigma_P,\gamma_P) = ( P_M \, \sigma , P_M \, \gamma \, P_M^T)$ with $\gamma, \sigma \in \mathbb{R}^{3N \times 3N}$.
At variance with the previous sections, we do \emph{not} assume that $\gamma_P$ is strictly positive. Actually, $\gamma_P = 0$ corresponds to an interesting case: the deterministic 
Hamiltonian dynamics.

To our knowledge, the standard work fluctuations derived so far 
(except for our previous work~\cite{lelievre-rousset-stoltz-07-a}) apply only
to the case of time-dependent Hamiltonians. 
It is possible to consider transitions in the values of some reaction
coordinate in this framework upon resorting to 
steered molecular dynamics techniques. In this case, a penalty term 
$\varepsilon^{-1} (\xi(q)-z(t))^2$ (with $\varepsilon$ small) 
is used in the energy of the system 
to ``softly'' constrain the system to remain close to the submanifold 
$\Sigma(z(t)) = \{ q \in \R^{3N} \, | \, \xi(q) = z(t) \}$ at time~$t$. 
However, it is observed in practice that 
the statistical fluctuations increase with smaller~$\varepsilon$ (see~\cite{PKTS03}). 
We propose instead to replace the stiff constraining potential $\varepsilon^{-1} (\xi(q)-z)^2$ 
by a projection onto the submanifold $\Sigma(z)$. This is reminiscent of
the replacement of stiff constrained Langevin dynamics by rigidly constrained ones,
see Remark~\ref{rem:highosclang}.

This section is organized as follows. 
We first define the generalized free energy which is naturally computed 
with~$\NL$, and relate it to the standard 
free energy~\eqref{eq:Fbis} in Section~\ref{sec:Langti_FE}. Then, we give some precisions on the 
nonequilibrium dynamics~$\NL$ in Section~\ref{sec:generators_jarz}.
Next, we prove an appropriate version of the Jarzynski-Crooks fluctuation equality in 
Section~\ref{sec:jarz_lang_cons}. A numerical discretization of the nonequilibrium dynamics
is proposed in Section~\ref{sec:num_jarz}, together with various approximations
of the work. In particular, we propose a numerical strategy to obtain a 
Jarzynski-Crooks identity without time discretization error (see Section~\ref{sec:discrete_jarz}).
We then consider the overdamped limit when the mass matrix~$M$ 
goes to 0 (see Section~\ref{sec:ovd_limit_disc}).
Finally, in Section~\ref{sec:num_res_jarz}, we present some numerical results 
for the model system already considered in Section~\ref{sec:simple_example_WCA}.

\subsection{Generalized free energy}
\label{sec:Langti_FE}

For $(q_t,p_t)$ solution to the Langevin dynamics~$\NL$, the reaction coordinate evolution $\xi(q_t)=z(t)$ implies that $v_\xi(q_t,p_t) = \dot{z}(t)$, so that, at each time $t \geq 0$, the system $(q_t,p_t)$ belongs to the state space $\manq_{\xi,v_\xi}(z(t),\dot{z}(t))$. 
As a consequence, the free energy difference computed in this section by the Jarzynski relation  without correction (see~\eqref{eq:FK_multi_bis} below) is in fact the generalized rigid free energy $F_{\rm rgd}^{\Xi}$ defined in~\eqref{eq:F_xi}, in the special case $\Xi=(\xi,v_\xi)^T$:
\[
F^{\Xi}_{\rm rgd}(\zeta) =  -\frac{1}{\beta} \ln \int_{\manq_{\Xi}(\zeta)} {\rm e}^{-\beta H(q,p)} \sigma_{\manq_{\Xi}(\zeta)}(dq\, dp).
\] 
The latter free energy is associated to the normalization constant $Z_{z(t),\dot{z}(t)}$ of the distribution $\mu_{\manq_{\xi,v_\xi}(z(t),\dot{z}(t))}$ defined by~\eqref{eq:mu_v}. The generalized rigid free energy~\eqref{eq:F_xi} can be explicitly related to the usual free energy as follows. First, remark that, for a fixed~$q$,
\begin{align*}
  \begin{split}
    & \int_{\manq_{v_\xi(q,\cdot)}(v_z)} \exp \pare{-\frac{\beta}{2} p^T M^{-1}p } \sigma^{M^{-1}}_{\manq_{v_\xi(q,\cdot)}(v_z)}(dp) \\
&\qquad  = \exp\pare{-\frac{\beta}{2} v_z^T G_M^{-1}(q) v_z} \int_{T^*_q\manq(z)} \exp\left(-\frac{\beta}{2} p^T M^{-1}p \right) \sigma^{M^{-1}}_{T^*_q\manq(z)}(dp)
  \end{split}\\
&\qquad  = \pare{2 \pi \beta^{-1} }^{ \frac{3N-\nc}{2}} \exp\pare{-\frac{\beta}{2} v_z^T G_M^{-1}(q) v_z}.
\end{align*}
In the above, the change of variable $p\to p - \nabla \xi(q) G_M^{-1}(q) v_z$ has been used, in the space
\[
\manq_{v_\xi(q,\cdot)}(v_z) = \Big \{ p\in \R^{3N} \, \Big \vert \, \nabla \xi(q)^T M^{-1}\Big(p - \nabla \xi(q) G_M^{-1}(q) v_z \Big)=0 \Big \} .
\]
Note that $\frac{1}{2} v_z^T G_M^{-1}(q) v_z$ can be interpreted as the kinetic energy of the reaction coordinate~$\xi$.
Using the decomposition of measures~\eqref{eq:surfacemeas} and the above calculations, an alternative expression of the generalized free energy is:
\begin{equation}
  \label{eq:F_veff2}
   F^{\xi,v_\xi}_{\rm rgd}(z,v_z) = -\frac{1}{\beta} \ln \int_{\manq(z)} \exp\left(-\beta V(q) - \frac{\beta}{2} v_z^T G_M^{-1}(q) v_z \right) \sigma^{M}_{\manq(z)}(dq) + {\rm C },
\end{equation}
where, as usual, ${\rm C }$ denotes a generic constant (independent of $z$) whose value may vary from line to line.
As a consequence, the standard free energy~\eqref{eq:Fbis} is easily recovered from the generalized free energy, using relations similar to~\eqref{eq:FFtilde}. Indeed, using~\eqref{eq:F_veff2}, and with computations similar to the ones leading to~\eqref{eq:Fqp_final_difference}, the difference of the two free energies writes:
\begin{equation}
  \label{eq:FFtilde_v}
  \begin{aligned}
   & F(z) - F^{\xi,v_\xi}_{\rm rgd}(z,v_z) \\
   & \quad = \dps -\frac{1}{\beta} \ln \int_{\manq_{\xi,v_\xi}(z,v_z)} \hspace{-1cm} (\det G_M(q)) ^{-1/2} \exp\left(\frac{\beta}{2} v_z^T G_M^{-1}(q)v_z\right) \, \mu_{\manq_{\xi,v_\xi}(z,v_z)}(dq\,dp) + {\rm C}.
   \end{aligned}
\end{equation}
In practical nonequilibrium computations, the profile $t \mapsto F(z(t))$ can then be computed by adding a corrector to the work value in the Jarzynski estimator computing $F^{\xi,v_\xi}_{\rm rgd}(z(t),\dot{z}(t))$. This yields the identity~\eqref{eq:FK_multi_bis_corr_Section4} mentioned in the introduction and proved below (see the discussion after Theorem~\ref{th:crookslangcons}).

\subsection{Dynamics and generators}
\label{sec:generators_jarz}

The explicit expression of the Lagrange multipliers in~$\NL$ is obtained by a computation similar 
to~\eqref{eq:first_expression_lagrange} for the case without switching, by differentiating twice the constraints over time:
$$
\frac{d^2}{dt^2} \xi(q_t) = \ddot{z}(t).
$$
In view of the special structure of $(\sigma_P,\gamma_P)$, this leads to
\begin{align}
  d \lambda_t & =
  f_{\rm rgd}^M(q_t,p_t) \, dt   + G_M^{-1}(q_t) \ddot{z}(t) \, dt \nonumber\\
  & \quad + G_M^{-1} \nabla\xi(q_t)^TM^{-1}\pare{ \gamma_P(q_t) M^{-1}p_t \, dt  -\sigma_P(q_t) \, d W_t}  \nonumber \\
  & = f_{\rm rgd}^M(q_t,p_t) \, dt   + G_M^{-1}(q_t) \ddot{z}(t) \, dt.
  \label{eq:laglangjarz}
\end{align}
The expression \eqref{eq:laglangjarz} does not depend on the fluctuation-dissipation tensors $(\sigma_P,\gamma_P)$. This leads to simplified computations and motivates the special form of the latter matrices. The momentum evolution~$\NL$ thus simplifies as
\begin{equation}
  \label{eq:Langconsjarz2}
  \begin{aligned}
  d p_t & =  -  \nabla V(q_t) \, dt + \nabla \xi (q_t)  f_{\rm rgd}^M(q_t,p_t) \, dt  + \nabla \xi (q_t) G_M^{-1}(q_t) \ddot{z}(t) \, dt  \\
  & \quad -\gamma_P(q_t) M^{-1}p_t \, dt +\sigma_P(q_t) \, d W_t .
  \end{aligned}
\end{equation}

Let us denote by ${\mathcal L}^{\rm f}_{t}$ the generator of the forward dynamics $t \mapsto (q_t,p_t)$ defined in~$\NL$. The latter has a backward switching version,
\[
t' \mapsto (q^{\rm b}_{t'},p^{\rm b}_{t'}),
\]
obtained by using a time reversed switching $t' \mapsto  z(T-t')$, and by reversing
the momentum first in the initial condition, and then reversing them
back after the time evolution (see~\cite{CG08} for more general 
backward dynamics). More precisely, the backward dynamics 
can be defined through its generator
\begin{equation}
  \label{eq:backgen}
  {\mathcal L}_{t'}^{{\rm b}}= \mathcal{R} \, {\mathcal L}^{\rm f}_{T-t'} \, \mathcal{R}, 
\end{equation}
where ${\mathcal L}^{\rm f}_{T-t'}$ is the generator of the forward process at time $t=T-t'$, and $\mathcal{R} \, : \, \phi \mapsto \phi \circ S$ is the momentum flip operator 
with $S(q,p) = (q,-p)$. Thus $t' \mapsto (q^{\rm b}_{t'},-p^{\rm b}_{t'})$ is solution of the forward evolution equation~$\NL$ with a switching schedule $t' \mapsto  z(T-t')$. Therefore, the time evolution of the backward dynamics is given by  
\begin{equation}\label{eq:Langevinconstjarzback}
  \begin{cases}
    d q^{\rm b}_{t'} = -M^{-1} p^{\rm b}_{t'}  dt' , \\[6pt]
    d p^{\rm b}_{t'}= \nabla V(q^{\rm b}_{t'}) \, dt' -\gamma_P(q^{\rm b}_{t'}) M^{-1} p^{\rm b}_{t'} \, dt'+ \sigma_P(q^{\rm b}_{t'}) \,  d W_{t'}^{\rm b}  + \nabla \xi(q^{\rm b}_{t'}) \,  d \lambda_{t'}^{\rm b} , \\[6pt]
\xi(q^{\rm b}_{t'}) = z(T-t').
  \end{cases}
\end{equation}
In the following proposition, the expressions of ${\mathcal L}^{\rm f}_{t}$ 
and $ {\mathcal L}_{t'}^{{\rm b}}$ are explicitly written. 

\begin{proposition}
  \label{prop:generators_noneq}
  Consider $\zeta(t)=(z(t),\dot{z}(t))$.
  Then, the generator of the forward process~$\NL$ 
  at time $t \in [0,T]$ reads:
  \begin{equation}
    \label{eq:fwgensimple}
    \mathcal{L}_t^{\rm f} = \poisson{\, \cdot \, , H }_{\Xi}+ \mathcal{L}_{\Xi}^{\rm thm} + \poisson{\cdot, \Xi}\sgram^{-1}\dot{\zeta}(t) ,
  \end{equation}
  and the generator of the backward process~\eqref{eq:Langevinconstjarzback} at time $t' \in [0,T]$ reads:
  \begin{equation}
    \label{eq:bwgensimple}
    \mathcal{L}^{\rm b}_{t' }=  -\poisson{\, \cdot \, , H }_{\Xi} + \mathcal{L}_{\Xi}^{\rm thm} - \poisson{\cdot, \Xi}   \sgram^{-1} \dot{\zeta}(T-t'),
  \end{equation}
  where
  \[
  \mathcal{L}_{\Xi}^{\rm thm} = \frac{1}{\beta} \, \op{e}^{\beta H}
  \op{div}_{p}\Big( \rme^{-\beta H} \, \gamma_P \,
  \nabla_{p} \cdot\Big)
  \]
  is the fluctuation-dissipation operator defined in~\eqref{eq:conslangevinop_thm}.
\end{proposition}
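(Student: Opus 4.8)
The plan is to compute the It\^o generator of $\NL$ directly from its explicit form and then to reorganise its drift into the claimed Dirac-bracket expression; the backward generator then follows by a parity argument. Since the Lagrange multipliers have already been eliminated via \eqref{eq:laglangjarz}, the forward process solves the plain It\^o equation \eqref{eq:Langconsjarz2} on the time-dependent submanifold $\manq_\Xi(\zeta(t))$ with $\Xi=(\xi,v_\xi)^T$, and its generator reads
\[
\mathcal{L}_t^{\rm f}\ph = p^T M^{-1}\nabla_q\ph + \Big(-\nabla V + \nabla\xi\, f_{\rm rgd}^M + \nabla\xi\, G_M^{-1}\ddot z(t) - \gamma_P M^{-1}p\Big)^T\nabla_p\ph + \tfrac12\,\op{div}_p\big(\sigma_P\sigma_P^T\nabla_p\ph\big).
\]
The last two terms give the fluctuation--dissipation operator $\mathcal{L}_\Xi^{\rm thm}$ of \eqref{eq:conslangevinop_thm} exactly as in the proof of Proposition~\ref{s:p:Lconstgen} (using \eqref{eq:FDR_constraints}). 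Hence it only remains to identify the remaining deterministic part $\mathcal{D}_t\ph := p^T M^{-1}\nabla_q\ph + (-\nabla V + \nabla\xi\, f_{\rm rgd}^M + \nabla\xi\, G_M^{-1}\ddot z(t))^T\nabla_p\ph$ with $\poisson{\ph,H}_\Xi + \poisson{\ph,\Xi}\sgram^{-1}\dot\zeta(t)$ on $\manq_\Xi(\zeta(t))$.

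For that identification I would expand the Dirac bracket as $\poisson{\ph,H}_\Xi = \poisson{\ph,H} - \poisson{\ph,\Xi}\sgram^{-1}\poisson{\Xi,H}$, so that
\[
\poisson{\ph,H}_\Xi + \poisson{\ph,\Xi}\sgram^{-1}\dot\zeta(t) = \poisson{\ph,H} - \poisson{\ph,\Xi}\,\sgram^{-1}\big(\poisson{\Xi,H} - \dot\zeta(t)\big),
\]
with $\poisson{\ph,H} = p^T M^{-1}\nabla_q\ph - \nabla V^T\nabla_p\ph$. The decisive observation is that, along $\NL$, differentiating $(C_q(t))$ once gives $v_\xi(q_t,p_t)=\dot z(t)$, so by \eqref{eq:poisson_Xi_H} the vector $\poisson{\Xi,H} - \dot\zeta(t)$ has vanishing first block: it equals $\big(0,\ \op{Hess}_q(\xi)(M^{-1}p,M^{-1}p) - \nabla\xi^T M^{-1}\nabla V - \ddot z(t)\big)^T$. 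Multiplying by the explicit inverse \eqref{eq:sgraminv}, only the top block survives, equal to $-G_M^{-1}\big(\op{Hess}_q(\xi)(M^{-1}p,M^{-1}p) - \nabla\xi^T M^{-1}\nabla V - \ddot z(t)\big) = f_{\rm rgd}^M(q,p) + G_M^{-1}\ddot z(t)$ by \eqref{eq:constforce}; contracting with $\poisson{\ph,\Xi}$ via \eqref{eq:a} turns this into $-(f_{\rm rgd}^M + G_M^{-1}\ddot z(t))^T\nabla\xi^T\nabla_p\ph$, and adding $\poisson{\ph,H}$ reproduces $\mathcal{D}_t\ph$. This proves \eqref{eq:fwgensimple}.

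For the backward generator \eqref{eq:bwgensimple} I would use the definition \eqref{eq:backgen}, $\mathcal{L}^{\rm b}_{t'} = \mathcal{R}\,\mathcal{L}^{\rm f}_{T-t'}\,\mathcal{R}$, where $\mathcal{R}\ph = \ph\circ S$, $S(q,p)=(q,-p)$, is an involution, and read off parities from the displayed form of $\mathcal{L}^{\rm f}_{T-t'}$. Since $V,\xi$ depend on $q$ only, $\gamma_P$ and $\sigma_P\sigma_P^T$ are $q$-dependent, and $f_{\rm rgd}^M$ is even in $p$, conjugation by $\mathcal{R}$ changes the sign of a term of $\mathcal{L}^{\rm f}_{T-t'}$ precisely when it contains an odd total number of factors of $p$ and of gradients $\nabla_p$: this is the case for $p^T M^{-1}\nabla_q\ph$ and for the three pieces of $(-\nabla V + \nabla\xi f_{\rm rgd}^M + \nabla\xi G_M^{-1}\ddot z(T-t'))^T\nabla_p\ph$, whereas the friction term $-(\gamma_P M^{-1}p)^T\nabla_p\ph$ and the diffusion term $\tfrac12\op{div}_p(\sigma_P\sigma_P^T\nabla_p\ph)$ are left unchanged. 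Therefore $\mathcal{L}^{\rm b}_{t'} = -\mathcal{D}_{T-t'} + \mathcal{L}_\Xi^{\rm thm}$, which by the already-proven \eqref{eq:fwgensimple} equals $-\poisson{\cdot,H}_\Xi + \mathcal{L}_\Xi^{\rm thm} - \poisson{\cdot,\Xi}\sgram^{-1}\dot\zeta(T-t')$; equivalently one may invoke the momentum-reversal identities for $\poisson{\cdot,H}_\Xi$ and $\mathcal{L}_\Xi^{\rm thm}$ established in the proof of Proposition~\ref{p:revcons}. The main obstacle I anticipate is the bookkeeping in the third paragraph: unlike the equilibrium case of Proposition~\ref{s:p:Lconstgen} where $v_\xi=0$ and $\poisson{\cdot,H}_\Xi$ simplifies to the tangential operator \eqref{eq:dbcheckcons2}, here $v_\xi=\dot z(t)$ must be kept throughout, and only the \emph{sum} $\poisson{\cdot,H}_\Xi + \poisson{\cdot,\Xi}\sgram^{-1}\dot\zeta(t)$ — not either term separately — collapses to the actual drift, so one must resist simplifying prematurely.
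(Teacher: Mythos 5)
Your proposal is correct and follows essentially the same route as the paper: split off the thermostat part as in Proposition~\ref{s:p:Lconstgen}, then use $v_\xi(q_t,p_t)=\dot z(t)$ together with the explicit inverse~\eqref{eq:sgraminv} and identity~\eqref{eq:a} to match the drift $\nabla\xi\,(f_{\rm rgd}^M+G_M^{-1}\ddot z(t))$ with $\poisson{\cdot,H}_\Xi+\poisson{\cdot,\Xi}\sgram^{-1}\dot\zeta(t)$ (you merely verify this identity in the opposite direction, expanding the bracket rather than recognizing it in the drift), and finally obtain the backward generator by conjugating with the momentum flip. The parity bookkeeping in your last paragraph reproduces exactly the paper's explicit computation of $\mathcal{R}\,\mathcal{L}^{\rm f}_{T-t'}\,\mathcal{R}$.
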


\begin{proof}
First, let us consider the terms in~$\NL$ arising from the Hamiltonian evolution and from the switching ({\it i.e.} without fluctuation-dissipation, which amounts to setting $\gamma_P=0$ and $\sigma_P =0$ in~$\NL$).
Since during this dynamics $v_\xi(q_t,p_t)=\dot{z}(t)$, \eqref{eq:poisson_Xi_H} yields:
\begin{align*}
\poisson{\Xi,H}(q_t,p_t)
&=\bmat \dot{z}(t) \\   \op{Hess}_{q_t}(\xi)(M^{-1}p_t,M^{-1}p_t)  - \pare{\nabla \xi ^T M^{-1} \nabla V }(q_t)   \emat,
\end{align*}
so that, using~\eqref{eq:sgraminv},
\begin{equation}\label{eq:inter}
\sgram^{-1}(q_t,p_t)\pare{\poisson{\Xi,H}(q_t,p_t) - \dot{\zeta}(t) }  = 
\bmat G_M^{-1}(q_t) \ddot{z}(t) +  f_{\rm rgd}^M(q_t,p_t)  \\ 0 \emat.
\end{equation}
With~\eqref{eq:a}, we then obtain:
\begin{equation}\label{eq:inter2}
  \poisson{\ph, \Xi} \sgram^{-1}\pare{\dot{\zeta}(t)-\poisson{\Xi,H} } (q_t,p_t) =
  \pare{ G_M^{-1}(q_t) \ddot{z}(t) +  f_{\rm rgd}^M(q_t,p_t)}^T
  \nabla \xi(q_t)^T\nabla_p \ph(q_t,p_t) .
\end{equation}
Now, the Hamiltonian part of the switched dynamics~$\NL$ (see also~\eqref{eq:Langconsjarz2}) can be recognized in~\eqref{eq:inter2}, so that the generator $ \mathcal{L}_t^{\rm f}$ when $(\gamma_P,\sigma_P)=(0,0)$ reads: for any smooth test function~$\ph$,
\begin{align}\label{eq:genswitched}
    \mathcal{L}^{\rm f}_t(\ph)=&\pare{ \nabla \xi f_{\rm rgd}^M +\nabla\xi  G_M^{-1} \ddot{z}(t) }^T  \nabla_p \ph \nonumber  -\pare{\nabla V }^T\nabla_p \ph + p^TM^{-1} \nabla_q \ph
\nonumber\\
  =& \poisson{\ph, \Xi}\sgram^{-1}(\dot{\zeta}(t) -\poisson{\Xi,H}) +\poisson{\ph , H}  \nonumber  \\
  =& \poisson{\ph , H}_{\Xi} + \poisson{\ph, \Xi}\sgram^{-1}\dot{\zeta}(t).
\end{align}
The full expression of the generator $\mathcal{L}_t^{\rm f}$ is then obtained by adding the
terms arising from the fluctuation-dissipation. 
These terms are directly obtained from the terms involving~$\gamma_P$ and~$\sigma_P$ 
in~\eqref{eq:Langconsjarz2}, as in the proof of Proposition~\ref{s:p:Lconstgen}.

The generator of the backward switching process given by~\eqref{eq:Langevinconstjarzback} can be obtained from similar computations. First, 
the thermostat parts in~\eqref{eq:Langevinconstjarzback} and 
in~$\NL$ are the same.
Consider now the Hamiltonian part (obtained by taking $(\gamma_P,\sigma_P)=(0,0)$) in the dynamics~\eqref{eq:Langevinconstjarzback}.  By definition of the backward dynamics, and the expression~\eqref{eq:genswitched} of the forward dynamics, the Hamiltonian part reads
\begin{align*}
  \mathcal{L}^{\rm b}_{t'}(\ph)(q,p) & = 
  \mathcal{R}\mathcal{L}^{\rm f}_{T-t'} \pare{\mathcal{R}(\ph)}(q,p) \nonumber  \\
  & = \pare{ \nabla \xi(q) f_{\rm rgd}^M(q,p) +\nabla\xi(q)  G_M^{-1}(q) \ddot{z}(T-t') }^T  
  \pare{-\nabla_p \ph} \\
  & \quad -\nabla V(q) ^T\pare{ - \nabla_p \ph} - p^TM^{-1} \nabla_q \ph,
\end{align*}
so that 
\begin{align*}
\mathcal{L}^{\rm b}_{t'}\ph &= - \mathcal{L}^{\rm f}_{T-t'}\ph
 = -\poisson{\ph , H}_{\Xi} - \poisson{\ph, \Xi}\sgram^{-1}\dot{\zeta}(T-t').
\end{align*}
This gives~\eqref{eq:bwgensimple}.
\end{proof}

\subsection{Jarzynski-Crooks identity}
\label{sec:jarz_lang_cons}

Before stating the main result of this section (Theorem~\ref{th:crookslangcons} below),
we need to introduce a notion of work. This quantity is most conveniently defined
for deterministic dynamics, but the corresponding definition is also valid for 
stochastic dynamics.

We define the work $\pare{\W_t}_{t \geq 0}$ associated with the constraining force~$\nabla \xi(q_t) \, d \lambda_t$ in~$\NL$ as the physical displacement multiplied by the force:
\begin{align}
  d \W_t& : =   \pare{\frac{d q_t}{dt}}^T  
  \circ \Big( \nabla \xi (q_t) \, d \lambda_t \Big) =  \pare{\frac{d q_t}{dt}}^T  \nabla \xi (q_t) \,\circ  d \lambda_t 
  = \dot{z}^T(t) \circ  d \lambda_t \nonumber \\
  &  =  \dot{z}^T(t) \,  d \lambda_t. \label{eq:work} 
\end{align}
By convention, $\W_0 = 0$. In the above computations, we used 
successively the fact that $t\mapsto \xi(q_t)$, 
and then $t \mapsto z(t)$ are differentiable processes, 
so that Stratonovitch and It\^o integrations are equivalent.
Let us introduce the deterministic version of the nonequilibrium 
process~$\NL$ (\textit{i.e.} $(\gamma_P,\sigma_P)=(0,0)$):
\begin{equation}\label{eq:Hamjarz}
\left\{
  \begin{aligned}
    d \tilde{q}_t & = M^{-1}\tilde{p}_t \, dt,\\[6pt]
    d \tilde{p}_t & = -\nabla V (\tilde{q}_t)  \,dt + \nabla \xi(\tilde{q}_t) \, 
    d \tilde{\lambda}_t, \\[6pt]
    \xi(\tilde{q}_t) & = z(t), \hspace{5cm} (C_q(t))
  \end{aligned} \right.
\end{equation}
and denote by $\Phi_{t,t+h}: \manq_{\xi,v_\xi}(z(t), \dot{z}(t)) \to \manq_{\xi,v_\xi}(z(t+h), \dot{z}(t+h)) $ the associated flow between time $t \in [0,T]$ and $t+h \in [0,T]$.
The work can now be written out more explicitly using the following lemma:

\begin{lemma} 
  The infinitesimal variation of the work~\eqref{eq:work} reads:
  $$  d \W_t =w(t,q_t,p_t) \, dt, $$
  where for all $t \in [0,T]$ and all $(q,p) \in \manq_{\xi,v_\xi}(z(t),\dot{z}(t))$,
  \begin{align}
    w(t,q,p) &=  \dot{\zeta}(t) ^T\sgram^{-1} \poisson{\Xi,H}(q,p)   
    \label{eq:workexpr1}\\
    &= \dot{z}(t)^T\pare{  G_M^{-1}(q) \ddot{z}(t) + f_{\rm rgd}^M(q,p) }   \label{eq:workexpr2} \\
    & = \left.\pare{ \frac{d}{dh} H \circ \Phi_{t,t+h} }\right|_{h=0}(q,p) .   \label{eq:workexpr3}
  \end{align}
  The total exchanged work is then a time integral associated with the 
  path $t \mapsto (q_t,p_t)$, and is denoted by:
  \[
  \W_{0,T}\pare{ \set{q_t,p_t}_{0 \leq t \leq T}} = \W_T - \W_0 = \int_{0}^{T} w(t,q_t,p_t)\, dt.
  \]
\end{lemma}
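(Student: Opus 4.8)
The plan is to reduce all three expressions for $w$ to the explicit formula~\eqref{eq:laglangjarz} for the Lagrange multipliers of~$\NL$, after which only elementary algebra remains. I would start from~\eqref{eq:work}, which already records $d\W_t = \dot z^T(t)\,d\lambda_t$ (the Stratonovitch and It\^o integrals coinciding because $t\mapsto z(t)$ is a deterministic $C^2$, hence bounded-variation, schedule). Substituting~\eqref{eq:laglangjarz}, namely $d\lambda_t = \left(f_{\rm rgd}^M(q_t,p_t) + G_M^{-1}(q_t)\ddot z(t)\right)dt$, and recalling that the martingale and dissipation contributions have already cancelled there thanks to the projected form~\eqref{eq:gamma_P} of $(\sigma_P,\gamma_P)$, gives at once $d\W_t = w(t,q_t,p_t)\,dt$ with $w(t,q,p) = \dot z(t)^T\left(G_M^{-1}(q)\ddot z(t) + f_{\rm rgd}^M(q,p)\right)$, which is~\eqref{eq:workexpr2}; integrating in $t$ and using $\W_0 = 0$ then yields the stated formula for $\W_{0,T}$.

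Next I would check~\eqref{eq:workexpr1}. On the fibre $\manq_{\xi,v_\xi}(z(t),\dot z(t))$ the first $\nc$ components of $\poisson{\Xi,H}$ equal $v_\xi = \dot z(t)$ by~\eqref{eq:poisson_Xi_H}, so $\poisson{\Xi,H} - \dot\zeta(t)$ has vanishing first block, and the algebraic identity~\eqref{eq:inter}, established in the proof of Proposition~\ref{prop:generators_noneq} and valid pointwise on that fibre, reads $\sgram^{-1}\left(\poisson{\Xi,H}(q,p) - \dot\zeta(t)\right) = \left(G_M^{-1}(q)\ddot z(t) + f_{\rm rgd}^M(q,p),\,0\right)^T$. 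Since $\sgram$, hence $\sgram^{-1}$, is skew-symmetric (see~\eqref{eq:sgram}), one has $\dot\zeta(t)^T\sgram^{-1}\dot\zeta(t) = 0$, so that $\dot\zeta(t)^T\sgram^{-1}\poisson{\Xi,H}(q,p) = \dot\zeta(t)^T\sgram^{-1}\left(\poisson{\Xi,H}(q,p) - \dot\zeta(t)\right)$; contracting the right-hand side with $\dot\zeta(t) = (\dot z(t),\ddot z(t))$ recovers exactly the expression of $w$ found above.

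Finally, for~\eqref{eq:workexpr3} I would let $(\tilde q_s,\tilde p_s)$ solve the deterministic constrained dynamics~\eqref{eq:Hamjarz} with $(\tilde q_t,\tilde p_t) = (q,p) \in \manq_{\xi,v_\xi}(z(t),\dot z(t))$, so that $\Phi_{t,t+h}(q,p) = (\tilde q_{t+h},\tilde p_{t+h})$, and differentiate $H$ along this path: the terms in $-\nabla V$ arising from $\dot{\tilde q}$ and from $\dot{\tilde p}$ cancel by symmetry of $M^{-1}$, leaving $\frac{d}{ds}H(\tilde q_s,\tilde p_s) = (M^{-1}\tilde p_s)^T\nabla\xi(\tilde q_s)\,\dot{\tilde\lambda}_s = v_\xi(\tilde q_s,\tilde p_s)^T\dot{\tilde\lambda}_s = \dot z(s)^T\dot{\tilde\lambda}_s$, where $v_\xi(\tilde q_s,\tilde p_s) = \dot z(s)$ follows from differentiating the constraint $\xi(\tilde q_s)=z(s)$ once. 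Evaluating at $s=t$ and using~\eqref{eq:laglangjarz} again (which, as already noted, does not involve $(\sigma_P,\gamma_P)$ and thus applies along the deterministic flow), $\dot{\tilde\lambda}_t = f_{\rm rgd}^M(q,p) + G_M^{-1}(q)\ddot z(t)$, one obtains $\left.\frac{d}{dh}\left(H\circ\Phi_{t,t+h}\right)\right|_{h=0}(q,p) = w(t,q,p)$. The whole argument is essentially bookkeeping; the only minor points deserving care are the pointwise (as opposed to along-a-trajectory) use of~\eqref{eq:inter} and of $v_\xi = \dot z$, and the vanishing of the Stratonovitch correction in~\eqref{eq:work} because $z$ is deterministic, so I do not anticipate a genuine obstacle here.
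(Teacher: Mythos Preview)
Your proof is correct and follows essentially the same route as the paper: you derive~\eqref{eq:workexpr2} from~\eqref{eq:laglangjarz}, obtain~\eqref{eq:workexpr1} from~\eqref{eq:inter} together with the skew-symmetry of $\sgram^{-1}$ (so that $\dot\zeta^T\sgram^{-1}\dot\zeta=0$), and recover~\eqref{eq:workexpr3} by differentiating $H$ along the deterministic constrained flow and identifying $\dot{\tilde\lambda}_t$ via~\eqref{eq:laglangjarz}. The order of the steps and the phrasing differ slightly from the paper, but the ingredients and the logic are the same.
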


Note that the expression~\eqref{eq:workexpr3} can be interpreted as the energy variation of the system during the switching when the stochastic thermostat is turned off.

\begin{proof} 
The expression of the Lagrange multipliers in~\eqref{eq:laglangjarz} 
yields~\eqref{eq:workexpr2}:
\[
\dot{z}(t)^T d \lambda_t = \dot{z}(t)^T\pare{ G_M^{-1}(q_t) \ddot{z}(t) + 
  f_{\rm rgd}^M(q_t,p_t) } \, dt.
\]
Moreover, \eqref{eq:inter} gives:
\begin{align*}
  \dot{z}(t)^T \pare{  G_M^{-1}(q_t) \ddot{z}(t) + f_{\rm rgd}^M(q_t,p_t) } \, dt  &  =  
  \dot{\zeta}(t)^T \sgram^{-1}(q_t,p_t)\pare{\poisson{\Xi,H}(q_t,p_t) -\dot{\zeta}(t)} \\
  & =  \dot{\zeta}(t)^T \sgram^{-1} \poisson{\Xi,H}(q_t,p_t), 
\end{align*}
where in the last line we have used $\dot{\zeta}(t)^T \sgram^{-1} \dot{\zeta}(t)= 0$. This gives~\eqref{eq:workexpr1}.
To prove~\eqref{eq:workexpr3}, we compute the variations of the energy 
$H(\tilde{q}_t,\tilde{p}_t)$ for $(\tilde{q}_t,\tilde{p}_t)$ solution of~\eqref{eq:Hamjarz} 
with initial condition~$(q,p)$:
\begin{align*}
  d H(\tilde{q}_t,\tilde{p}_t) & = \tilde{p}_t^TM^{-1} \, d \tilde{p}_t +  \tilde{p}_t^TM^{-1} \nabla V(\tilde{q}_t) \, dt \\
& = \dot{z}(t) ^ T \, d \tilde{\lambda}_t   \\
& = \dot{z}(t)^T\pare{ G_M^{-1}(\tilde{q}_t) \ddot{z}(t) + f_{\rm rgd}^M(\tilde{q}_t,\tilde{p}_t) } \, dt.
\end{align*}
The last equality is obtained using the computation of the Lagrange multipliers in~\eqref{eq:laglangjarz}. This yields~\eqref{eq:workexpr3}.
\end{proof}

We are now in position to state the main result of this section.

\begin{theorem}[Jarzynski-Crooks fluctuation identity]
  \label{th:crookslangcons}
  Consider the normalization $Z_{z(t),\dot{z}(t)}$ for the 
  canonical distribution $\mu_{\manq_{\xi,v_\xi}(z(t),\dot{z}(t))}$ defined in~\eqref{eq:mu_v}.
  Denote by $\{q_t,p_t\}_{0 \leq t \leq T}$ the solution of the forward 
  Langevin dynamics~$\NL$ with initial conditions
  distributed according to
  \begin{equation}\label{eq:ICf}
  (q_0,p_0) \sim \mu_{\manq_{\xi,v_\xi}(z(0),\dot{z}(0))}(dq\, dp ),
  \end{equation}
  and by $\{q^{\rm b}_{t'},p^{\rm b}_{t'}\}_{0 \leq t' \leq T }$ the solution of the backward 
  Langevin process~\eqref{eq:Langevinconstjarzback} with initial conditions
  distributed according to
  \begin{equation}\label{eq:ICb}
  (q^{\rm b}_0,p^{\rm b}_0) \sim \mu_{\manq_{\xi,v_\xi}(z(T),\dot{z}(T))}(dq\, dp ).
  \end{equation}
  Then, the following Jarzynski-Crooks identity holds on~$[0,T]$: for any bounded 
  path functional~$\ph_{[0,T]}$, 
  \begin{equation}
    \label{eq:Jarz_Crooks}
    \frac{Z_{z(T),\dot{z}(T)}}{Z_{z(0),\dot{z}(0)}} 
    = \frac{ 
    \E \pare{ \ph_{[0,T]}\left(\{q_t,p_t\}_{0 \leq t \leq T}\right) \,  
      {\rm e}^{-\beta\W_{0,T}\left(\set{q_t,p_t}_{t\in [0,T]}\right) } } }
    {\E \pare{ \ph^{\rm r}_{[0,T]}\left(\{q^{\rm b}_{t'},p^{\rm b}_{t'}\}_{0 \leq t' \leq T }\right) }},
  \end{equation}
  where $(\, \cdot \,)^{\rm r}$ denotes the composition with 
  the operation of time reversal of paths:
  \begin{equation}
    \label{eq:time_reversal_on_paths}
    \ph_{[0,T]}^{\rm r}\Big(\{ q^{\rm b}_{t'},p^{\rm b}_{t'}\}_{0 \leq t' \leq T}\Big) = 
    \ph_{[0,T]}\Big(\{ q^{\rm b}_{T-t},p^{\rm b}_{T-t} \}_{0 \leq t \leq T}\Big).
  \end{equation} 
\end{theorem}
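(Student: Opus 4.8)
The plan is to derive the Jarzynski--Crooks identity from a pathwise ``microscopic reversibility'' relation between the law of the forward process and that of the time-reversed backward process, carrying the exponential work weight and the ratio of partition functions, and then to integrate this relation against $\ph_{[0,T]}$. By a monotone class argument it suffices to treat cylinder functionals $\ph_{[0,T]}(\{q_t,p_t\}) = \prod_{i=0}^{n} g_i(q_{t_i},p_{t_i})$ along a subdivision $0=t_0<\dots<t_n=T$; using the Markov property of the time-inhomogeneous dynamics~$\NL$ and of its backward version, this reduces to a two-slice statement: the forward propagator \emph{twisted by the work}, $Q^{\rm f}_{s,t}\chi := \E^{\rm f}_{s,\cdot}\!\big[\chi(q_t,p_t)\,\rme^{-\beta\W_{s,t}}\big]$, is dual to the backward propagator $P^{\rm b}_{T-t,T-s}$ when both are weighted by $\rme^{-\beta H}$ and by the Liouville measures $\sigma_{\manq_{\xi,v_\xi}(z(s),\dot z(s))}$ and $\sigma_{\manq_{\xi,v_\xi}(z(t),\dot z(t))}$ on the two slices. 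Granting this, normalizing the weighted forward measure by $Z_{z(0),\dot z(0)}$ and the backward one by $Z_{z(T),\dot z(T)}$ turns it into~\eqref{eq:Jarz_Crooks}.

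The heart of the argument is the corresponding \emph{infinitesimal} identity at the level of generators. Decompose, as in Proposition~\ref{prop:generators_noneq},
\[
\mathcal{L}^{\rm f}_t = \poisson{\cdot,H}_{\Xi} + \mathcal{L}_{\Xi}^{\rm thm} + C_t,
\qquad C_t := \poisson{\cdot,\Xi}\,\sgram^{-1}\dot{\zeta}(t).
\]
The first two summands form the equilibrium constrained Langevin generator on the slice $\manq_{\xi,v_\xi}(z(t),\dot z(t))$, and the detailed balance up to momentum reversal from Proposition~\ref{p:revcons} takes care of them (its thermostat part, we recall, was established there for arbitrary $v_z$, precisely for this purpose). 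The new transport term $C_t$ moves between neighbouring slices, and the key observation is that the work rate $w(t,q,p) = \dot{\zeta}(t)^T\sgram^{-1}\poisson{\Xi,H}(q,p)$ is exactly the compensator required: combining~\eqref{eq:a}, \eqref{eq:inter2} and the divergence formula~\eqref{eq:divsympl} on $\manq_{\xi,v_\xi}(z(t),\dot z(t))$, one checks that, once the time derivative of the moving reference measure $\sigma_{\manq_{\xi,v_\xi}(z(t),\dot z(t))}$ is taken into account, the weighted adjoint of $C_t$ matches the corresponding term of $\mathcal{L}^{\rm b}_{T-t}$ up to the factor $-\beta w(t,\cdot)$. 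Geometrically this is transparent through the deterministic flow $\Phi_{t,t+h}$ of~\eqref{eq:Hamjarz}: it is symplectic in the sense recalled in Section~\ref{sec:const_notation}, hence transports $\sigma_{\manq_{\xi,v_\xi}(z(t),\dot z(t))}$ onto $\sigma_{\manq_{\xi,v_\xi}(z(t+h),\dot z(t+h))}$, and by~\eqref{eq:workexpr3} it carries energy exactly by the work, so that $\Phi_{s,t}$ pushes $\rme^{-\beta(H+\W_{s,t})}\sigma_{\manq_{\xi,v_\xi}(z(s),\dot z(s))}$ forward to $\rme^{-\beta H}\sigma_{\manq_{\xi,v_\xi}(z(t),\dot z(t))}$.

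From the infinitesimal identity, the finite-time two-slice statement follows by a Duhamel/Feynman--Kac integration over $[s,t]$; equivalently one may Trotterize~$\NL$ into alternating $\Phi$-steps and thermostat kicks, use the push-forward property of $\Phi$ together with the detailed balance up to momentum flip of the thermostat kicks (Proposition~\ref{p:revcons}), telescope the momentum reversals, and let the time step tend to $0$. The main obstacle will be precisely that the state space $\manq_{\xi,v_\xi}(z(t),\dot z(t))$ changes with $t$: every adjoint is taken against a $t$-dependent reference measure, so the extra $\partial_t\sigma_{\manq_{\xi,v_\xi}}$ contributions must be shown to be absorbed exactly by the work rate; and one must justify the Feynman--Kac representation (or the convergence of the splitting scheme, together with the convergence of the discretized work to $\W_{0,T}$) despite the moving constraint and the possibly unbounded coefficients. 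Finally, the momentum-reversal bookkeeping built into the definition of $\mathcal{L}^{\rm b}$ — flipping momenta in the initial condition of the backward process and again after the evolution — is what ensures that the two partition-function normalizations attach to the correct endpoints $z(0)$ and $z(T)$.
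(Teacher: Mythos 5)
Your proposal follows essentially the same route as the paper: reduce the identity to a nonequilibrium detailed balance condition at the level of generators (the paper's Lemma~\ref{lem:jarz_gen}, proved via the work-twisted Feynman--Kac semigroups, the two-slice Crooks identity, and the extension to cylinder path functionals), then verify that condition by splitting $\mathcal{L}^{\rm f}_t$ into the equilibrium constrained generator plus the transport term $\poisson{\cdot,\Xi}\sgram^{-1}\dot{\zeta}(t)$, handling the thermostat part with the detailed balance~\eqref{eq:balance_momenta_rev} for general $v_z$, and absorbing the time derivative of the moving reference measure (Lemma~\ref{l:consint}, i.e.~\eqref{eq:varjarzmeas}) exactly by the work rate together with the divergence formula~\eqref{eq:divsympl}. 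You also correctly flag the same technical caveats (regularity of the semigroups on the moving slices) that the paper defers to the overdamped analysis.
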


Note that the theorem still holds in the Hamiltonian case, 
\emph{i.e.} when $(\gamma_P,\sigma_P) =(0,0)$. 
The choice $\ph_{[0,T]} = 1$ in~\eqref{eq:Jarz_Crooks} 
leads to the following work fluctuation identity:
\begin{equation}
\label{eq:FK_multi_bis}
F^{\xi,v_\xi}_{\rm rgd}(z(T),\dot{z}(T))-F^{\xi,v_\xi}_{\rm rgd}(z(0),\dot{z}(0)) = 
-\frac1\beta \ln\Big[ \E\left(\rme^{-\beta \W_{0,T}(\set{q_t,p_t}_{t\in [0,T]}) }\right) \Big].
\end{equation}
Besides, upon choosing a path functional $\exp(\theta \beta\W_{0,T})$,
it is possible to obtain a family of free energy estimators, parameterized by $\theta$ 
and where both forward and backward paths are weighted by the exponential
of some work. Moreover, the standard Crooks equality on ratios of probability density functions
of work values is also a consequence of~\eqref{eq:Jarz_Crooks}, see 
Section~4.2.2 in~\cite{LelRouStoBook}.

Note also that the choice $\ph_{[0,T]}(q,p) = \phi(q_T,p_T)$ leads to the following 
representation of the canonical distribution~$\mu_{\manq_{\xi,v_\xi}(z(T),\dot{z}(T))}$:
\begin{equation}
\label{eq:FK_multi_dist}
\frac{\E \pare{ \phi(q_T,p_T) {\rm e}^{-\beta\W_{0,T}\pare{\set{q_t,p_t}_{t\in [0,T]}}}  }}{\E \pare{ {\rm e}^{-\beta\W_{0,T}\pare{\set{q_t,p_t}_{t\in [0,T]}}}  }}  = \int_{\manq_{\xi,v_\xi}(z(T),\dot{z}(T)) } \phi(q,p) \,  \mu_{\manq_{\xi,v_\xi}(z(T),\dot{z}(T))}(dq\, dp ).
\end{equation}
The usual free energy profile $z \mapsto F(z)$ can therefore be computed using the relations~\eqref{eq:corr}-\eqref{eq:FK_multi_bis_corr} presented in the introduction. Indeed, Equation~\eqref{eq:FK_multi_bis_corr} (see~\eqref{eq:FK_multi_bis_corr_Section4} below) can be proved by combining~\eqref{eq:FFtilde_v} and~\eqref{eq:FK_multi_bis}--\eqref{eq:FK_multi_dist} as follows:
\begin{align}
F(z(T))-F(z(0)) &= \pare {F(z(T)) - F^{\xi,v_\xi}_{\rm rgd}(z(T),\dot{z}(T)) } - \pare{F(z(0))- F^{\xi,v_\xi}_{\rm rgd}(z(0),\dot{z}(0))} \nonumber \\
& \quad -\frac1\beta \ln\Big[ \E\left(\rme^{-\beta \W_{0,T}(\set{q_t,p_t}_{t\in [0,T]}) }\right) \Big] \nonumber \\
&=  -\frac1\beta \ln \E \pare{  (\det G_M(q_T)) ^{-1/2} {\rm e}^{\frac{\beta}{2} \dot{z}(T)^T G_M^{-1}(q_T) \dot{z}(T)}\, {\rm e}^{-\beta\W_{0,T}\pare{\set{q_t,p_t}_{t\in [0,T]}}}  } \nonumber \\
&\quad + \frac1\beta \ln \E \pare{  (\det G_M(q_0)) ^{-1/2} {\rm e}^{\frac{\beta}{2} \dot{z}(0)^T G_M^{-1}(q_0)\dot{z}(0)} } \nonumber \\
& =  -\frac1\beta \ln \left ( \frac{ \E\left(\rme^{-\beta \left[ \W_{0,T}\pare{\set{q_t,p_t}_{t\in [0,T]}} + C(T,q_T)   \right]  }\right) } {  \E\left(\rme^{-\beta  C(0,q_0)   }  \right) } \right ),
\label{eq:identity_to_approximate}
\end{align}
where the corrector $C(t,q)$ is defined in~\eqref{eq:corr}:
\[
C(t,q) = \frac{1}{2\beta} \ln \Big( \det G_M(q) \Big) - 
\frac{1}{2} \dot{z}(t)^T G_M^{-1}(q) \dot{z}(t).
\]
This leads to the following relation:
\begin{equation}
\label{eq:FK_multi_bis_corr_Section4}
F(z(T)) - F(z(0)) = -\frac1\beta \ln  \left( \frac{\E\left(\rme^{-\beta \left[\W_{0,T}\pare{ \set{q_t,p_t}_{0 \leq t \leq T} }  +C(T,q_T)   \right]  }\right)}{\E\left(\rme^{-\beta C(0,q_0)   }\right)} \right),
\end{equation}
Estimators of the free energy based on~\eqref{eq:FK_multi_bis_corr} 
can then be constructed, see Chapter~4 in~\cite{LelRouStoBook} for a review.

\medskip

Before turning to the proof of Theorem~\ref{th:crookslangcons}, we first give the general lemma which enables to deduce the Jarzynski-Crooks 
fluctuation identity from a {\em nonequilibrium detailed balance condition} (similar to the one presented in~\cite{CG08} for switchings arising from a time-dependence in the Hamiltonian). 
\begin{lemma}\label{lem:jarz_gen} 
Let $(q_t,p_t)_{0 \le t \le T}$ (resp. $(q^{\rm b}_t,p^{\rm b}_t)_{0 \le t \le T}$) be a Markov process with infinitesimal generator ${ \mathcal L}^{\rm f} _t$ (resp. ${ \mathcal L}^{\rm b} _t$) and initial conditions distributed according to~\eqref{eq:ICf} (resp.~\eqref{eq:ICb}). Let us assume that the following  nonequilibrium detailed balance condition is satisfied: 
  for any two smooth test functions $\ph_1$, $\ph_2$,
  \begin{equation}
    \label{eq:jarzbalancecons}
    \begin{split}
      & \int_{\manq_{\xi,v_\xi}(z(t),\dot{z}(t))} \pare{   \ph_1{ \mathcal L}^{\rm f} _t(\ph_2) - \ph_2 {\mathcal L}^{\rm b}_{T-t}(\ph_1)   }  {\rm e }^{- \beta H} \,d \sigma_{\manq_{\xi,v_\xi}(z(t),\dot{z}(t))}  \\
      &\quad =  \int_{\manq_{\xi,v_\xi}(z(t),\dot{z}(t))} \beta w^{}(t,\cdot) \ph_1\ph_2 {\rm e }^{- \beta H} \,d \sigma_{\manq_{\xi,v_\xi}(z(t),\dot{z}(t))}  \\
      & \quad \quad + \frac{d}{dt} \pare{ \int_{\manq_{\xi,v_\xi}(z(t),\dot{z}(t))} \ph_1 \ph_2 {\rm e }^{- \beta H} \,d \sigma_{\manq_{\xi,v_\xi}(z(t),\dot{z}(t))} }.
  \end{split}
  \end{equation}
  Then the Jarzynski-Crooks fluctuation identity~\eqref{eq:Jarz_Crooks} holds.
\end{lemma}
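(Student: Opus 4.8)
The plan is to deduce~\eqref{eq:Jarz_Crooks} from a duality, valid on each subinterval, between the \emph{work--tilted} forward evolution and the backward evolution, both paired against the unnormalized canonical measures. Throughout I abbreviate $\manq_t := \manq_{\xi,v_\xi}(z(t),\dot{z}(t))$, $\pi_t := \e^{-\beta H}\,\sigma_{\manq_t}$ (so that $\pi_t(\manq_t)=Z_{z(t),\dot{z}(t)}=:Z_t$), and $\langle f,g\rangle_t := \int_{\manq_t} fg\,d\pi_t$ for ambient smooth $f,g$; with this notation the nonequilibrium detailed balance hypothesis~\eqref{eq:jarzbalancecons} reads
\[
\langle \ph_1,\mathcal{L}^{\rm f}_t \ph_2\rangle_t - \langle \ph_2,\mathcal{L}^{\rm b}_{T-t}\ph_1\rangle_t \;=\; \beta\,\langle w(t,\cdot)\,\ph_1,\ph_2\rangle_t \;+\; \frac{d}{dt}\langle \ph_1,\ph_2\rangle_t ,
\]
where in the last (pairing--variation) term $\ph_1,\ph_2$ are held fixed and only $\pi_t$ and the domain $\manq_t$ are differentiated.

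First I would introduce two propagators. For $s\le t$, let $U_{s,t}\ph(x):=\E\big(\ph(q_t,p_t)\,\e^{-\beta\W_{s,t}}\,\big|\,(q_s,p_s)=x\big)$ be the forward semigroup weighted by the work; using $d\W_t=w(t,q_t,p_t)\,dt$ from the preceding lemma and It\^o's formula, it satisfies the Feynman--Kac equation $\partial_s U_{s,t}=-(\mathcal{L}^{\rm f}_s-\beta w(s,\cdot))U_{s,t}$, $U_{t,t}={\rm Id}$. Likewise, writing $P^{\rm b}_{a,b}$ for the transition operator of the backward dynamics~\eqref{eq:Langevinconstjarzback}, set $V_{s,t}:=P^{\rm b}_{T-t,T-s}$; by the Kolmogorov backward equation $\partial_t V_{s,t}=\mathcal{L}^{\rm b}_{T-t}V_{s,t}$, $V_{s,s}={\rm Id}$. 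Note that $U_{s,t}$ sends functions on $\manq_t$ to functions on $\manq_s$, and $V_{s,t}$ sends functions on $\manq_s$ to functions on $\manq_t$. The heart of the argument is the duality
\[
\int_{\manq_s}\psi\,(U_{s,t}\ph)\,d\pi_s \;=\; \int_{\manq_t}(V_{s,t}\psi)\,\ph\,d\pi_t ,\qquad s\le t ,
\]
for all test $\psi$ on $\manq_s$, $\ph$ on $\manq_t$. To prove it, freeze $s,t,\psi,\ph$ and set $\Lambda(r):=\langle V_{s,r}\psi,\,U_{r,t}\ph\rangle_r$ for $r\in[s,t]$, so that $\Lambda(s)$ and $\Lambda(t)$ are exactly the two sides. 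Differentiating $\Lambda$ in $r$, using bilinearity of $\langle\cdot,\cdot\rangle_r$ together with the two propagator equations above (applied in the variable $r$), one obtains three contributions to $\Lambda'(r)$ — a generator term, a $\beta w$-term, and a pairing--variation term — and the hypothesis~\eqref{eq:jarzbalancecons} applied with $\ph_1=V_{s,r}\psi$ and $\ph_2=U_{r,t}\ph$ says precisely that the generator term equals minus the sum of the other two. Hence $\Lambda'\equiv 0$ and the duality follows. (This is also where the asymmetric shift $\mathcal{L}^{\rm b}_{T-t}$ in~\eqref{eq:jarzbalancecons} is needed, so that it matches $\partial_r V_{s,r}=\mathcal{L}^{\rm b}_{T-r}V_{s,r}$.)

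Finally I would conclude by a monotone--class reduction to cylinder path functionals $\ph_{[0,T]}(\omega)=\prod_{i=0}^n g_i(\omega_{t_i})$ with $0=t_0<\dots<t_n=T$. The Markov property of the forward dynamics and the multiplicativity $\e^{-\beta\W_{0,T}}=\prod_i\e^{-\beta\W_{t_i,t_{i+1}}}$ give
\[
\E\big(\ph_{[0,T]}\,\e^{-\beta\W_{0,T}}\big) = \frac1{Z_0}\int_{\manq_0} g_0\, U_{0,t_1}\!\Big(g_1\, U_{t_1,t_2}\big(\cdots U_{t_{n-1},T}(g_n)\big)\Big)\,d\pi_0 ,
\]
and applying the duality successively on $[0,t_1],[t_1,t_2],\dots,[t_{n-1},T]$ rewrites the right-hand side as $Z_0^{-1}\int_{\manq_T} V_{t_{n-1},T}\big(\cdots V_{t_1,t_2}\big((V_{0,t_1}g_0)\,g_1\big)\cdots\big)\,g_n\,d\pi_T$. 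Since $\pi_T=Z_T\,\mu_{\manq_{\xi,v_\xi}(z(T),\dot{z}(T))}$ is $Z_T$ times the backward initial law~\eqref{eq:ICb}, and the $V_{s,t}$ are backward transition operators (with $V_{t_{i-1},t_i}=P^{\rm b}_{T-t_i,T-t_{i-1}}$), this last integral unfolds to $Z_T\,\E\big(\prod_i g_i(q^{\rm b}_{T-t_i},p^{\rm b}_{T-t_i})\big)=Z_T\,\E\big(\ph^{\rm r}_{[0,T]}(\{q^{\rm b}_{t'},p^{\rm b}_{t'}\}_{0\le t'\le T})\big)$, which is exactly~\eqref{eq:Jarz_Crooks}. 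I expect the main obstacle to be analytic rather than algebraic: one must justify differentiating under the integral over the \emph{moving} submanifolds $\manq_t$, establish the Feynman--Kac and Kolmogorov equations in this time-dependent state-space setting, and carry out the approximation of bounded path functionals by cylinder ones; once these are granted, the cancellation producing the identity is an immediate consequence of~\eqref{eq:jarzbalancecons}.
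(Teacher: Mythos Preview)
Your proof is correct and follows essentially the same route as the paper: define the work--tilted forward propagator and the backward propagator, write their Feynman--Kac and Kolmogorov equations, show that the pairing $\langle V_{s,r}\psi,\,U_{r,t}\ph\rangle_r$ is constant in $r$ by differentiating and invoking~\eqref{eq:jarzbalancecons}, and then extend from two-point functionals to cylinder functionals via the Markov property. The paper does exactly this (with $P^{\rm f}_{t,T}$, $P^{\rm b}_{t',T}$ in place of your $U_{s,t}$, $V_{s,t}$, and the single interval $[0,T]$ instead of your general $[s,t]$), and likewise flags the same analytic caveats you raise about regularity of the semigroups on moving submanifolds.
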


\begin{proof}
We use in this proof the short-hand notation $Z_t$, $\pi_t$ and $\mS_t$ for the 
partition function $Z_{z(t),\dot{z}(t)}$, the (unnormalized) distribution 
$Z_{z(t),\dot{z}(t)} \mu_{\manq_{\xi,v_\xi}(z(t),\dot{z}(t))} = \mathrm{e}^{-\beta H} 
\sigma_{\manq_{\xi,v_\xi}(z(t),\dot{z}(t))}$ 
and the submanifold $\manq_{\xi,v_\xi}(z(t),\dot{z}(t))$ respectively. 

Let us introduce the following weighted transition operators: for any bounded test function~$\ph$,
  \begin{align}
    P_{t,T}^{\rm f}(\ph)(q,p)  &= 
    \E\Big(\ph(q_T,p_T) \, {\rm e}^{-\beta \W_{t,T}\pare{\set{q_s,p_s}_{s\in [t,T]}  } } \, \Big | \, (q_t,p_t) = (q,p) \Big),
    \label{eq:Pfwd}\\ 
    P^{\rm b}_{t',T}(\ph)(q,p) & =  
    \E\Big(\ph(q^{\rm b}_{T},p^{\rm b}_{T})\, \Big | \, (q^{\rm b}_{t'},p^{\rm b}_{t'})=(q,p) \Big), \label{eq:Pbwd}
\end{align}
where $(q_t,p_t)_{0 \le t \le T}$ (resp. $(q^{\rm b}_t,p^{\rm b}_t)_{0 \le t \le T}$) is a Markov process with infinitesimal generator ${ \mathcal L}^{\rm f} _t$ (resp. ${ \mathcal L}^{\rm b} _t$), and
$\W_{t,T} = \W_{0,T} - \W_{0,t}$.

We assume that these operators are well defined and smooth with respect to time for sufficiently smooth test functions defined in an open neighborhood of $\manq_{\xi,v_\xi}(z(t),\dot{z}(t))$  
  and $\manq_{\xi,v_\xi}(z(t'),\dot{z}(t'))$ respectively (for any $t,t' \in [0,T]$). 

The transition operators satisfy the following 
backward Kolmogorov evolution equations:
\[
\begin{cases}
 \partial_t P^{\rm f}_{t,T} = -{\mathcal L}_t^{\rm f} P^{\rm f}_{t,T} + \beta \,
w(t,\cdot) \, P_{t,T}^{\rm f}, \\[6pt]
P^{\rm f}_{T,T} = {\rm Id}, 
\end{cases}
\quad
\begin{cases}
\partial_{t'} P^{\rm b}_{t',T} = -{\mathcal L}^{\rm b}_{t'} 
P^{\rm b}_{t',T}, \\[6pt]
P^{\rm b}_{T,T} = {\rm Id}. 
\end{cases}
\]
Consider now two test functions $\ph_0$ and $\ph_T$. The balance 
condition~\eqref{eq:jarzbalancecons} implies
\[
\frac{d}{dt} \pare{ \int_{\mS_t} P^{\rm f}_{t,T}(\ph_T) \, P^{\rm b}_{T-t,T}(\ph_0) \, d\pi_t }=0.
\]
Integrating this equality on $[0,T]$ yields 
\begin{equation}
\label{eq:two_pt_Crooks}
\int_{\mS_0} P_{0,T}^{\rm f}(\ph_T) \, \ph_0 \, d\pi_0 = 
\int_{\mS_T} \ph_T \, P^{\rm b}_{0,T}(\ph_0) \, d\pi_T,
\end{equation}
which is the Crooks identity~\eqref{eq:Jarz_Crooks} for path functionals of the form 
\[
\ph_{[0,T]}(q,p) = \ph_0(q_0,p_0) \, \ph_T(q_T,p_T). 
\]
Indeed,
\[
\int_{\mS_0} P_{0,T}^{\rm f}(\ph_T) \, \ph_0 \, d\pi_0 = Z_0 \, 
\mathbb{E} \Big [  \ph_T(q_T,p_T) \, \ph_0(q_0,p_0) \, 
{\rm e}^{-\beta \W_{0,T}(\set{q_s,p_s}_{s\in[0,T]})  } \Big ], 
\]
while
\[
\int_{\mS_T} \ph_T \, P^{\rm b}_{0,T}(\ph_0) \, d\pi_T = Z_T \, 
\mathbb{E} \Big [ \ph_T(q^{\rm b}_0,p^{\rm b}_0) \, \ph_0(q_T^{\rm b},p_T^{\rm b}) \Big ].
\]
Then, using the Markov property of the forward and backward processes, 
Crooks identity~\eqref{eq:Jarz_Crooks} can be extended to 
finite-dimensional path functionals of the form:
\begin{equation}
\label{eq:finite_d_paths}
\ph_{[0,T]}(q,p) = \ph_0(q_{0},p_0) \dots \ph_k(q_{t_k},p_{t_k}) \dots \ph_{K}(q_{T},p_T)
\end{equation}
with $0 = t_0 < t_1 \cdots < t_K = T$ by 
repeatedly using a variant of~\eqref{eq:two_pt_Crooks} on time subintervals $[t_k,t_{k+1}]$
(see the proof of Theorem~4.10 in \cite{LelRouStoBook} for further precisions).
This allows to conclude since finite dimensional time marginal laws characterize
the distribution on continuous paths, see for instance~\cite{EthKur86}.
\end{proof}

We are now in position to write the 

\begin{proof}[Proof of Theorem~\ref{th:crookslangcons}]
By Lemma~\ref{lem:jarz_gen}, it is sufficient to prove the nonequilibrium detailed balance~\eqref{eq:jarzbalancecons} for the Markov processes $(q_t,p_t)_{0 \le t \le T}$ and $(q^{\rm b}_t,p^{\rm b}_t)_{0 \le t \le T}$, solutions to~$\NL$ and~\eqref{eq:Langevinconstjarzback} respectively, with generators~$\mathcal{L}_t^{\rm f}$ and~$\mathcal{L}^{\rm b}_{t' }$ defined by~\eqref{eq:fwgensimple} and~\eqref{eq:bwgensimple}.

 First, using Lemma~\ref{l:consint}, we compute  the variation of the unnormalized canonical equilibrium distribution with constraints with respect to the switching:
\begin{align}
  &\frac{d}{dt}  \pare{ \int_{\manq_{\xi,v_\xi}(z(t),\dot{z}(t))} \!\! \ph_1 \ph_2 \, {\rm e }^{- \beta H} \, d \sigma_{\manq_{\xi,v_\xi}(z(t),\dot{z}(t)) }} \nonumber \\ 
  & \qquad = \int_{\manq_{\xi,v_\xi}(z(t),\dot{z}(t))} \!\! \dot{\zeta}(t)^T \sgram^{-1} \poisson{\Xi ,  \ph_1 \ph_2\,  {\rm e }^{- \beta H}  } \,d \sigma_{\manq_{\xi,v_\xi}(z(t),\dot{z}(t))}.
  \label{eq:varjarzmeas}
\end{align}
On the other hand, \eqref{eq:workexpr1} and Proposition~\ref{prop:generators_noneq} give
\begin{equation}
  \label{eq:jarzbalance_pre}
  \begin{split}
  &\dps \ph_1 { \mathcal L}^{\rm f}_t(\ph_2) - \ph_2 {\mathcal L}^{\rm b}_{T-t}(\ph_1) - \beta w^{}(t,\cdot) \ph_1\ph_2 \\
&= \poisson{ \ph_1\ph_2, H }_{\Xi} +  {\rm e }^{\beta H}  \poisson{\ph_1\ph_2 {\rm e }^{- \beta H },\Xi} \sgram^{-1} \dot{\zeta}(t) \\
&\dps   \quad + \ph_1 \frac{1}{\beta}\op{e}^{\beta H}  \op{div}_{p}\pare{ \op{e}^{-\beta H} \gamma_P \nabla_{p} \ph_2}
- \ph_2 \frac{1}{\beta}\op{e}^{\beta H}  \op{div}_{p}\pare{ \op{e}^{-\beta H}  \gamma_P \nabla_{p} \ph_1} .
\end{split}
\end{equation}
Now, \eqref{eq:jarzbalancecons} can be verified in two steps. 
First, the last two terms in~\eqref{eq:jarzbalance_pre} (the ''thermostat'' terms) cancel 
out after integration with respect to 
$\op{e}^{-\beta H} \, d \sigma_{\manq_{\xi,v_\xi}(z(t),\dot{z}(t))}$ thanks to 
the detailed balance condition~\eqref{eq:balance_momenta_rev}. 
Then, an integration of~\eqref{eq:jarzbalance_pre} with respect to 
$\rme^{-\beta H} \, d \sigma_{\manq_{\xi,v_\xi}(z(t),\dot{z}(t))}$ gives, 
in view of~\eqref{eq:varjarzmeas} and~\eqref{eq:divsympl},
\begin{align*}
&\int_{\manq_{\xi,v_\xi}(z(t),\dot{z}(t))}  \Big( \ph_1 { \mathcal L}^{\rm f}_t(\ph_2) - \ph_2 {\mathcal L}^{\rm b}_{T-t}(\ph_1) - \beta w(t,\cdot) \ph_1\ph_2 \Big){\rm e }^{- \beta H} d \sigma_{\manq_{\xi,v_\xi}(z(t),\dot{z}(t)) }\\
& \qquad = \int_{\manq_{\xi,v_\xi}(z(t),\dot{z}(t))}\poisson{\ph_1\ph_2 {\rm e }^{- \beta H },\Xi} \sgram^{-1} \dot{\zeta}(t) \, d \sigma_{\manq_{\xi,v_\xi}(z(t),\dot{z}(t)) } \\
& \qquad = \frac{d}{dt}  \pare{ \int_{\manq_{\xi,v_\xi}(z(t),\dot{z}(t))} \ph_1 \ph_2 \, {\rm e }^{- \beta H} \, d \sigma_{\manq_{\xi,v_\xi}(z(t),\dot{z}(t)) }},
\end{align*}
which is indeed~\eqref{eq:jarzbalancecons}.  Note that the time-regularity on the evolution semi-groups~\eqref{eq:Pfwd}-\eqref{eq:Pbwd} required to make these computations rigorous is proved in the overdamped case in the proof of Theorem~A.5 in \cite{lelievre-rousset-stoltz-07-a}. A similar proof can be carried out for constrained Langevin equations.
\end{proof}

\subsection{Numerical schemes}
\label{sec:num_jarz}

In this section, a numerical scheme for the nonequilibrium dynamics~$\NL$ and the associated free energy estimator are presented. As for Langevin processes with constraints (Section~\ref{sec:consnum}), a splitting between the Hamiltonian part and the thermostat part of the dynamics~$\NL$ leads to a simple and natural scheme (see~\eqref{eq:flucdissconstjarz1}-\eqref{eq:Verletconstswitched}-\eqref{eq:flucdissconstjarz2} below). Note that a consistent numerical scheme in the case of Hamiltonian dynamics can be obtained by considering only~\eqref{eq:Verletconstswitched} (this corresponds to $\gamma = \sigma = 0$). 
Besides, we propose a discrete Jarzynski-Crooks identity without 
time discretization error, see Section~\ref{sec:discrete_jarz}.

The reaction coordinate path is first discretized as $\{ z(0), \dots, z(t_{N_T}) \}$ where $N_T$ is the number of time-steps. For simplicity, equal time increments are used, so that $\dt=\frac{T}{N_T}$ and $t_n =n \dt$. 
The deterministic Hamiltonian part in the equations of
motion~$\NL$ with switched position constraints
$\xi(q) = z(t)$ can be integrated by a velocity-Verlet algorithm with
constraints similar to~\eqref{eq:Verletconst}. The
fluctuation-dissipation term in~$\NL$ can be
integrated similarly to the constrained case without
switching~\eqref{eq:flucdiss1}-\eqref{eq:flucdiss2}, using an
Ornstein-Uhlenbeck process on the momentum variable approximated by a
midpoint Euler scheme. In conclusion, the splitting scheme for the Langevin
dynamics with time-evolving constraints reads as follows: Take initial conditions $(q^0,p^0)$ distributed according to $\mu_{\manq_{\xi,v_\xi}\left(z(t_0),\frac{z(t_1)-z(t_0)}{\dt}\right)}$ and iterate on $0 \leq n \leq N_T-1$:
\begin{align}
  &  \begin{cases}
       \dps    p^{n+1/4} =   p^{n} -\frac{\dt}{4} \gamma_P(q^n) M^{-1}(p^{n+1/4}+p^{n}) 
       + \sqrt{ \frac{\dt}{2}} \sigma_P(q^n) {\mathcal G}^n, 
     \end{cases}\label{eq:flucdissconstjarz1}\\[6pt]
  &\begin{cases}
    \dps  p^{n+1/2} = p^{n+1/4} - \displaystyle{\frac{\dt}{2} \nabla V (q^{n})} + \nabla \xi(q^n) \lambda^{n+1/2}, &\\[6pt]
    \dps  q^{n+1} = q^{n} + \dt \ M^{-1} p^{n+1/2},  &\\[6pt]
    \dps   \xi(q^{n+1})  = z(t_{n+1}), &(C_q) \\[6pt]
    \dps  p^{n+3/4} = p^{n+1/2} - \displaystyle{\frac{\dt}{2} \nabla V (q^{n+1})} +\nabla \xi(q^{n+1}) \lambda^{n+3/4},&\\[6pt]
    \dps   \nabla \xi (q^{n+1})^{T} M^{-1} p^{n+3/4} = \frac{ \dps  z(t_{n+2})- z(t_{n+1})}{\dps \dt},
    &(C_p)
   \end{cases}\label{eq:Verletconstswitched}\\
  &   \begin{cases}  \dps    p^{n+1} = p^{n+3/4} -\frac{\dt}{4} \gamma_P(q^{n+1}) M^{-1}(p^{n+3/4}+p^{n+1}) \\
        \dps \phantom{p^{n+1} =}  + \sqrt{\frac{\dt}{2}} \sigma_P(q^{n+1}) {\mathcal G}^{n+1/2}, 
      \end{cases}\label{eq:flucdissconstjarz2}
\end{align}
where $({\mathcal G}^n)$ and $({\mathcal G}^{n+1/2})$ are sequences of i.i.d. Gaussian random variables of mean~$0$ and covariance matrix~$\mathrm{Id}_{3N}$.
Note that the momenta obtained from~\eqref{eq:flucdissconstjarz1}-\eqref{eq:Verletconstswitched}-\eqref{eq:flucdissconstjarz2} satisfy
\begin{equation}\label{eq:const_satisf}
\nabla\xi(q^n)^T M^{-1}p^{n+1/4} = \nabla\xi(q^n)^T M^{-1}p^n = \nabla\xi(q^n)^T M^{-1}p^{n-1/4} 
= \frac{ z(t_{n+1})- z(t_{n})}{\dt}, 
\end{equation}
so that constraints on momenta are automatically enforced, and no Lagrange multiplier is needed 
in~\eqref{eq:flucdissconstjarz1} and~\eqref{eq:flucdissconstjarz2}.

We comment in the subsequent sections on the different parts of the scheme.

\subsubsection{Comments on the Hamiltonian scheme~\eqref{eq:Verletconstswitched}}
\label{sec:numHamiltonconstswitched}
The Lagrange multipliers~$\lambda^{n+1/2}$ are associated with the position constraints $(C_{q})$, and the Lagrange multipliers $\lambda^{n+3/4}$ are associated with the  velocity constraints $(C_{p})$. In $(C_p)$, the velocity of the switching at time $t_{n+1}$ is discretized as:
\[
\dot{z}(t_{n+1}) \simeq \frac{ z(t_{n+2})- z(t_{n+1})}{\dt}.
\]
The latter choice is motivated by the following observation: The position after one step of an unconstrained motion, given by 
\[
\tilde{q}^{n+1}= q^n+\dt \ M^{-1} p^{n+1/4}- \displaystyle{\frac{\dt^2}{2} M^{-1} \nabla V (q^{n})},
\]
already satisfies $(C_{q})$ up to error terms of order two with respect to $\dt$. Indeed, using~\eqref{eq:const_satisf}:
\begin{align*}
\xi(\tilde{q}^{n+1})&=\xi(q^n)+ \dt   \nabla \xi (q^{n})^{T} M^{-1} p^{n+1/4} + \mathrm O (\dt^2) = z(t_{n+1}) +\mathrm O (\dt^2).
\end{align*}
This property is useful to ensure a fast convergence of the numerical algorithm solving the nonlinear constraints $(C_{q})$.

The numerical flow associated with~\eqref{eq:Verletconstswitched} is denoted in the sequel as
\begin{equation}
\label{eq:fwd_flow}
\Phi^n \ : \ 
\left\{ \begin{array}{ccc}
  \dps \manq_{\xi,v_\xi}\left(z(t_n), \frac{z(t_{n+1})-z(t_n)}{\dt}\right) 
  & \to & \dps \manq_{\xi,v_\xi}\left(z(t_{n+1}), \frac{z(t_{n+2})-z(t_{n+1})}{\dt}\right) \\[10pt]
  (q^n,p^{n+1/4}) & \mapsto & (q^{n+1},p^{n+3/4}) 
\end{array}\right.
\end{equation}
It can be proven that $\Phi^n$ is a symplectic map. The proof is indeed exactly the same as for the symplecticity of the classical RATTLE scheme, see \cite[Sections~VII.1.3]{HairerLubichWanner06} for an explicit computation for symplectic Euler and \cite[Sections~VII.1.4]{HairerLubichWanner06} for an extension to RATTLE.
As a consequence, $\Phi^n$ transports the phase space measure 
$\sigma_{\manq_{\xi,v_\xi}\left(z(t_n), \frac{z(t_{n+1})-z(t_n)}{\dt}\right)}$ to
the phase space measure 
$\sigma_{\manq_{\xi,v_\xi}\left(z(t_{n+1}), \frac{z(t_{n+2})-z(t_{n+1})}{\dt}\right)}$.

\subsubsection{Comments on the fluctuation-dissipation part~\eqref{eq:flucdissconstjarz1}-\eqref{eq:flucdissconstjarz2}}
\label{sec:numflucdissconstjarz}

In practice, \eqref{eq:flucdissconstjarz1} may be rewritten 
in a form more suited to numerical computations. 
Of course, similar considerations hold for~\eqref{eq:flucdissconstjarz2}.
Since $\gamma_P(q) = P_M(q) \gamma P_M(q)^T$ and $\sigma_P(q)=P_M(q)\sigma$, 
\eqref{eq:flucdissconstjarz1} is equivalent to:
\begin{equation}\label{eq:flucdissconstjarz1_bis}
\left \{ 
\begin{aligned}
& p^{n+1/4} = p^n  - \frac{\dt}{4} \gamma M^{-1}  
\pare{ p^n + p^{n+1/4} - 2\nabla\xi G_{M}^{-1}(q^n) 
\frac{ z(t_{n+1})- z(t_{n})}{\dt} } \\
& \phantom{p^{n+1/4} =} + \sqrt{\frac{\dt}{2}} \, \sigma \, {\mathcal G}^n 
+ \nabla\xi(q^n)^T \, \lambda^{n+1/4},\\
& \nabla\xi(q^n)^T M^{-1}p^{n+1/4} = \frac{ z(t_{n+1})- z(t_{n})}{\dt},  \qquad\qquad(C_p)
\end{aligned} \right.
\end{equation}
where the Lagrange multiplier $\lambda^{n+1/4}$ is associated with the constraint $(C_p)$. The equivalence between~\eqref{eq:flucdissconstjarz1} and~\eqref{eq:flucdissconstjarz1_bis} can be checked by multiplying~\eqref{eq:flucdissconstjarz1_bis} by $P_M(q^n)$ and  using~\eqref{eq:const_satisf}.

The Lagrange multiplier $\lambda^{n+1/4}$ in~\eqref{eq:flucdissconstjarz1_bis} is obtained by multiplying the above equation by 
$\nabla\xi(q^n)^T M^{-1} \left(\op{Id}+\frac\dt4 \gamma M^{-1} \right)^{-1}$, and 
solving the following linear system:
\[
\begin{split}
& \frac{ z(t_{n+1})- z(t_{n})}{\dt}  = \nabla\xi(q^n)^T M^{-1} \left(\op{Id}+\frac\dt4 \gamma M^{-1} \right)^{-1} \left(\op{Id}-\frac\dt4 \gamma M^{-1} \right) \, p^n  \\
& \quad + \nabla\xi(q^n)^T M^{-1} \left(\op{Id}+\frac\dt4 \gamma M^{-1} \right)^{-1} \pare{ \gamma M^{-1} \nabla\xi(q^n) G^{-1}_{M}(q^n) 
\frac{ z(t_{n+1})- z(t_{n})}{2} +    \sqrt{\frac{\dt}{2}} \,  \sigma \, {\mathcal G}^n } \\
& \quad +\nabla\xi(q^n)^T M^{-1} \left(\op{Id}+\frac\dt4 \gamma M^{-1} \right)^{-1} \nabla\xi(q^n) \, \lambda^{n+1/4}. 
\end{split}
\]
This system is well posed. Indeed, the matrix 
$\nabla\xi(q^n)^T M^{-1} \left(\op{Id}+\frac\dt4 \gamma M^{-1} \right)^{-1} \nabla\xi(q^n)$
can be rewritten as $\nabla \xi(q)^{T} S\nabla \xi(q)$ with $S=M^{-1} \left(\op{Id}+\frac{\dt}{4} \,\gamma M^{-1}\right)^{-1}$. Both $M$ and $\gamma$ are symmetric and non-negative, so that $S$ is symmetric, positive and invertible. Finally, the invertibility of  $\nabla \xi(q)^{T} S\nabla \xi(q)$ follows from the invertibility of $G_M(q)$. 

In the special case when $\gamma$ and $M$ are equal up to a multiplicative constant, the numerical integration can be simplified using the explicit formula~\eqref{eq:OUexpl_simpl} and the method described below~\eqref{eq:OUexpl_simpl}, which still holds for the tangential part of the momentum. See Section~\ref{sec:num_res_jarz} below for further precisions.

\subsubsection{Discretization of the backward process~\eqref{eq:Langevinconstjarzback}}

The splitting scheme for the backward Langevin dynamics with time-evolving constraints~\eqref{eq:Langevinconstjarzback} reads as follows: Denote $n'=N_T - n$, take initial conditions $(q^{{\rm b},0},p^{{\rm b},0})$ distributed according to $\mu_{\manq_{\xi,v_\xi}\left(z(t_{N_T}),\frac{z(t_{N_T+1})-z(t_{N_T})}{\dt}\right)}$ and iterate on $0 \leq n' \leq N_T-1$,
\begin{align}
&  \begin{cases}
\dps    p^{{\rm b},n'+1/4} =   p^{{\rm b},n'} -\frac{\dt}{4} \gamma_P(q^{{\rm b},n'}) M^{-1}(p^{{\rm b},n'+1/4}+p^{{\rm b},n'}) \\
\dps \phantom{ p^{{\rm b},n'+1/4} =} + \sqrt{ \frac{\dt}{2}} \sigma_P(q^{{\rm b},n'}) {\mathcal G}^{{\rm b},n'}, \end{cases}\label{eq:flucdissconstjarzbck1}\\[6pt]
é&\begin{cases}
 \dps  p^{{\rm b},n'+1/2} = p^{{\rm b},n'+1/4} + \displaystyle{\frac{\dt}{2} \nabla V (q^{{\rm b},n'})} + \nabla \xi(q^{{\rm b},n'}) \lambda^{{\rm b},n'+1/2}, &\\[6pt]
 \dps  q^{{\rm b},n'+1} = q^{{\rm b},n'} - \dt \ M^{-1} p^{{\rm b},n'+1/2},  \\[6pt]
 \dps   \xi(q^{{\rm b},n'+1})  = z(t_{N_T-n'-1}), & (C_q) \\[6pt]
 \dps  p^{{\rm b},n'+3/4} = p^{{\rm b},n'+1/2} + \displaystyle{\frac{\dt}{2} \nabla V (q^{{\rm b},n'+1})} +\nabla \xi(q^{{\rm b},n'+1}) \lambda^{{\rm b},n'+3/4},&\\[6pt]
  \dps   \nabla \xi (q^{{\rm b},n'+1})^{T} M^{-1} p^{{\rm b},n'+3/4} = \frac{ \dps  z(t_{N_T - n' })- z(t_{N_T - n' - 1 })}{\dps \dt}, &(C_p)
\end{cases}\label{eq:Verletconstswitchedbck}\\
&   \begin{cases}  \dps    p^{{\rm b},n'+1} = p^{{\rm b},n'+3/4} -\frac{\dt}{4} \gamma_P(q^{{\rm b},n'+1}) M^{-1}(p^{{\rm b},n'+3/4}+p^{{\rm b},n'+1}) \\
\dps \phantom{p^{{\rm b},n'+1} =} + \sqrt{\frac{\dt}{2}} \sigma_P(q^{{\rm b},n'+1}) {\mathcal G}^{{\rm b},n'+1/2}, \end{cases}\label{eq:flucdissconstjarzbck2}
\end{align}
where $({\mathcal G}^{{\rm b},n'})$ and $({\mathcal G}^{{\rm b},n'+1/2})$ are sequences of i.i.d. Gaussian random variables of mean $0$ and covariance matrix~$\mathrm{Id}_{3N}$.
The numerical flow associated with~\eqref{eq:Verletconstswitchedbck} is denoted
\[
\Phi^{{\rm b},n'} \ : \ 
\left\{ \begin{array}{ccc}
  \manq_{\xi,v_\xi} \pare{ \dps z(t_{n}), \frac{z(t_{n+1})-z(t_{n})}{\dt}  }
  & \to & \dps \manq_{\xi,v_\xi} \pare{ z(t_{n-1}), \frac{z(t_{n})-z(t_{n-1})}{\dt}  } \\[10pt]
  (q^{{\rm b},n'},p^{{\rm b},n'+1/4}) & \mapsto & (q^{{\rm b},n'+1},p^{{\rm b},n'+3/4}) 
\end{array}\right.
\]
where we recall $n'= N_T-n$. Assuming that the flow $\Phi^n$ given by~\eqref{eq:fwd_flow} and $\Phi^{{\rm b},n'}$ are both well-defined, the following reversibility property is easily checked (extending the symmetry property of the standard RATTLE scheme, see for instance~\cite[Section~VII.1.4]{HairerLubichWanner06}):
\begin{equation}
  \label{eq:rev_flow_jarz}
  \Phi^{{\rm b},N_T-n} \circ \Phi^{n-1} = \op{Id}.
\end{equation}

\subsubsection{Work discretization and free energy computations}
\label{sec:work_discretization}

The work~\eqref{eq:work} can be approximated using the Lagrange multipliers
in~\eqref{eq:Verletconstswitched}:
\begin{equation}
\label{eq:lang_num_work_multi_bis}
\begin{cases}
\W^0=0,\\
\dps \W^{n+1} = \W^{n} + \left(\frac{ z(t_{n+1})-z(t_{n}) }{ \dt }\right)^T 
\pare{ \lambda^{n+1/2} + \lambda^{n+3/4} },
\end{cases}
\end{equation}
for $n=0 \ldots N_T-1$. 
The (formal) consistency of the work discretization~\eqref{eq:lang_num_work_multi_bis} in the time continuous limit is a direct consequence of the work expression~\eqref{eq:work}.

An estimator of the free energy profile is then obtained by using $K$ independent realizations of the switching process, computing the work $\W^{N_T,k}$ for each realization $k \in \{1, \ldots, K\}$ 
(with the numerical trajectories obtained from the numerical scheme~\eqref{eq:flucdissconstjarz1}-\eqref{eq:Verletconstswitched}-\eqref{eq:flucdissconstjarz2} and i.i.d. initial conditions sampled according to~$\mu_{\manq_{\xi,v_\xi}\left(z(t_0),\frac{z(t_1)-z(t_0)}{\dt}\right)}$), and approximating~\eqref{eq:identity_to_approximate}, rewritten up to an unimportant additive constant (independent of~$T$), as
\[
F(z(T)) \simeq  -\frac1\beta \ln \E\left(\rme^{-\beta \left[ \W^{N_T} + C^{N_T}(q^{N_T})   \right]  }\right),
\]
with empirical averages such as 
\[
-\frac1\beta \ln \left ( 
\frac1K \sum_{k=1}^K \exp\left[-\beta \left( \W^{N_T,k}+C^{N_T}(q^{N_T,k}) \right) \right] \right).
\]
In the above, the discretization $C^{n}(q)$ of the corrector~\eqref{eq:corr} is
\begin{equation}
  \label{eq:corr_disc}
  C^n(q) = \frac{1}{2\beta} \ln \Big( \det G_M(q) \Big) - 
  \frac{1}{2} \pare{ \Frac{z(t_{n+1})-z(t_{n} )}{\dt} }^T G_M^{-1}(q) \pare{ \Frac{z(t_{n+1})-z(t_{n} )}{\dt} }.
\end{equation}
We refer to Chapter~4 in \cite{LelRouStoBook} for more background on free energy estimators
for nonequilibrium dynamics. 
In particular,
it is possible to compute a work associated with the backward switching from the
Lagrange multipliers in~\eqref{eq:Langevinconstjarzback}, and to resort to 
bridge estimators (see Section~4.2.3 in~\cite{LelRouStoBook}).

However, using approximations such as~\eqref{eq:lang_num_work_multi_bis} 
in the Jarzynski-Crooks identity introduces a time discretization error. 
We show in the next section how to eliminate this error.

\subsubsection{Discrete Jarzynski-Crooks identity}
\label{sec:discrete_jarz}

It turns out that a discrete version of the Jarzynski-Crooks identity~\eqref{eq:Jarz_Crooks} can be obtained. This enables the estimation of free energy differences using nonequilibrium simulation \emph{without time discretization error}. The discrete equality~\eqref{eq:time_reversal_on_paths_discrete} below may be seen as an extension of the corresponding equality obtained for transitions associated with time-dependent Hamiltonians and performed with Metropolis-Hastings dynamics (see~\cite{Crooks98} and Remark~4.5 in~\cite{LelRouStoBook}).

For this purpose, we consider a discretization of the work $\W_{0,T}$ using the interpretation~\eqref{eq:workexpr3} of the work as the energy variation of the Hamiltonian part of the Langevin dynamics. This leads to the following definition of the work at the discrete level:
\begin{equation}
\label{eq:lang_num_work_multi}
\begin{cases}
\W^0=0,\\
\W^{n+1} = \W^{n} + H(q^{n+1},p^{n+3/4}) - H(q^{n},p^{n+1/4}),
\end{cases}
\end{equation}
for $n=0 \ldots N_T-1$.
This work discretization leads to a Jarzynski-Crooks identity without time discretization error.

\begin{theorem}[Discrete Jarzynski-Crooks fluctuation identity]
\label{eq:jarz_disc}
  Consider the distribution $\mu_{\manq_{\xi,v_\xi}(z(t),\dot{z}(t))}$ 
  and its normalization $Z_{z(t),\dot{z}(t)}$ defined in~\eqref{eq:mu_v}.
  Denote by $\{q^n,p^n\}_{0 \leq n \leq N_T}$ the solution of the forward 
  discretized Langevin dynamics~\eqref{eq:flucdissconstjarz1}-\eqref{eq:Verletconstswitched}-\eqref{eq:flucdissconstjarz2} with initial conditions
  distributed according to
  \begin{equation}\label{eq:ICf_discrete}
  (q^0,p^0) \sim \mu_{\manq_{\xi,v_\xi} \pare{ z(t_0), \frac{z(t_1)-z(t_0)}{\dt}  } }(dq\, dp ),
  \end{equation}
  and by $\{q^{ {\rm b},n'} ,p^{ {\rm b},n'} \}_{0 \leq n' \leq N_T }$ the solution of the discretized 
  backward Langevin dynamics~\eqref{eq:flucdissconstjarzbck1}-\eqref{eq:Verletconstswitchedbck}-\eqref{eq:flucdissconstjarzbck2} distributed according to
  \begin{equation}\label{eq:ICb_discrete}
  (q^{ { \rm b},0},p^{ {\rm b},0}) \sim \mu_{\manq_{\xi,v_\xi} \pare{ z(t_{N_T}), \frac{z(t_{N_T+1})-z(t_{N_T} )}{\dt}  } }(dq\, dp ).
  \end{equation}
  Then, the following Jarzynski-Crooks identity holds on~$[0,N_T]$: for any bounded 
  discrete path functional $\ph_{[0,N_T]}$, 
  \begin{equation}
    \label{eq:Jarz_Crooks_discrete}
    \frac{Z_{z(N_T), \frac{z(t_{N_T+1})-z(t_{N_T} )}{\dt}   }}{Z_{z(t_0), \frac{z(t_1)-z(t_0)}{\dt}  }  } 
    = \frac{\E \pare{ \ph_{[0,N_T]}\left(\{q^n,p^n\}_{0 \leq n \leq N_T}\right) \,  
      {\rm e}^{-\beta\W^{N_T} } }}{\E \pare{ \ph^{\rm r}_{[0,N_T]}\left(\{q^{ {\rm b} , {n'} },p^{ {\rm b}, {n'} } \}_{0 \leq n' \leq N_T }\right) }},
  \end{equation}
  where $\W^n$ is computed according to~\eqref{eq:lang_num_work_multi}, and 
  $(\, \cdot \,)^{\rm r}$ denotes the composition with 
  the operation of time reversal of paths:
  \begin{equation}
    \label{eq:time_reversal_on_paths_discrete}
    \ph_{[0,N_T]}^{\rm r}\Big(\{ q^{ {\rm b},{n'}} ,p^{{\rm b}, {n'} } \}_{0 \leq n' \leq N_T}\Big) = 
    \ph_{[0,N_T]}\Big(\{ q^{ {\rm b},{N_T-n}} ,p^{{\rm b}, {N_T-n} } \}_{0 \leq n \leq N_T}\Big).
  \end{equation} 
\end{theorem}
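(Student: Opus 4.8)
The strategy is to mimic the continuous-time proof of Theorem~\ref{th:crookslangcons}, replacing the continuous semigroup argument by a discrete, step-by-step telescoping identity. The continuous proof rested on Lemma~\ref{lem:jarz_gen}, which deduced the fluctuation identity from the nonequilibrium detailed balance condition~\eqref{eq:jarzbalancecons}; at the discrete level the analogue is a \emph{one-step} identity relating the forward kernel on $[t_n,t_{n+1}]$, the backward kernel on the reversed interval, the one-step work increment, and the unnormalized equilibrium measures $\pi_{t_n} = \rme^{-\beta H}\sigma_{\manq_{\xi,v_\xi}(z(t_n),\dot z(t_n))}$ at the two endpoints. So the first step is to formulate precisely this discrete one-step balance: for test functions $\ph_1,\ph_2$,
\[
\int_{\mS_n} \ph_1 \, \pare{ P^{\rm f}_n \ph_2 } \, d\pi_n
= \int_{\mS_{n+1}} \pare{ P^{\rm b}_{N_T-n-1} \ph_1 } \, \ph_2 \, d\pi_{n+1},
\]
where $P^{\rm f}_n$ is the one-step forward transition operator weighted by $\rme^{-\beta(\W^{n+1}-\W^n)}$ and $P^{\rm b}_{N_T-n-1}$ is the (unweighted) one-step backward transition operator, $\mS_n$ denoting the appropriate submanifold. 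Here the discrete work increment is precisely $H(q^{n+1},p^{n+3/4})-H(q^n,p^{n+1/4})$ as in~\eqref{eq:lang_num_work_multi}.

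The second step is to prove this one-step identity by splitting the RATTLE--Verlet--OU scheme into its three sub-steps~\eqref{eq:flucdissconstjarz1}, \eqref{eq:Verletconstswitched}, \eqref{eq:flucdissconstjarz2}, and handling each piece. The two Ornstein--Uhlenbeck half-steps leave invariant (and are reversible with respect to) the Gaussian kinetic measure $\kappa^{M^{-1}}$ on the relevant affine momentum space --- this is exactly the discrete detailed balance recorled in Section~\ref{sec:numflucdissconst} and the analysis there, which generalizes verbatim to nonzero $v_z$ thanks to the translation $p\mapsto p-\nabla\xi G_M^{-1}v_z$ used in Section~\ref{sec:Langti_FE}. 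The Hamiltonian RATTLE step~\eqref{eq:Verletconstswitched} is deterministic and, by the symplecticity of $\Phi^n$ (stated below~\eqref{eq:fwd_flow}), transports $\sigma_{\manq_{\xi,v_\xi}(z(t_n),\cdot)}$ to $\sigma_{\manq_{\xi,v_\xi}(z(t_{n+1}),\cdot)}$; combined with the reversibility relation~\eqref{eq:rev_flow_jarz} $\Phi^{{\rm b},N_T-n}\circ\Phi^{n-1}=\op{Id}$, this says that the deterministic proposal step is "detailed-balanced" up to the Jacobian factor $\rme^{-\beta(H\circ\Phi^n - H)} = \rme^{-\beta(\W^{n+1}-\W^n)}$, which is exactly where the work increment~\eqref{eq:lang_num_work_multi} enters. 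Assembling the three pieces (OU, RATTLE, OU) by composition yields the one-step identity; the bookkeeping of which momentum constraint is active at each stage, using~\eqref{eq:const_satisf}, is the only delicate point.

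The third step is the telescoping: starting from $\ph_{[0,N_T]}(q,p)=\ph_0(q^0,p^0)\,\ph_{N_T}(q^{N_T},p^{N_T})$, apply the one-step identity successively on $n=0,1,\dots,N_T-1$, each time absorbing one more work increment into the exponential weight, exactly as~\eqref{eq:two_pt_Crooks} is obtained by integrating the continuous balance. This produces~\eqref{eq:Jarz_Crooks_discrete} for two-point functionals; then the Markov property of the discrete forward and backward chains extends it to general finite path functionals $\ph_0(q^0,p^0)\dots\ph_K(q^{t_K},p^{t_K})$ by inserting the identity on subintervals, and since a discrete path has only finitely many coordinates this already gives all bounded path functionals. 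The ratio of partition functions appears because the "probability" measures $\pi_n$ are unnormalized: $\pi_0$ has mass $Z_{z(t_0),(z(t_1)-z(t_0))/\dt}$ and $\pi_{N_T}$ has mass $Z_{z(t_{N_T}),(z(t_{N_T+1})-z(t_{N_T}))/\dt}$.

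\textbf{Main obstacle.} The delicate point is the correct handling of the \emph{evolving} momentum constraint $(C_p)$ across the three sub-steps. Unlike the equilibrium case of Section~\ref{sec:consnum}, where momenta live on the fixed space $T^*_q\manq(z)$, here the momentum constraint at step $n$ reads $\nabla\xi(q^n)^TM^{-1}p^{n+1/4}=(z(t_{n+1})-z(t_n))/\dt$, while after the RATTLE step it becomes $\nabla\xi(q^{n+1})^TM^{-1}p^{n+3/4}=(z(t_{n+2})-z(t_{n+1}))/\dt$ --- the affine fibre moves both because $q$ moves and because the target velocity changes. One must check that each sub-step maps the correct affine fibre to the correct affine fibre (this is the content of~\eqref{eq:const_satisf} and of the precise choice of velocity discretization in $(C_p)$), that the OU detailed balance holds fibrewise for these shifted fibres (using the change of variables $p\mapsto p-\nabla\xi G_M^{-1}v_z$), and that the surface measures $\sigma^{M^{-1}}$ on the shifted fibres match up under the RATTLE symplectic map. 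Getting these measure-theoretic matchings exactly right, so that no spurious Jacobian survives beyond the intended $\rme^{-\beta(\W^{n+1}-\W^n)}$, is the crux; the rest is routine composition and telescoping.
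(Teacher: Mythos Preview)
Your proposal is correct and follows essentially the same approach as the paper's proof. The paper organizes the argument into the same three ingredients: (i) symplecticity of $\Phi^n$ together with the reversibility~\eqref{eq:rev_flow_jarz} to get a measure identity for the deterministic RATTLE step, (ii) detailed balance of the midpoint OU kernel on the shifted affine momentum fibre (handled by splitting $p$ into its $P_M$-parallel and $P_M$-orthogonal parts, the orthogonal part being left invariant), and (iii) assembling these into a one-step kernel identity and iterating. The only cosmetic difference is that the paper works directly with transition kernels and path measures rather than with transition operators on test functions, so the iteration yields the full path-functional statement in one stroke without passing through the two-point case first; your route via two-point functionals and the Markov property is an equally valid way to finish.
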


The (formal) consistency of the work discretization~\eqref{eq:lang_num_work_multi} in the time continuous limit is a direct consequence of the work expression~\eqref{eq:workexpr3}.
Free energy estimators based on the identity~\eqref{eq:Jarz_Crooks_discrete} are obtained as described in Section~\ref{sec:work_discretization}. Let us emphasize once again that there is no error related to the finiteness of the time-step $\dt$ in this estimator, and that the only source of approximation is due to the statistical error.

\begin{proof}
  With a slight abuse of notation, we denote in the same way the random variables $(q^n,p^n)$, $(q^{{\rm b},n},p^{{\rm b},n})$, etc. in~\eqref{eq:flucdissconstjarz1}-\eqref{eq:Verletconstswitched}-\eqref{eq:flucdissconstjarz2} or~\eqref{eq:flucdissconstjarzbck1}-\eqref{eq:Verletconstswitchedbck}-\eqref{eq:flucdissconstjarzbck2}, and the integration variables in the definition of probability distributions. We divide the proof into three steps.

\smallskip

\noindent
Step 1: The phase space conservation of $\Phi^n$ and $\Phi^{{\rm b},n'}$ and the reversibility property~\eqref{eq:rev_flow_jarz} imply
\begin{equation}
\label{eq:detailed_H_jarz}
\begin{aligned}
&  \delta_{\Phi^n(q^n,p^{n+1/4}) }( dq^{n+1} \, d p^{n+3/4}  ) \, \sigma_{ \manq_{\xi,v_\xi}\left(z(t_{n}), \frac{z(t_{n+1})-z(t_{n})}{\dt}\right) } ( dq^n \, d p^{n+1/4}     )  \\ 
& \quad = \delta_{\Phi^{ {\rm b} ,N_T-n-1}(q^{n+1},p^{n+3/4})}( dq^{n} \, d p^{n+1/4}  ) \, \sigma_{ \manq_{\xi,v_\xi}\left(z(t_{n+1}), \frac{z(t_{n+2})-z(t_{n+1})}{\dt}\right) } ( dq^{n+1} \, d p^{n+3/4} ) . 
\end{aligned}
\end{equation}
\smallskip

\noindent
Step 2: The probability distribution of $p^{n+1/4}$ given $(q^n,p^n)$ in the discretization of the fluctuation-dissipation part~\eqref{eq:flucdissconstjarz1} is denoted $K^{\rm OU}(q^n,p^n,d p^{n+1/4})$. The scheme~\eqref{eq:flucdissconstjarz1} is a mid-point discretization of an Ornstein-Uhlenbeck process, which can be rewritten by decomposing the orthogonal and tangential updates of the momentum:
\begin{equation}
  \label{eq:5}
  \begin{cases}
    p^{n+1/4}_{\parallel} = \dps p^{n}_{\parallel} -\frac{\dt}{4} \gamma_P(q^n) M^{-1}\left( p^{n+1/4}_{\parallel}+p^{n}_{\parallel} \right) + \sqrt{ \frac{\dt}{2}} \sigma_P(q^n) {\mathcal G}^n, \\
    p^{n+1/4}_{\perp} = p^{n}_{\perp},
  \end{cases}
\end{equation}
where $p_{\parallel} = P_M(q^n) p$, and $p_{\perp} = (\op{Id} - P_M(q^n)) p$.
The Markov chain induced by the parallel part of the momentum is the same as the one induced by the scheme~\eqref{eq:flucdiss1} (or~\eqref{eq:flucdiss2}) defined in Section~\ref{sec:consnum}. The latter verifies a detailed balance equation (both in the plain sense and 
up to momentum reversal) with respect to the stationary measure $\kappa_{T^\ast_q \manq(z)}^{M^{-1}}(dp)$ defined by~\eqref{eq:kinetic_part} (see Sections~2.3.2 and~3.3.5 in \cite{LelRouStoBook}). We recall that this measure is defined as the kinetic probability distribution in the momentum variable of the canonical distribution $\mu_{T^*\manq(z)}(dq \, dp)$ on the tangential space, conditioned by a given $q\in \manq(z)$. Adding the (invariant) orthogonal part of the momentum, the following detailed balance condition is satisfied:
\begin{align}
&  \exp\left(-\frac\beta2 (p^n)^T M^{-1}p^n\right) K^{\rm OU}(q^n,p^n,d p^{n+1/4}) \, \sigma^{M^{-1}}_{\manq_{v_\xi(q^n,\cdot)}\left(\frac{z(t_{n+1})-z(t_{n} )}{\dt}\right)} (d p ^n ) \label{eq:detailed_OU_jarz} \\
& \quad = \exp\left(-\frac\beta2 (p^{n+1/4})^T M^{-1}p^{n+1/4}\right) K^{\rm OU}(q^n,p^{n+1/4},d p^n) \, \sigma^{M^{-1}}_{\manq_{v_\xi(q^n,\cdot)}\left(\frac{z(t_{n+1})-z(t_{n} )}{\dt}\right)} (d p ^{n+1/4} ). \nonumber
\end{align}

\smallskip

\noindent
Step $3$: Denote by $K^{\rm f}(q^n,p^n;dq^{n+1},dp^{n+1/4},dp^{n+3/4},dp^{n+1})$ the probability distribution of the variables $(q^{n+1},p^{n+1/4},p^{n+3/4},p^{n+1})$ given the variables $(q^n,p^n)$ in the scheme~\eqref{eq:flucdissconstjarz1}-\eqref{eq:Verletconstswitched}-\eqref{eq:flucdissconstjarz2}; and by $K^{\rm b}(q^{{\rm b},n'},p^{{\rm b},n'};dq^{{\rm b},n'+1},dp^{{\rm b},n'+1/4},dp^{{\rm b},n'+3/4},dp^{{\rm b},n'+1})$ the probability distribution of the variables $(q^{{\rm b},n'+1},p^{{\rm b},n'+1/4},p^{{\rm b},n'+3/4},p^{{\rm b},n'+1})$ given the variables $(q^{{\rm b},n'},p^{{\rm b},n'})$ in the scheme~\eqref{eq:flucdissconstjarzbck1}-\eqref{eq:Verletconstswitchedbck}-\eqref{eq:flucdissconstjarzbck2}. The splitting structure yields:
\begin{eqnarray*}
  &&  K^{{\rm f},n}(q^n,p^n;dq^{n+1}\,dp^{n+1/4}\,dp^{n+3/4}\,dp^{n+1}) \\
  && \quad = K^{{\rm OU}}(q^{n+1},p^{n+3/4}, dp^{n+1}) \delta_{\Phi^n(q^n,p^{n+1/4}) }( dq^{n+1} \, d p^{n+3/4}  ) K^{\rm OU}(q^n,p^{n}, dp^{n+1/4}),
\end{eqnarray*}
as well as
\begin{eqnarray*}
  && K^{{\rm b},n'}(q^{{\rm b},n'},p^{{\rm b},n'};dq^{{\rm b},n'+1}\,dp^{{\rm b},n'+1/4}\,dp^{{\rm b},n'+3/4}\,dp^{{\rm b},n'+1}) \\
  && \qquad \quad = K^{{\rm OU}}(q^{{\rm b},n'+1},p^{{\rm b},n'+3/4}, dp^{{\rm b},n'+1}) \delta_{\Phi^{{\rm b},n'}(q^{{\rm b},n'},p^{{\rm b},n'+1/4}) }( dq^{{\rm b},n'+1} \, d p^{{\rm b},n'+3/4}  ) \\
  && \qquad \qquad \times K^{\rm OU}(q^{{\rm b},n'},p^{{\rm b},n'}, dp^{{\rm b},n'+1/4}).
\end{eqnarray*}
Combining the detailed balance conditions~\eqref{eq:detailed_H_jarz} and~\eqref{eq:detailed_OU_jarz} of Steps~$1$ and~$2$, and using the decomposition~\eqref{eq:surfacemeas} of phase space measures, it follows
\begin{align*}
&\rme^{-\beta \pare{H(q^{n+1},p^{n+3/4}) - H(q^n,p^{n+1/4})}}   K^{{\rm f},n}(q^n,p^n;dq^{n+1}\,dp^{n+1/4}\,dp^{n+3/4}\,dp^{n+1}) \\
& \quad \times \rme^{-\beta H(q^n,p^{n})} \sigma_{ \manq_{\xi,v_\xi}\left(z(t_{n}), \frac{z(t_{n+1})-z(t_{n})}{\dt} \right) } ( dq^n \, d p^{n}     ) \\
&= K^{{\rm b},N_T-n-1}(q^{n+1},p^{n+1};dq^{n}\,dp^{n+3/4}\,dp^{n+1/4}\,dp^{n}) \\
& \quad \times \rme^{-\beta H(q^{n+1},p^{n+1})} \sigma_{ \manq_{\xi,v_\xi}\left(z(t_{n+1}), \frac{z(t_{n+2})-z(t_{n+1})}{\dt}\right) } ( dq^{n+1} \, d p^{n+1}),
\end{align*}
which can be seen as the Jarzynski-Crooks identity over one time-step. Iterating the argument, it is easy to obtain:
\begin{align*}
&\rme^{-\beta \mathcal{W}^{N_T}} K^{{\rm f},0}(q^0,p^0;dq^{1}\,dp^{1/4}\,dp^{3/4}\,dp^{1})\ldots
K^{{\rm f},N_T-1}(q^{N_T-1},p^{N_T-1};dq^{N_T}\,dp^{N_T-3/4}\,dp^{N_T-1/4}\,dp^{N_T}) \\
& \quad \times \rme^{-\beta H(q^0,p^{0})} \sigma_{ \manq_{\xi,v_\xi}\left(z(t_{0}), \frac{z(t_{1})-z(t_{0})}{\dt} \right) } ( dq^0 \, d p^{0}) \\
&= K^{{\rm b},N_T-1}(q^{1},p^{1};dq^{0}\,dp^{3/4}\,dp^{1/4}\,dp^{0})\ldots
K^{{\rm b},0}(q^{N_T},p^{N_T};dq^{N_T-1}\,dp^{N_T-1/4}\,dp^{N_T-3/4}\,dp^{N_T-1}) \\
& \quad \times \rme^{-\beta H(q^{N_T},p^{N_T})} \sigma_{ \manq_{\xi,v_\xi}\left(z(t_{N_T}), \frac{z(t_{N_T+1})-z(t_{N_T})}{\dt}\right) } ( dq^{N_T} \, d p^{N_T}),
\end{align*}
which yields~\eqref{eq:Jarz_Crooks_discrete}.
\end{proof}

\subsection{The overdamped limit}
\label{sec:ovd_limit_disc}

\subsubsection{An exact free energy estimator for the overdamped Langevin dynamics}

The splitting scheme~\eqref{eq:flucdissconstjarz1}-\eqref{eq:Verletconstswitched}-\eqref{eq:flucdissconstjarz2} can be used in the overdamped regime, using the method of Proposition~\ref{p:langtooverd}, {\it i.e.} by choosing 
 \begin{equation}
   \label{eq:scaling_num_ovd_const_rappel}
   \frac{\dt}{4} \gamma = M = \frac{\dt}{2} \op{Id},
 \end{equation}
which implies $\gamma_P = 2 P_M^T P_M$ and $\sigma_P = \frac{2}{\sqrt{\beta}} P_M$.
For this choice of parameters, the continuous limit of the numerical scheme is the following variant of the stochastic differential equation~\eqref{eq:overdampconst}:
\begin{equation}
  \label{eq:overdampconst_jarz}
  \begin{cases}
      d q_t = \dps -\nabla V(q_t) \, dt + \sqrt{\frac{2}{\beta}} \, d W_t +  \nabla \xi (q_t) \, 
    d \lambda_t^{\rm od} , \\
\xi(q_t)=z(t),
  \end{cases}
\end{equation}
where $\lambda_t^{\rm od}$ is an adapted stochastic process such that $\xi(q_t)=z(t)$. We then obtain the following Jarzynski-Crooks relation for discretized overdamped dynamics, 
\emph{without time discretization error}.

\begin{proposition}
\label{p:langtooverd_jarz}
Suppose that the relation~\eqref{eq:scaling_num_ovd_const_rappel} is satisfied. With a slight abuse of notation, the mass matrix and the friction matrix 
are rewritten as $M \, \mathrm{Id}$ and $\gamma \, \mathrm{Id}$ with $M,\gamma \in \mathbb{R}$.
Then the splitting scheme~\eqref{eq:flucdissconstjarz1}-\eqref{eq:Verletconstswitched}-\eqref{eq:flucdissconstjarz2} yields the following Euler discretization of the overdamped Langevin constrained dynamics~\eqref{eq:overdampconst_jarz}:
\begin{equation}
  \label{eq:Eulerconst_jarz}
  \begin{cases}
      \dps q^{n+1} = q^{n} - \dt  \nabla V (q^{n}) 
    + \sqrt{\frac{2\dt}{\beta}} \, {\mathcal G}^n + \nabla \xi (q^n) \, \lambda^{n+1}_{\rm od},  \\[6pt]
    \xi(q^{n+1}) = z(t_{n+1}),
  \end{cases}
\end{equation}
where $({\mathcal G}^{n})_{n\geq 0}$ are independent and identically distributed 
centered and normalized Gaussian
variables, and $(\lambda^n_{\rm od})_{n \geq 1}$ are the Lagrange multipliers
associated with the constraints $(\xi(q^{n}) = z(t_{n}))_{0 \leq n \leq N_T}$. In the same way, the backward process~\eqref{eq:flucdissconstjarzbck1}-\eqref{eq:Verletconstswitchedbck}-\eqref{eq:flucdissconstjarzbck2} yields the following Euler scheme:
\begin{equation}
  \label{eq:Eulerconst_jarz_bck}
  \begin{cases}
      \dps q^{{\rm b},n'+1} = q^{{\rm b},n'} - \dt  \nabla V (q^{{\rm b},n'}) 
    + \sqrt{\frac{2\dt}{\beta}} \, {\mathcal G}^{{\rm b},n'} + \nabla \xi (q^{{\rm b},n'}) \, \lambda^{{\rm b},n'+1}_{\rm od},  \\[6pt]
    \xi(q^{{\rm b},n'+1}) = z(t_{N_T-n'-1}).
  \end{cases}
\end{equation}
Consider the work update
\begin{equation}
\label{eq:lang_num_work_multi_od}
\begin{cases}
\W^0=0,\\
\dps \W^{n+1} = \W^{n} + V(q^{n+1}) - V(q^n) + \frac{1}{\dt}\pare{ \abs{p^{n+3/4}}^2 - \abs{p^{n+1/4}}^2 },
\end{cases}
\end{equation}
for $n=0 \ldots N_T-1$, where
\begin{equation*}
  \begin{cases}
  \dps 2  p^{n+1/4} = \sqrt{\frac{2\dt}{\beta}} \, P(q^n) {\mathcal G}^n + \nabla \xi(q^n) G^{-1}(q^n)\pare{   z(t_{n+1})- z(t_{n})}  , \\
  \dps 2 \lambda^{n+1/2} = \lambda^{n+1}_{\rm od} - G^{-1}(q^n) \pare{   z(t_{n+1})- z(t_{n})} + \sqrt{\frac{2\dt}{\beta}} G^{-1}(q^n)\nabla \xi(q^n)^T{\mathcal G}^n,
  \end{cases}
\end{equation*}
with $G = \nabla\xi^T\nabla\xi$, and the scheme~\eqref{eq:Eulerconst_jarz} is rewritten as:
\begin{equation}\label{eq:p_od}
\begin{cases}
 \dps  p^{n+1/2} = p^{n+1/4} - \displaystyle{\frac{\dt}{2} \nabla V (q^{n})} + \nabla \xi(q^n) \lambda^{n+1/2}, \\[6pt]
 \dps  q^{n+1} = q^{n} + 2 p^{n+1/2},  \\[6pt]
 \dps   \xi(q^{n+1})  = z(t_{n+1}), \quad (C_q) \\[6pt]
 \dps  p^{n+3/4} = p^{n+1/2} - \displaystyle{\frac{\dt}{2} \nabla V (q^{n+1})} +\nabla \xi(q^{n+1}) \lambda^{n+3/4},\\[6pt]
  \dps   \nabla \xi (q^{n+1})^{T} p^{n+3/4} = \frac{   z(t_{n+2})- z(t_{n+1})}{2}, \quad (C_p)
\end{cases}
\end{equation}
Then the Jarzynski-Crooks relation~\eqref{eq:Jarz_Crooks_discrete} holds under the assumptions~\eqref{eq:ICf_discrete} and~\eqref{eq:ICb_discrete} on the initial conditions of the schemes~\eqref{eq:Eulerconst_jarz} and~\eqref{eq:Eulerconst_jarz_bck} respectively.
\end{proposition}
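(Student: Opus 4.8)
The plan is to reduce Proposition~\ref{p:langtooverd_jarz} to the already-established discrete Jarzynski--Crooks identity of Theorem~\ref{eq:jarz_disc} by showing that, under the scaling~\eqref{eq:scaling_num_ovd_const_rappel}, the splitting scheme~\eqref{eq:flucdissconstjarz1}-\eqref{eq:Verletconstswitched}-\eqref{eq:flucdissconstjarz2} \emph{is exactly} the Euler scheme~\eqref{eq:Eulerconst_jarz} on the position marginal, and that the phase-space work~\eqref{eq:lang_num_work_multi} coincides with the position-space work~\eqref{eq:lang_num_work_multi_od}. Once these two facts are in place, \eqref{eq:Jarz_Crooks_discrete} for the overdamped scheme follows verbatim from Theorem~\ref{eq:jarz_disc}, since the latter makes no assumption of strict positivity of $\gamma_P$ and is stated for precisely the scheme we are specializing.

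First I would carry out the algebraic identification of the schemes. Plugging $\frac{\dt}{4}\gamma = M = \frac{\dt}{2}\op{Id}$ into~\eqref{eq:flucdissconstjarz1}, exactly as in the proof of Proposition~\ref{p:langtooverd}, kills the $p^n$-dependence and gives $2p^{n+1/4} = \sqrt{2\dt/\beta}\,P(q^n){\mathcal G}^n + \nabla\xi(q^n)\mu$ for a Lagrange multiplier $\mu$ enforcing the (now nonzero) velocity constraint $\nabla\xi(q^n)^T p^{n+1/4} = (z(t_{n+1})-z(t_n))/\dt$; solving for $\mu$ via $G(q^n)=\nabla\xi^T\nabla\xi$ yields the stated first line of the work auxiliary equations. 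Composing with the Verlet substep~\eqref{eq:Verletconstswitched} rewritten as~\eqref{eq:p_od} (using $q^{n+1}=q^n+\dt M^{-1}p^{n+1/2} = q^n + 2p^{n+1/2}$), I eliminate $p^{n+1/2}$ and collect the $\nabla\xi(q^n)$ terms to obtain $q^{n+1} = q^n - \dt\nabla V(q^n) + \sqrt{2\dt/\beta}\,{\mathcal G}^n + \nabla\xi(q^n)\lambda^{n+1}_{\rm od}$ with $\lambda^{n+1}_{\rm od} = \mu + 2\lambda^{n+1/2}$, which is~\eqref{eq:Eulerconst_jarz}; the explicit relation $2\lambda^{n+1/2} = \lambda^{n+1}_{\rm od} - G^{-1}(q^n)(z(t_{n+1})-z(t_n)) + \sqrt{2\dt/\beta}\,G^{-1}(q^n)\nabla\xi(q^n)^T{\mathcal G}^n$ follows from projecting $q^{n+1}-q^n$ onto $\nabla\xi(q^n)$ and using $\xi(q^{n+1})=\xi(q^n)=z(t_{n+1})$ together with a Taylor argument, mirroring~\eqref{eq:lag_od_1}-\eqref{eq:lag_od_2}. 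The fluctuation-dissipation relation $\sigma\sigma^T = \frac{2}{\beta}\gamma = \frac{4}{\beta}\op{Id}$ is used to fix the Gaussian prefactors, and the trailing substep~\eqref{eq:flucdissconstjarz2} is absorbed into the next $p^{n+1/4}$ exactly as in Proposition~\ref{p:langtooverd}. The same computation applied to~\eqref{eq:flucdissconstjarzbck1}-\eqref{eq:Verletconstswitchedbck}-\eqref{eq:flucdissconstjarzbck2} gives~\eqref{eq:Eulerconst_jarz_bck}.

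Next I would verify the work identity. Since $M=\frac{\dt}{2}\op{Id}$, the Hamiltonian is $H(q,p) = \frac{1}{\dt}|p|^2 + V(q)$, so $H(q^{n+1},p^{n+3/4}) - H(q^n,p^{n+1/4}) = V(q^{n+1})-V(q^n) + \frac{1}{\dt}(|p^{n+3/4}|^2 - |p^{n+1/4}|^2)$, which is precisely the summand in~\eqref{eq:lang_num_work_multi_od}; hence the work~\eqref{eq:lang_num_work_multi} of Theorem~\ref{eq:jarz_disc}, when restricted to the overdamped scaling, \emph{is} \eqref{eq:lang_num_work_multi_od} expressed through the reduced variables $p^{n+1/4}, p^{n+3/4}$ (themselves determined by $q^n, {\mathcal G}^n, z$). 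I should also note that the initial-condition prescriptions~\eqref{eq:ICf_discrete}-\eqref{eq:ICb_discrete} are unchanged under the specialization (the measure $\mu_{\manq_{\xi,v_\xi}(\cdot,\cdot)}$ still makes sense for $M=\op{Id}$), and that the marginal of this measure in positions is the constrained canonical measure~\eqref{eq:over_inv}, so the resulting identity indeed computes the relevant free energy difference.

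The main obstacle, I expect, is bookkeeping rather than anything conceptual: keeping track of the several Lagrange multipliers ($\mu=\lambda^{n+1/4}$, $\lambda^{n+1/2}$, $\lambda^{n+3/4}$, $\lambda^{n+1}_{\rm od}$) and checking that the constraint $(C_p)$ at the nonzero switching velocity $(z(t_{n+2})-z(t_{n+1}))/\dt$ is consistently propagated — in particular that~\eqref{eq:const_satisf} still holds in the form $\nabla\xi(q^n)^T p^{n+1/4} = (z(t_{n+1})-z(t_n))/\dt$ after the scaling, so that the ``already satisfies $(C_q)$ to second order'' remark of Section~\ref{sec:numHamiltonconstswitched} applies and the nonlinear projection is well-posed. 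Everything else is an invocation of Theorem~\ref{eq:jarz_disc}, whose three-step proof (phase-space conservation of $\Phi^n$, detailed balance of the Ornstein--Uhlenbeck step, and the telescoping product) goes through unchanged because the RATTLE flow is symplectic and the tangential OU step is unaffected by the overdamped scaling. I would close by remarking that the consistency of~\eqref{eq:lang_num_work_multi_od} with the continuous work follows from~\eqref{eq:workexpr3}, and that~\eqref{eq:Eulerconst_jarz} is a consistent discretization of~\eqref{eq:overdampconst_jarz} by the same argument as for Proposition~\ref{p:langtooverd}.
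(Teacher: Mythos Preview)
Your approach is essentially the paper's own: the paper's proof is the one-line remark that the result is a ``direct consequence of the reformulation of~\eqref{eq:Eulerconst_jarz} into~\eqref{eq:p_od}, and a direct application of Theorem~\ref{eq:jarz_disc} with the parameters~\eqref{eq:scaling_num_ovd_const_rappel},'' and you have correctly unpacked what that means---the algebraic reduction of the splitting scheme to~\eqref{eq:Eulerconst_jarz} mirroring Proposition~\ref{p:langtooverd}, and the identification of~\eqref{eq:lang_num_work_multi} with~\eqref{eq:lang_num_work_multi_od} via $H(q,p)=\tfrac{1}{\dt}|p|^2+V(q)$ when $M=\tfrac{\dt}{2}\op{Id}$. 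One small bookkeeping slip: after the scaling, the velocity constraint reads $\nabla\xi(q^n)^T p^{n+1/4}=\tfrac{z(t_{n+1})-z(t_n)}{2}$ (not $/\dt$), consistently with the $(C_p)$ line of~\eqref{eq:p_od}; and the identity for $2\lambda^{n+1/2}$ follows simply from $\lambda^{n+1}_{\rm od}=\mu+2\lambda^{n+1/2}$ rather than from any relation of the form $\xi(q^{n+1})=\xi(q^n)$, which is false in the switched setting.
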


The proof is a direct consequence of the reformulation of~\eqref{eq:Eulerconst_jarz} into~\eqref{eq:p_od}, and a direct application of Theorem~\ref{eq:jarz_disc} with  the parameters~\eqref{eq:scaling_num_ovd_const_rappel}.

Note that the free energy estimator
\begin{equation}
  \label{eq:exact_ovd_estimator}
  F(z(T)) = -\frac1\beta \ln \E\left(\rme^{-\beta \left[ \W^{N_T} + C^{N_T}(q^{N_T})   \right]  }\right),
\end{equation}
based on the work~\eqref{eq:lang_num_work_multi_od} and the corrector 
\begin{equation}
  \label{eq:corr_od_disc}
   C^n(q) = \frac{1}{2\beta} \ln \Big( \det G(q) \Big) - 
  \frac{\dt}{4} \pare{ \Frac{z(t_{n+1})-z(t_{n} )}{\dt} }^T G^{-1}(q) \pare{ \Frac{z(t_{n+1})-z(t_{n} )}{\dt} }
\end{equation}
is exact (there is no time discretization error).
This free energy estimator can be seen as a variant of the estimator proposed in~\cite{lelievre-rousset-stoltz-07-a}, which was derived directly for the scheme~\eqref{eq:Eulerconst_jarz}, and reads (up to an unimportant additive constant):
\begin{equation}
  \label{eq:previous_estimator_jarz_ovd}
  F(z(T)) \simeq  -\frac1\beta \ln \E\left(\rme^{-\beta \left[ \widetilde{\W}^{N_T} + \widetilde{C}(q^{N_T})   \right]  }\right),
\end{equation}
where the work is defined as 
\begin{equation}
  \label{eq:work_od_jcp}
  \begin{cases}
    \widetilde{\W}^0=0,\\
    \dps \widetilde{\W}^{n+1}-\widetilde{\W}^{n}  = \pare{ \frac{\dps z(t_{n+1})-z(t_{n})}{\dps \dt } }^T  \tilde{\lambda}^{n+1}_{\rm od},
  \end{cases}
\end{equation}
with
\begin{equation*}
  \tilde{\lambda}^{n+1}_{\rm od} = 2\lambda^{n+1/2}= \lambda^{n+1}_{\rm od}  - G^{-1}(q^n)(z(t_{n+1})-z(t_{n})) + \sqrt{\frac{2 \dt}{\beta}}  G^{-1}(q^n)  \nabla \xi (q^n) ^T {\mathcal G}^n,
\end{equation*}
and the modified corrector is defined without the kinetic energy term:
\begin{equation}\label{eq:corr_jcp}
  \widetilde{C}(q) = \frac{1}{2\beta} \ln \Big( \det G(q) \Big).
\end{equation}
There is a bias due to the time discretization error in the estimator~\eqref{eq:previous_estimator_jarz_ovd}, which can be removed upon following the procedure described in Proposition~\ref{p:langtooverd_jarz}.

\subsubsection{Consistency analysis of three free energy estimators}
\label{sec:time_continuous_limit_ovd}

In this section, we would like to discuss the consistency of three free energy estimators introduced above: \eqref{eq:previous_estimator_jarz_ovd}-\eqref{eq:work_od_jcp} (based on the direct discretization of the overdamped dynamics proposed in~\cite{lelievre-rousset-stoltz-07-a}), \eqref{eq:exact_ovd_estimator}-\eqref{eq:lang_num_work_multi_bis} (which uses the Lagrange multipliers to approximate the work) and
\eqref{eq:exact_ovd_estimator}-\eqref{eq:lang_num_work_multi_od} (based on the 
discrete Jarzynski equality).

The limiting continuous-in-time version of the Jarzynski relation is:
\begin{equation}\label{eq:FE_jcp}
F(z(T))=-\frac1\beta \ln \E\left(\mathrm{e}^{-\beta \W^{\rm od}_{0,T}
\pare{ \set{q_t}_{0 \leq t \leq T}}}\right),
\end{equation}
where the work for the overdamped dynamics~\eqref{eq:overdampconst_jarz} reads (see~\cite{lelievre-rousset-stoltz-07-a}):
  \begin{equation}\label{eq:work_jcp}
    \W^{\rm od}_{0,T}\pare{ \set{q_t}_{0 \leq t \leq T}} = \int_0^T z'(t)^T \, d \widetilde{\lambda}^{\rm od}_t,
  \end{equation}
with
$$ d \widetilde{\lambda}^{\rm od}_t = d \lambda_t^{\rm od} - G^{-1}(q_t) z'(t) \, dt + \sqrt{\frac2\beta}  G^{-1}(q_t) \nabla \xi (q_t) ^T \, d W_t.$$
The consistency of~\eqref{eq:previous_estimator_jarz_ovd}-\eqref{eq:work_od_jcp} with~\eqref{eq:FE_jcp}-\eqref{eq:work_jcp} was already proven in~\cite{lelievre-rousset-stoltz-07-a}.

Concerning the consistency of  $C^{n}$ with $\widetilde{C}$ (see~\eqref{eq:exact_ovd_estimator} and~\eqref{eq:previous_estimator_jarz_ovd}), note that in the overdamped scaling ($M=\frac{\Delta t}{2} \op{Id}$), the difference
\[
\widetilde{C}(q)-C^{n}(q) = \frac{\dt}{4} \pare{ \frac{z(t_{n+1})-z(t_{n} )}{\dt} }^T G^{-1}(q) \pare{ \frac{z(t_{n+1})-z(t_{n} )}{\dt} } =  \mathrm{O}( \dt)
\]
vanishes when $\dt \to 0$. This difference can therefore be neglected when analyzing the consistency of the scheme in the continuous-in-time limit. We henceforth concentrate on the consistency of the works~\eqref{eq:lang_num_work_multi_bis} and~\eqref{eq:lang_num_work_multi_od} with~\eqref{eq:work_jcp}.

In the sequel, we denote the anticipating stochastic integration of the integrand $Y_t$ with respect to $d X_t$ by 
\[
Y_t\, \dot{}\, dX_t = 2Y_t \, \circ dX_t - Y_t\, . d X_t, 
\]
where $\, \circ \,$ is the Stratonovitch symmetric integration, and $\, . \,$ the It\^o integration. The symbol $\leadsto$ denotes the formal time continuous limit.

\medskip

\paragraph{Consistency of~\eqref{eq:lang_num_work_multi_bis}.}

Let us justify the consistency of the work expression~\eqref{eq:lang_num_work_multi_bis} with~\eqref{eq:work_jcp}. Remark that the Lagrange multipliers in \eqref{eq:p_od} verify:
\begin{equation}
2\lambda^{n+1/2} = G^{-1}(q^n) \Big[ \nabla \xi (q^n)^T \pare{q^{n+1}-q^{n}} -(z(t_{n+1})-z(t_{n} ))+ \dt \nabla \xi (q^n)^T \nabla V (q^n) \Big] \label{eq:lag_od_jarz_1} 
\end{equation}
and
\begin{equation}
2\lambda^{n+1/2} =  \lambda^{n+1}_{\rm od} - G^{-1}(q^n)(z(t_{n+1})-z(t_{n}))  +  \sqrt{\frac{2\dt}{\beta}}  G^{-1}(q^n)  \nabla \xi (q^n) ^T {\mathcal G}^n,   \label{eq:lag_od_jarz_2} 
\end{equation}
as well as
\begin{align}
2 \lambda^{n+3/4} &= G^{-1}(q^{n+1}) \Big[ \nabla \xi (q^{n+1})^T \pare{q^{n}-q^{n+1}} + (z(t_{n+2})-z(t_{n+1})) + \dt \nabla \xi (q^{n+1})^T \nabla V (q^{n+1})\Big]. \label{eq:lag_od_jarz_3}
 \end{align}
The expressions~\eqref{eq:lag_od_jarz_1} and~\eqref{eq:lag_od_jarz_3} yield
\begin{align}
  \label{eq:lag_od_1_bis}
  2 \lambda^{n+1/2} \leadsto G^{-1}(q_t)\pare{\nabla \xi(q_t)^T \, . dq_t - z'(t) \, dt + \nabla \xi(q_t)^T \nabla V(q_t) \, dt  },
\end{align}
as well as
\[
  2 \lambda^{n+3/4} \leadsto G^{-1}(q_t)\pare{-\nabla \xi(q_t)^T \, \dot{} dq_t + z'(t) \, dt + \nabla \xi(q_t)^T \nabla V(q_t) \, dt  }.
\]
Moreover the constraints imply that
\begin{equation}
  \label{eq:limit_dxi}
d\xi(q_t) = z'(t) \, dt = \nabla \xi^T(q_t) \circ dq_t = \frac12 \left( \nabla \xi^T(q_t) \, .  dq_t  + \nabla \xi^T(q_t) \, \dot{} dq_t \right),
\end{equation}
so that $\lambda^{n+1/2}$ and $\lambda^{n+3/4}$ yield the same time continuous limit, that is to say
\begin{align}
  \label{eq:lag_od_2_bis}
  2 \lambda^{n+3/4} \leadsto G^{-1}(q_t)\pare{\nabla \xi(q_t)^T  \, . dq_t - z'(t) \, dt + \nabla \xi(q_t)^T \nabla V(q_t) \, dt  }.
\end{align}
Eventually, \eqref{eq:lag_od_jarz_2} implies
\begin{equation}
  \label{eq:lag_final}
  2 \lambda^{n+1/2} \leadsto d \widetilde{\lambda}_t^{\rm od},
\end{equation}
the same holding true for $2 \lambda^{n+3/4}$. The work expression~\eqref{eq:lang_num_work_multi_bis} is thus formally consistent with~\eqref{eq:work_jcp}.
This concludes the proof of the consistency of \eqref{eq:exact_ovd_estimator}-\eqref{eq:lang_num_work_multi_bis} with~\eqref{eq:FE_jcp}-\eqref{eq:work_jcp}.

\medskip

\paragraph{Consistency of~\eqref{eq:lang_num_work_multi_od}.}

We now prove the consistency of the work expression~\eqref{eq:lang_num_work_multi_od} with~\eqref{eq:work_jcp}. Define 
\[
f^n = - \displaystyle{\frac{\dt}{2} \nabla V (q^{n})} + \nabla \xi(q^n) \lambda^{n+1/2},
\qquad 
f^{n+1} = - \displaystyle{\frac{\dt}{2} \nabla V (q^{n+1})} +\nabla \xi(q^{n+1}) \lambda^{n+3/4}. 
\]
The expression~\eqref{eq:lang_num_work_multi_od} yields using~\eqref{eq:p_od}:
\begin{align}
\W^{n+1}-\W^{n} &= V(q^{n+1})- V(q^n) +\frac{1}{\dt}\pare{\abs{p^{n+1/2} + f^{n+1}}^2 - \abs{p^{n+1/2}- f^{n}}^2  } \nonumber \\
& = V(q^{n+1})- V(q^n) + \frac{1}{\dt}\pare{f^n + f^{n+1}} \cdot \pare{q^{n+1}- q^n - f^n + f^{n+1}} \nonumber \\
& = V(q^{n+1})- V(q^n) - \frac12 (\nabla V (q^{n})+\nabla V (q^{n+1}))\cdot(q^{n+1}-q^n) \label{eq:deltaW1} \\
& \quad + I^n + \frac{1}{\dt}\pare{f^n + f^{n+1}} \cdot \pare{f^{n+1}-f^n}, \label{eq:deltaW}
\end{align}
where
\[
I^n =\frac{1}{\dt}\pare{ \nabla \xi(q^n) \lambda^{n+1/2} + \nabla \xi(q^{n+1}) \lambda^{n+3/4}} \cdot \pare{q^{n+1}- q^n}.
\]
First, since $V(q^{n+1})- V(q^n) \leadsto \nabla V (q_t) \, \circ dq_t$ and
$ \frac12(\nabla V (q^{n}) + \nabla V (q^{n+1}) ) \cdot (q^{n+1}- q^n) \leadsto - \nabla V (q_t) \, \circ dq_t$, the limit of the terms in~\eqref{eq:deltaW1} is zero.
Second, using the expressions~\eqref{eq:lag_od_jarz_1} and \eqref{eq:lag_od_jarz_3}, 
and similarly to~\eqref{eq:lag_od_1_bis} and~\eqref{eq:lag_od_2_bis}, it holds
\begin{equation}\label{eq:diff_f_consist}
\begin{aligned}
  & f^{n}-f^{n+1} = (\nabla \xi G^{-1})(q^n) \Big[ \nabla \xi (q^n)^T \pare{q^{n+1}-q^{n}} -(z(t_{n+1})-z(t_{n} ))+ \dt \nabla \xi (q^n)^T \nabla V (q^n) \Big]  \\
  & - (\nabla \xi G^{-1})(q^{n+1}) \Big[ \nabla \xi (q^{n+1})^T \pare{q^{n}-q^{n+1}} + (z(t_{n+2})-z(t_{n+1})) + \dt \nabla \xi (q^{n+1})^T \nabla V (q^{n+1})\Big]  \\
&+ \frac\dt2 (\nabla V(q^{n+1}) - \nabla V(q^n)) = \mathrm{o}(\dt) 
\end{aligned}
\end{equation}
since $( \nabla \xi G^{-1} \nabla \xi (q^n) + \nabla \xi G^{-1} \nabla \xi (q^{n+1}))^T
\pare{q^{n+1}-q^{n}} \leadsto 2\nabla \xi G^{-1}(q_t) z'(t)\,dt$ by~\eqref{eq:limit_dxi}.
Expanding in higher order powers of~$\dt$, it can be checked that 
there exists two functions~$a$ and~$b$ such that
\[
\frac{f^{n}-f^{n+1}}{\dt} \leadsto a(t,q_t) \, dt + b(t,q_t) . dq_t.
\]
Therefore, since (in the limit $\dt \to 0$) the martingale part of $f^n + f^{n+1}$ arises only from the term $\nabla \xi G^{-1} \nabla \xi^T (q_t) . dq_t$, one obtains
\begin{align*}
\pare{f^n + f^{n+1}} \cdot \frac{f^{n+1}-f^n}{\dt} &\leadsto d\left\langle \int_0^. \nabla \xi G^{-1} \nabla \xi^T (q_t) . dq_t , \int_0^. b(t,q_t) . dq_t \right\rangle_t \\
&=\frac2\beta d\left\langle \int_0^. \nabla \xi G^{-1} \nabla \xi^T (q_t) P(q_t). dW_t , \int_0^. b(t,q_t) P(q_t). dW_t \right\rangle_t = 0
\end{align*}
since $\nabla \xi^T (q_t) P(q_t) = 0$. In conclusion, the second term in~\eqref{eq:deltaW}
has a zero contribution to the continuous-in-time limit.

As a consequence, the formal time continuous limit of $\W^{n+1}-\W^{n}$ is the same as the one of $I^n$. Computations similar to the one performed above yield
\[
J^n = \frac{1}{2\dt} (q^{n+1}-q^n)^T(\nabla \xi(q^{n+1}) - \nabla \xi(q^{n}) ) ( \lambda^{n+1/2} -\lambda^{n+3/4}) = \mathrm{o}(\dt).
\]
Indeed, $\lambda^{n+1/2} -\lambda^{n+3/4} = \mathrm{o}(\dt)$ as in~\eqref{eq:diff_f_consist},
while $(q^{n+1}-q^n)^T(\nabla \xi(q^{n+1}) - \nabla \xi(q^{n}) ) = \mathrm{O}(\dt)$.
The formal time continuous limit of $I^n$ is therefore the same as the limit of
\begin{align*}
 I^n+J^n = \frac{1}{2\dt} (q^{n+1}-q^n)^T\pare{\nabla \xi(q^n)+\nabla \xi(q^{n+1})}\pare{ \lambda^{n+1/2} +  \lambda^{n+3/4}}.
\end{align*}
Since~\eqref{eq:limit_dxi} implies
\[
\frac{\pare{\nabla \xi(q^n)+\nabla \xi(q^{n+1})}^T}{2} (q^{n+1}-q^n) \leadsto z'(t) \, dt, 
\]
we get in the end that the formal time continuous limit of $I^n$ and $\W^{n+1}-\W^{n}$ is the same as:
 \[
z'(t_n)^T \pare{ \lambda^{n+1/2} +  \lambda^{n+3/4}} \leadsto z'(t)^T d \widetilde{\lambda}_t^{\rm od},
\]
where we have used~\eqref{eq:lag_final}.
This concludes the proof of the consistency of \eqref{eq:exact_ovd_estimator}-\eqref{eq:lang_num_work_multi_od}  with~\eqref{eq:FE_jcp}-\eqref{eq:work_jcp}.


\subsection{Numerical illustration}
\label{sec:num_res_jarz}

We present some free energy profiles obtained with nonequilibrium switching dynamics
for the model system and the  parameters described in Section~\ref{sec:simple_example_WCA}.
The switching schedule reads 
\[
z(t) = z_{\rm min} + (z_{\rm max}-z_{\rm min})\frac{t}{T}
\]
with $z_{\rm min} = -0.1$ and $z_{\rm max} = 1.1$. The time-step is $\dt = 0.01$. 
The initial conditions
are obtained by first subsampling a constrained dynamics with $\xi(q) = z_{\rm min}$ 
and $v_\xi(q,p) = 0$, with a time spacing $T_{\rm sample} = 1$;
and then adding the required component $\nabla \xi(q) G_M^{-1} \dot{z}(0)$ to the momentum
variable (with $\dot{z}(0) = (z_{\rm max}-z_{\rm min})/T$).

In the specific case at hand, the corrector term~\eqref{eq:corr} is constant, and 
free energies differences are equal to differences of rigid free energies.
The dynamics used to integrate the nonequilibrium dynamics is based on a splitting strategy,
analogous to~\eqref{eq:flucdissconstjarz1}-\eqref{eq:Verletconstswitched}-\eqref{eq:flucdissconstjarz2}, except that the midpoint 
integration of the Ornstein-Uhlenbeck part is replaced by an exact integration for
the unconstrained dynamics, followed by a projection. This can be done here
since we choose a friction matrix of the form $\gamma \, \mathrm{Id}$ (recall also that 
$M = \mathrm{Id}$). More precisely, the corresponding scheme is obtained by 
replacing~\eqref{eq:flucdissconstjarz1} (and likewise for~\eqref{eq:flucdissconstjarz2}) with
\[
\widetilde{p}^{n+1/4} = \alpha p^n + \sqrt{\frac{1-\alpha^2}{\beta}} \, {\mathcal G}^n,
\]
where $\alpha= {\rm e}^{- \gamma \dt}$, and setting $p^{n+1/4} = \widetilde{p}^{n+1/4} + \lambda^{n+1/4} \nabla \xi(q^n)$ with
$\lambda^{n+1/4}$ chosen such that 
\[
\nabla \xi(q^n)^T M^{-1} p^{n+1/4} = \frac{z(t_{n+1})-z(t_n)}{\dt}.
\]

\begin{figure}
\centering
\includegraphics[width=7.3cm]{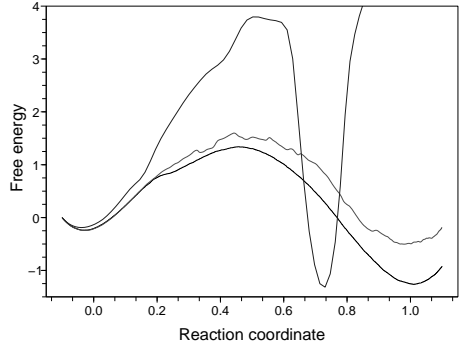}
\caption{\label{fig:noneq_RC} 
  Free energy profiles. The top curve corresponds to
  $T=1$ with $M=10^5$, while the two other curves were obtained for
  $T=10$ with $M=10^4$ and $T=100$ with $M=10^3$ (smoothest curve).
}
\end{figure} 

Figure~\ref{fig:noneq_RC} presents estimates obtained with $M$ 
independent realizations of the switching dynamics
for different switching times $T$, using the estimator presented in 
Section~\ref{sec:work_discretization} with the work 
discretization~\eqref{eq:lang_num_work_multi_bis}.
In all cases, the product $MT$ is kept constant. 
The free energy profile becomes closer to the reference curve as $T$ is increased, and the
profile obtained for $T=100$ is in excellent agreement with the
result obtained with thermodynamic integration.
When the switching time is small, more realizations should be considered to reduce 
the statistical errors and obtain estimates in
better agreement with the reference profile.
The fact that the variance is very large when the switching time~$T$ is too small is a well-known
drawback of estimators based on the Jarzynski-Crooks identity, see the review in 
Sections~4.1.4 and~4.1.5 in~\cite{LelRouStoBook}.
Roughly speaking, the difficulty is related to the fact that
the free energy difference is obtained as an average of 
$\exp(- \beta {\mathcal W})$, which requires a very good
sampling of the small values of the work~${\mathcal W}$. 
As $T$ decreases, the width of the work distribution increases
and the low tail part is more and more difficult to sample.
Improved estimates can be obtained with 
estimators based on combinations of forward and backward trajectories,
see for instance~\cite{MA08} and Section~4.2 in~\cite{LelRouStoBook}.


\section{Appendix: Some technical results}
\label{sec:appendix}

We give in this appendix two technical lemmas, used in the proof of 
Proposition~\ref{p:meanforce}. 
The first lemma can also be used to prove the divergence formula~\eqref{eq:divsympl}.

\begin{lemma}
  \label{l:Hnull} 
  For any $a \in \set{1,\dots,2\nc}$ :
  \begin{equation}
    \label{eq:Hnull}
   \sum_{b=1}^{2\nc} \poisson{ \abs{ {\rm det} \, \Gamma}^{1/2 } (\sgram^{-1})_{a,b},\Xi_b} = 0,
  \end{equation}
where $\Gamma$ is defined in \eqref{eq:sgram}.
\end{lemma}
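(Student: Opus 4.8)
The plan is to prove the identity \eqref{eq:Hnull} by recognizing its left-hand side as the divergence (with respect to the canonical symplectic structure) of the coefficient vector of the constrained Poisson bracket, and to kill it by a coordinate-free argument using Jacobi's identity together with the structure of $\sgram$. First I would write out explicitly what the left-hand side means in terms of the definition \eqref{eq:poisson} of the (unconstrained) Poisson bracket: for any smooth functions $\varphi_1,\varphi_2$ on $\R^{6N}$,
\[
\poisson{\varphi_1,\varphi_2} = (\nabla\varphi_1)^T J \, \nabla \varphi_2,
\]
so that $\sum_{b}\poisson{g_a,\Xi_b}$ with $g_a = \abs{\op{det}\Gamma}^{1/2}(\sgram^{-1})_{a,b}$ (summed over $b$) is a contraction of gradients of $g_a$ against the columns $\nabla\Xi_b$ of $\nabla\Xi$, weighted by $J$. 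Since $\sgram = \nabla\Xi^T J\,\nabla\Xi$, this is exactly the object that appears when one differentiates the identity $\sgram^{-1}\sgram = \op{Id}$ along the flows generated by the $\Xi_b$'s. So the first key step is algebraic bookkeeping: expand $\poisson{\abs{\op{det}\Gamma}^{1/2}(\sgram^{-1})_{a,b},\Xi_b}$ using the Leibniz rule as
\[
\poisson{\abs{\op{det}\Gamma}^{1/2},\Xi_b}(\sgram^{-1})_{a,b} + \abs{\op{det}\Gamma}^{1/2}\poisson{(\sgram^{-1})_{a,b},\Xi_b},
\]
and then use the Jacobi identity for $\poisson{\cdot,\cdot}$ (which holds since $J$ is constant) to rewrite $\poisson{(\sgram^{-1})_{a,b},\Xi_b}$ in terms of $\poisson{(\sgram^{-1})_{a,c},\Gamma_{cd}}$ contracted with more $\sgram^{-1}$ factors.

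The second key step is to handle the two pieces. For the derivative-of-determinant term, I would use Jacobi's formula $\poisson{\abs{\op{det}\Gamma}^{1/2},\Xi_b} = \frac12 \abs{\op{det}\Gamma}^{1/2}\op{tr}(\sgram^{-1}\poisson{\Gamma,\Xi_b})$, where $\poisson{\Gamma,\Xi_b}$ denotes the matrix with entries $\poisson{\Gamma_{cd},\Xi_b}$. For the derivative-of-inverse term, $\poisson{(\sgram^{-1})_{a,c},\Xi_b} = -\sum_{d,e}(\sgram^{-1})_{a,d}\poisson{\Gamma_{de},\Xi_b}(\sgram^{-1})_{e,c}$. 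The whole left-hand side then collapses, after contracting the free index $b$ with $\Xi_b$ one last time, into an expression involving $\poisson{\Gamma_{de},\Xi_b}(\sgram^{-1})_{e,b}$ or equivalently $\poisson{\Gamma_{de},\Xi_e}$-type sums. The crucial cancellation will come from the skew-symmetry of $\sgram$ (established right after \eqref{eq:sgram}) together with the fact that $\poisson{\Gamma_{cd},\Xi_b}$ is, after applying Jacobi once more, symmetric in a way that is incompatible with the antisymmetry forced by $\sgram^{-1}$ — this is the standard "the Poisson structure $J_\Xi$ is divergence-free" / "unimodularity" computation. Concretely, Jacobi's identity gives $\poisson{\Gamma_{cd},\Xi_b} = \poisson{\poisson{\Xi_c,\Xi_d},\Xi_b} = \poisson{\poisson{\Xi_c,\Xi_b},\Xi_d} - \poisson{\poisson{\Xi_d,\Xi_b},\Xi_c}$, i.e. it is a sum of terms each antisymmetric under swapping one of $c,d$ with $b$; contracting against the antisymmetric $(\sgram^{-1})$ and the symmetric-in-their-own-indices structure produces matching terms of opposite sign.

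The step I expect to be the main obstacle is organizing this index gymnastics cleanly enough that the cancellation is visibly forced rather than hoped for: there are several sums over $1,\dots,2\nc$, the matrix $\sgram$ is antisymmetric (not symmetric), and $\abs{\op{det}\Gamma}^{1/2}$ carries its own $\Gamma$-dependence, so a naive expansion produces a dozen terms. A cleaner route, which I would actually pursue, is to note that \eqref{eq:Hnull} is equivalent to the statement that the vector field $X = \sum_{b} g_a \,\partial_{\Xi_b}$-type object has vanishing divergence against the measure $dq\,dp$, and that this is precisely what underlies the divergence formula \eqref{eq:divsympl} — which the excerpt explicitly says can be proved either via Darboux coordinates or via the co-area formula. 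So I would either (i) work in Darboux coordinates where $J_\Xi$ becomes the standard symplectic form on a lower-dimensional space and the claim reduces to the constancy of the Liouville measure under Hamiltonian flows, or (ii) give the direct computation above, being careful that the factor $\abs{\op{det}\Gamma}^{1/2}$ is exactly the density relating $dq\,dp$ to the surface measure $\sigma_{\manq_\Xi(\zeta)}$ (by the co-area formula \eqref{eq:coareasympl}), so that the identity is the coordinate expression of $\op{div}(J_\Xi \nabla \varphi) = 0$ in the ambient coordinates. The Darboux approach makes the cancellation conceptually transparent; the direct approach keeps everything global at the cost of the bookkeeping described above. I would present the direct computation, invoking Jacobi's identity for $\{\cdot,\cdot\}$ and the formula for differentiating $\op{det}$ and matrix inverses, and flagging the skew-symmetry of $\sgram$ as the source of the final cancellation.
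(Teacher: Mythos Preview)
Your proposal is correct and uses essentially the same ingredients as the paper's proof: the Leibniz rule, Jacobi's formula for the derivative of the determinant, the formula for the derivative of a matrix inverse, Jacobi's identity for the Poisson bracket, and the skew-symmetry of $\sgram$.

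The only noteworthy organizational difference is that the paper sidesteps the ``dozen terms'' bookkeeping you flag as the main obstacle: rather than expanding $\poisson{\abs{\det\Gamma}^{1/2}(\sgram^{-1})_{a,b},\Xi_b}$ directly via Leibniz and chasing cancellations, it first multiplies the target expression by $\sgram_{a,b}$ and sums over $b$. This converts the problem into showing that $\sum_{b,c}\sgram_{a,b}\poisson{(\sgram^{-1})_{b,c},\Xi_c}$ equals $-\abs{\det\sgram}^{-1/2}\poisson{\abs{\det\sgram}^{1/2},\Xi_a}$, which is a single chain of equalities (derivative-of-inverse, skew-symmetry to symmetrize, Jacobi identity to cycle the bracket onto $\Xi_a$, then the trace/determinant formula). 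Both sides then recombine via Leibniz into $\sum_{b,c}\sgram_{a,b}\poisson{\abs{\det\sgram}^{1/2}(\sgram^{-1})_{b,c},\Xi_c}=0$, and the invertibility of $\sgram$ strips off the prefactor. This ``multiply by $\sgram$ first, divide it out last'' trick keeps the computation to a handful of lines and avoids the index explosion you anticipated; otherwise your plan and the paper's argument coincide.
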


\begin{proof}
 The proof relies on the following computation rules for any family 
 of invertible square matrices $\theta \mapsto A_{\theta}$:
 \begin{equation}\label{eq:diffdet}
\frac{d}{d \theta} \,  \Big ( \ln \abs{ {\rm det} A_{\theta} }  \Big) = 
         {\rm tr}\left(A_{\theta}^{-1} \frac{d}{d \theta} A_{\theta}\right),
 \end{equation}
 and
 \begin{equation}\label{eq:diffmat}
 A_{\theta} \, \frac{d}{d \theta}\big(A_{\theta}^{-1}\big)
 =-\Big( \frac{d}{d \theta} A_{\theta}\Big) A_{\theta}^{-1}.
 \end{equation}
 Fix $a \in \{ 1,\dots,2m \}$.
 First, using~\eqref{eq:diffmat} with $A_\theta$ replaced by $\Gamma$ and $\frac{d}{d \theta}$
 replaced by $\poisson{\cdot,\Xi_c}$, we obtain
 \[
 \sum_{b,c=1}^{2\nc} \sgram_{a,b}\poisson{(\sgram^{-1})_{b,c},\Xi_c} = 
 - \sum_{b,c=1}^{2\nc} \poisson{\sgram_{a,b},\Xi_c} (\sgram^{-1})_{b,c},
 \]
so that by the skew-symmetry of $\Gamma^{-1}$ and $\Gamma$, 
 \begin{align*}
   \sum_{b,c=1}^{2\nc} \sgram_{a,b}\poisson{(\sgram^{-1})_{b,c},\Xi_c}
   &=  \sum_{b,c=1}^{2\nc} -\frac12 \Big( \poisson{\Gamma_{a,b},\Xi_c} 
   + \poisson{\Gamma_{c,a},\Xi_b} \Big) (\sgram^{-1})_{b,c}.
 \end{align*}
 Jacobi's identity for Poisson brackets and~\eqref{eq:diffdet} then yield
 \begin{align*}
   \sum_{b,c=1}^{2\nc} \sgram_{a,b}\poisson{(\sgram^{-1})_{b,c},\Xi_c} & =
   \frac12 \sum_{b,c=1}^{2\nc} \poisson{\poisson{\Xi_b,\Xi_c},\Xi_a} (\sgram^{-1})_{b,c} \\
   & = -\frac12 \sum_{b,c=1}^{2\nc} \poisson{\Gamma_{c,b},\Xi_a} (\sgram^{-1})_{b,c} \\
   & =   -\frac{1}{2} \poisson{\ln \abs{  \det \sgram},\Xi_a }
   = -\abs{\det \sgram}^{-1/2}\poisson{\abs{ \det \sgram}^{1/2},\Xi_a}  \\
   & =  -\sum_{b,c=1}^{2\nc} \abs{\det \sgram}^{-1/2}\sgram_{a,b}(\sgram^{-1})_{b,c} 
   \poisson{ \abs{\det \sgram}^{1/2},\Xi_c}
 \end{align*}
 since $\sgram_{a,b}(\sgram^{-1})_{b,c} = \delta_{a,c}$ where $\delta_{i,j}$ is the Kronecker symbol.
Finally, the left hand and right hand sides of the last equality can be factorized as
\[
\sum_{b,c=1}^{2\nc} 
\abs{\det \sgram}^{-1/2}\sgram_{a,b}\poisson{ \abs{\det \sgram}^{1/2}(\sgram^{-1})_{b,c},\Xi_c} = 0.
\]
Since $\abs{\det \sgram} > 0$ and $\sgram$ is invertible, it follows
\[
\sum_{c=1}^{2\nc} \poisson{ \abs{\det \sgram}^{1/2}(\sgram^{-1})_{b,c},\Xi_c} = 0
\]
for all $b = 1,\dots,2m$, which is~\eqref{eq:Hnull}.
\end{proof}

\begin{lemma}\label{l:consint}
 For any compactly supported smooth test function $\ph$ on $\R^{6N}$:
 \[
 \nabla_{\zeta}  \pare{ \int_{\manq_{\Xi}(\zeta)} \ph \, d\sigma_{\manq_{\Xi}(\zeta)} } = \int_{\manq_{\Xi}(\zeta)} \sgram^{-1}\poisson{\Xi,\ph} \, d\sigma_{{\manq_{\Xi}(\zeta)}},
 \]
where the phase space $\manq_{\Xi}(\zeta)$ is defined in \eqref{eq:phaseXi}, and the Gram matrix $\Gamma$ in \eqref{eq:sgram}.
 \end{lemma}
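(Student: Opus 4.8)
The plan is to reduce the lemma to the co-area identity~\eqref{eq:coareasympl} together with the purely algebraic statement of Lemma~\ref{l:Hnull}. First I would use the co-area formula to rewrite the surface integral as a conditional integral in $\R^{6N}$: by~\eqref{eq:coareasympl},
\[
\int_{\manq_{\Xi}(\zeta)} \ph \, d\sigma_{\manq_{\Xi}(\zeta)} = \int_{\manq_{\Xi}(\zeta)} \abs{\det \sgram(q,p)}^{1/2} \ph(q,p) \, \delta_{\Xi(q,p)-\zeta}(dq \, dp).
\]
Then, by the defining conditioning relation $dq \, dp = \delta_{\Xi(q,p)-\zeta}(dq\,dp) \, d\zeta$ in~\eqref{eq:conditional_phase_space_measure}, the function
\[
\zeta \mapsto \int_{\manq_{\Xi}(\zeta)} \abs{\det \sgram}^{1/2} \ph \, \delta_{\Xi-\zeta}(dq\,dp)
\]
is (up to composing with a smooth change of the $\zeta$-slicing) a genuine density in $\zeta$, namely the pushforward under $\Xi$ of the smooth compactly supported function $\abs{\det\sgram}^{1/2}\ph$. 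So differentiating in $\zeta_a$ amounts to a ``differentiation under the pushforward'' identity: for a smooth compactly supported $\psi$ on $\R^{6N}$,
\[
\partial_{\zeta_a} \int_{\manq_{\Xi}(\zeta)} \psi \, \delta_{\Xi-\zeta}(dq\,dp) = -\int_{\manq_{\Xi}(\zeta)} \mathrm{something}\,,
\]
which I would make precise by testing against an arbitrary smooth $g(\zeta)$, integrating $\int g(\zeta)\,\partial_{\zeta_a}(\cdots)\,d\zeta$ by parts, and unfolding back through~\eqref{eq:conditional_phase_space_measure}.

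Concretely, the key computational step is: for smooth compactly supported $\Psi$ on $\R^{6N}$,
\[
\int_{\R^{2m}} g(\zeta) \, \partial_{\zeta_a}\!\left( \int_{\manq_{\Xi}(\zeta)} \Psi \, \delta_{\Xi-\zeta}(dq\,dp)\right) d\zeta = -\int_{\R^{6N}} (\partial_{\zeta_a} g)(\Xi(q,p)) \, \Psi(q,p) \, dq\,dp,
\]
and then one wants to recognise $(\partial_{\zeta_a} g)(\Xi)$ as a linear combination of Poisson brackets $\poisson{g\circ\Xi,\Xi_b}$ via the chain rule $\nabla(g\circ\Xi) = \nabla\Xi\,(\nabla g)\circ\Xi$, hence $\poisson{g\circ\Xi,\Xi_b} = (\nabla\Xi^T J\nabla\Xi)^T_{?}\cdots$, i.e. $\poisson{g\circ\Xi,\Xi} = -\sgram^T (\nabla g)\circ\Xi = \sgram\,(\nabla g)\circ\Xi$ (using skew-symmetry of $\sgram$), so $(\nabla g)\circ\Xi = \sgram^{-1}\poisson{g\circ\Xi,\Xi}$. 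Combined with the product rule on the integrand and an integration by parts moving the bracket onto $\Psi$, this produces the divergence-type formula~\eqref{eq:divsympl} (applied with the two slots being $g\circ\Xi$ and $\Psi$) plus a term where the bracket hits $\abs{\det\sgram}^{1/2}(\sgram^{-1})_{a,b}$ — and \emph{that} term is exactly what Lemma~\ref{l:Hnull} kills. After the dust settles one is left with $\int_{\manq_\Xi(\zeta)} (\sgram^{-1}\poisson{\Xi,\ph})_a \, d\sigma_{\manq_\Xi(\zeta)}$, since the $\abs{\det\sgram}^{1/2}$ weight from co-area reconverts the conditional measure back into the surface measure.

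I would organise the steps as follows: (1) apply co-area~\eqref{eq:coareasympl} to pass to $\delta_{\Xi-\zeta}$; (2) justify differentiation under the conditioning by the weak/test-function argument above, yielding an expression involving $(\nabla g)\circ\Xi$; (3) use the chain rule to write $(\nabla g)\circ\Xi = \sgram^{-1}\poisson{g\circ\Xi,\Xi}$; (4) expand $\poisson{\cdot,\Xi_b}$ acting on the product $\abs{\det\sgram}^{1/2}(\sgram^{-1})_{a,b}\,\Psi$ by Leibniz; (5) invoke the divergence formula~\eqref{eq:divsympl} to discard the total-bracket term and Lemma~\ref{l:Hnull} to discard the term where the bracket differentiates the weight; (6) reassemble, undoing co-area once more, to recover the claimed formula with $\ph$ and the surface measure. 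The main obstacle I expect is Step~(2): making rigorous the ``$\partial_{\zeta_a}$ of a conditional integral'' when $\delta_{\Xi-\zeta}$ is only defined weakly. The clean way around it is never to differentiate the singular object directly, but to do everything against a smooth test density $g(\zeta)$, integrate by parts in $\zeta\in\R^{2m}$, unfold to an honest Lebesgue integral over $\R^{6N}$ via~\eqref{eq:conditional_phase_space_measure}, and only at the very end fold back; this keeps all manipulations at the level of smooth functions and classical integration by parts, with the geometry entirely absorbed into~\eqref{eq:divsympl} and Lemma~\ref{l:Hnull}.
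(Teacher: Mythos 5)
Your plan is essentially the paper's proof: test the identity against a smooth compactly supported function of $\zeta$, integrate by parts in $\zeta$, unfold through the co-area/conditioning formula to a Lebesgue integral over $\R^{6N}$, use the chain rule to express $(\nabla_\zeta\phi)\circ\Xi$ through the Poisson bracket and $\sgram$, integrate by parts in $(q,p)$, and let Lemma~\ref{l:Hnull} absorb the term where the derivative hits $\abs{\det\sgram}^{1/2}\sgram^{-1}$. Two details to fix when writing it up: the cancellation of the total-bracket term is the ordinary integration by parts $\int_{\R^{6N}}\poisson{f,\Xi_b}\,dq\,dp=0$ for compactly supported $f$ (valid because $\op{div}(J\nabla\Xi_b)=0$), not the surface divergence formula~\eqref{eq:divsympl}, which concerns the Dirac bracket on $\manq_{\Xi}(\zeta)$; and the chain-rule identity should read $(\nabla_\zeta g)\circ\Xi=\sgram^{-1}\poisson{\Xi,g\circ\Xi}$ (equivalently $\poisson{\Xi,g\circ\Xi}=\sgram\,(\nabla_\zeta g)\circ\Xi$), the version with the arguments in the other order carrying the opposite sign.
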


\begin{proof}
  Consider a test function $\phi:\R^{2\nc} \to \R$. An integration by parts and the co-area formula~\eqref{eq:coareasympl} give:
  \begin{align*}
    I &:=\int_{\R^{2\nc}} \phi(\zeta) \nabla_{\zeta}\pare{\int_{\manq_{\Xi}(\zeta)} \ph(q,p) \, \sigma_{\manq_\Xi(\zeta)}(dq\, dp) }\, d\zeta \\
    &= -\int_{\R^{2\nc}} \nabla_{\zeta}\phi(\zeta) \pare{ \int_{\manq_{\Xi}(\zeta)} \ph(q,p) \, \sigma_{\manq_\Xi(\zeta)}(dq\,dp) } \, d\zeta\\
    &=  -\int_{\R^{6N}}  \sgram^{-1} \poisson{\Xi, \phi \circ \Xi }       \ph \,  \op{det}(\sgram)^{1/2}  \,    dq \, dp,
  \end{align*}
  where in the last line the following chain rule has been used:
  $$
  \poisson{\Xi, \phi \circ \Xi }(q,p) = \poisson{\Xi,\Xi}(q,p) \nabla_{\zeta}\phi(\Xi(q,p))= \Gamma(q,p) \nabla_{\zeta}\phi(\Xi(q,p)).
  $$
  Now an integration by parts with respect to $d q\, dp$, together with~\eqref{eq:Hnull}, leads to
  \begin{align*}
    I&= \sum_{b=1}^{2m}\int_{\R^{6N}}  \phi \circ \Xi \poisson{\Xi_{b}, \op{det}(\sgram)^{1/2}\sgram^{-1}_{.,b} \ph}  \,  dq\, dp \\
    &= \int_{\R^{6N}} \phi \circ \Xi \, \sgram^{-1} \poisson{\Xi,\ph} \op{det}(\sgram)^{1/2}\, dq\, dp \\
    &=\int_{\R^{2\nc}} \phi(\zeta) \pare{ \int_{\manq_{\Xi}(\zeta)}     \sgram^{-1} \poisson{\Xi,\ph}  \, d\sigma_{\manq_\Xi(\zeta)} } d \zeta,
  \end{align*}
  which gives the result.
\end{proof}
 

\end{document}